\newcommand{\ignore}[1]{}
\newcounter{mynotes}
\def\UrlSpecials{\do\~{\kern -.15em\lower .7ex\hbox{~}\kern .04em}} \catcode`~=13 
\newcommand{\bbR}{\mathbb{R}}
\DeclareMathAlphabet{\mathbsf}{OT1}{cmss}{bx}{n}
\DeclareMathAlphabet{\mathssf}{OT1}{cmss}{m}{sl}
\DeclareSymbolFont{bsfletters}{OT1}{cmss}{bx}{n}  
\DeclareSymbolFont{ssfletters}{OT1}{cmss}{m}{n}
\DeclareMathSymbol{\bsfGamma}{0}{bsfletters}{'000}
\DeclareMathSymbol{\ssfGamma}{0}{ssfletters}{'000}
\DeclareMathSymbol{\bsfDelta}{0}{bsfletters}{'001}
\DeclareMathSymbol{\ssfDelta}{0}{ssfletters}{'001}
\DeclareMathSymbol{\bsfTheta}{0}{bsfletters}{'002}
\DeclareMathSymbol{\ssfTheta}{0}{ssfletters}{'002}
\DeclareMathSymbol{\bsfLambda}{0}{bsfletters}{'003}
\DeclareMathSymbol{\ssfLambda}{0}{ssfletters}{'003}
\DeclareMathSymbol{\bsfXi}{0}{bsfletters}{'004}
\DeclareMathSymbol{\ssfXi}{0}{ssfletters}{'004}
\DeclareMathSymbol{\bsfPi}{0}{bsfletters}{'005}
\DeclareMathSymbol{\ssfPi}{0}{ssfletters}{'005}
\DeclareMathSymbol{\bsfSigma}{0}{bsfletters}{'006}
\DeclareMathSymbol{\ssfSigma}{0}{ssfletters}{'006}
\DeclareMathSymbol{\bsfUpsilon}{0}{bsfletters}{'007}
\DeclareMathSymbol{\ssfUpsilon}{0}{ssfletters}{'007}
\DeclareMathSymbol{\bsfPhi}{0}{bsfletters}{'010}
\DeclareMathSymbol{\ssfPhi}{0}{ssfletters}{'010}
\DeclareMathSymbol{\bsfPsi}{0}{bsfletters}{'011}
\DeclareMathSymbol{\ssfPsi}{0}{ssfletters}{'011}
\DeclareMathSymbol{\bsfOmega}{0}{bsfletters}{'012}
\DeclareMathSymbol{\ssfOmega}{0}{ssfletters}{'012}
\newcommand{\ceil}[1]{\lceil{#1}\rceil}
\newcommand{\floor}[1]{\lfloor{#1}\rfloor}
\DeclareMathOperator{\supp}{supp}
\newcommand{\qednew}{\nobreak \ifvmode \relax \else
      \ifdim\lastskip<1.5em \hskip-\lastskip
      \hskip1.5em plus0em minus0.5em \fi \nobreak
      \vrule height0.75em width0.5em depth0.25em\fi}
\newcommand{\xv}{x}
\newcommand{\yv}{y}
\newcommand{\xvhat}{\hat{x}}
\newcommand{\Av}{A}
\newcommand{\Xc}{\mathcal{X}}
\newcommand{\RR}{\mathbb{R}}
\newcommand{\Shat}{\hat{S}} 
\newtheorem{theorem}{Theorem}
\newtheorem{lemma}{Lemma}
\newtheorem{corollary}{Corollary}
\newtheorem{definition}{Definition}
\theoremstyle{definition}
\newtheorem{example}{Example}
\theoremstyle{remark}
\newtheorem*{note*}{Note}
\newtheorem*{remark*}{Remark}
\begin{document}

\title{Universal 1-Bit Compressive Sensing \\ for Bounded Dynamic Range Signals}  
 
\author{Sidhant Bansal$^1$, Arnab Bhattacharyya$^1$, Anamay Chaturvedi$^2$, and Jonathan Scarlett$^1$, \\
    $^1$National University of Singapore, $^2$Northeastern University}

\maketitle 

\begin{abstract}
    A {\em universal 1-bit compressive sensing (CS)} scheme consists of a measurement matrix $A$ such that all signals $x$ belonging to a particular class can be approximately recovered from $\textrm{sign}(Ax)$. 1-bit CS models extreme quantization effects where only one bit of information is revealed per measurement. 
     
    We focus on universal support recovery for 1-bit CS in the case of {\em sparse} signals with bounded {\em dynamic range}. Specifically, a vector $x \in \bbR^n$ is said to have sparsity $k$ if it has at most $k$ nonzero entries, and dynamic range $R$ if the ratio between its largest and smallest nonzero entries is at most $R$ in magnitude. Our main result shows that if the entries of the measurement matrix $A$ are i.i.d.~Gaussians, then under mild assumptions on the scaling of $k$ and $R$, the number of measurements needs to be $\tilde{\Omega}(Rk^{3/2})$ to recover the support of $k$-sparse signals with dynamic range $R$ using $1$-bit CS.  In addition, we show that a near-matching $O(R k^{3/2} \log n)$ upper bound follows as a simple corollary of known results.  The $k^{3/2}$ scaling contrasts with the known lower bound of $\tilde{\Omega}(k^2 \log n)$ for the number of measurements to recover the support of arbitrary $k$-sparse signals.  
\end{abstract}


\section{Introduction}

1-bit compressive sensing (CS) is a fundamental high-dimensional statistical problem and is of interest both from a theoretical point of view and in practical applications, where sensing is performed subject to quantization.  Since its introduction in \cite{Bou08}, the theory of 1-bit CS has seen several advances, with the required number of measurements varying depending on the recovery guarantee, assumptions on the underlying signal being estimated, and the possible presence of noise.

Under the noiseless model, the measurements $\yv \in \{-1,1\}^m$ are generated according to
\begin{equation}
    b = {\rm sign}(\Av\xv^*),
\end{equation}
where $\xv^* \in \RR^n$ is the unknown underlying signal, $\Av \in \RR^{m \times n}$ is the measurement matrix, and the sign function ${\rm sign}(\cdot)$ is applied element-wise.  The value of ${\rm sign}(0)$ is not important for the results of this paper, so we adopt the convention ${\rm sign}(0) = 1$.

We assume that the signal $\xv^*$ lies in some class $\Xc_k$,. The most common choice for $\Xc_k$ is the set of all $k$-sparse signals for some $k \ll n$. Sparsity is satisfied by many natural families of signals (e.g., image and audio in appropriate representations), and assuming it often leads to statistical and computational benefits (see \cite{Has15}).  As we discuss further below, we will assume that $\Xc_k$ is a {\em subset} of the $k$-sparse signals, but possibly a strict subset.  Given such a class, the problem of 1-bit CS for signals in $\Xc_k$ comes with several important distinctions:
\begin{itemize}
    \item For {\em universal 1-bit CS} (also known as the {\em for-all guarantee}), it is required that the measurement matrix $\Av$ achieves some success criterion uniformly for all $\xv^* \in \Xc_k$.
    \item For {\em non-universal 1-bit CS} (also known as the {\em for-each guarantee}), the measurement matrix $\Av$ may be randomized, and the success criterion is only required to hold with high probability over this randomness for each fixed $\xv^* \in \Xc_k$.
    \item Under the {\em support recovery} success criterion, the goal is to produce an estimate $\Shat \subseteq [n]$ such that $\Shat = {\rm supp}(\xv^*)$, where ${\rm supp}(\cdot)$ denotes the support.
    \item Under the {\em approximate recovery} success criterion, the goal is to produce an estimate $\hat{\xv}$ such that\footnote{Note that the output of 1-bit CS measurements is invariant to scaling, so the magnitude cannot be recovered.} $\big\| \frac{\xv^*}{\|\xv^*\|} - \frac{\xvhat}{\|\xvhat\|} \big\|$ is no larger than some pre-specified threshold $\epsilon$.
\end{itemize}
The distinction between the universal and non-universal criteria has a drastic impact on the required number of measurements.  For instance, when $\Xc_k$ is the set of all $k$-sparse vectors, support recovery is possible under the for-each guarantee using $O(k \log n)$ measurements \cite{Ati12,Ber17}, whereas attaining the for-all guarantee requires $\Omega\big( k^2 \frac{\log n}{\log k} \big)$ measurements \cite{Ach17}.

On the other hand, there is recent evidence that for more {\em restricted} classes $\Xc_k$, this gap can be narrowed.  In particular, it was shown in \cite{Flo19} that for the set of {\em binary} $k$-sparse vectors (i.e., the non-zero entries are all equal to one), then universal exact recovery is possible using $O(k^{3/2} \log n)$ measurements.  This can also be deduced from an approximate recovery result of \cite{Jac13}, as we will see in Section \ref{subsec:upperbound}.  A simple counting argument shows that $\Omega\big(k \log \frac{n}{k}\big)$ measurements are needed  even for binary input vectors, and this is the best lower bound to date.

The preceding observations raise the following two important questions:
\begin{enumerate}
    \item Is the $k^{3/2}$ dependence unavoidable for universal recovery in the binary setting?
    \item Under what broader classes $\Xc_k$, can we similarly avoid the $k^2$ dependence?
\end{enumerate}
In this paper, we partially address the first question by showing that the $k^{3/2}$ dependence is unavoidable, at least when considering i.i.d.~Gaussian measurements, as was considered in \cite{Jac13}.\footnote{In contrast, \cite{Flo19} used a ``two-step'' design consisting of a list-disjunct part followed by a Gaussian part.  We do not prove $k^{3/2}$ to be unavoidable for such designs, though we expect this to be the case.} In more detail, we show that if significantly fewer measurements are used, then with constant probability, the matrix will fail to achieve the universal support recovery guarantee.  Analogous hardness results {\em with respect to a given random design} have frequently appeared in other compressive sensing works, e.g., \cite{Wai09,Ati12,Sca15}. 

To address the second question above, we consider the class of $k$-sparse vectors with {\em bounded dynamic range}:
\begin{equation}
    \Xc_k(R) = \bigg\{ \xv \in \RR^n \,:\, \| \xv \|_0 \le k \text{ and } \frac{\max_{i \in \supp(\xv)} |x_i| }{ \min_{i \in \supp(\xv)} |x_i|  } \le R \bigg\},  \label{eq:set_Xk}
\end{equation}
where $R \ge 1$, and $R$ may grow as a function of $n$.  The assumption of bounded dynamic range is motivated by the fact that the hardness result (showing $k^2$ dependence) for universal recovery in \cite{Ach17} implicitly considers signals with unbounded dynamic range.  On the other hand, in several practical applications, a bounded dynamic range assumption is reasonable, e.g., if the non-zero entries of $\xv$ correspond to transmitting users in a communication setup, then a bounded dynamic range may correspond to each of them having comparable transmit power.  Additionally, this notion has been studied previously in the context of for-each 1-bit CS in \cite{GNR10}, and recently, universal 1-bit CS with approximate support recovery \cite{Maz21}. 

Observe that setting $R = 1$ in \eqref{eq:set_Xk} recovers the binary case $\xv \in \{0,1\}^n$ (the constant of $1$ is without loss of generality, since the 1-bit model is invariant to scaling).  For signals in $\Xc_k(R)$, we provide a simple generalization of the above-mentioned $O\big( k^{3/2}\log n \big)$ achievability result as a corollary of \cite{Jac13}, and we provide an impossibility result for i.i.d.~Gaussian designs.

\subsection{Technical Overview}\label{sec:overview}

In order to recover the support of $x \in \Xc_k(R)$ via 1-bit CS using measurement matrix $A$, it is clearly necessary that for any $y \in \Xc_k(R)$ such that $\supp(x)\neq \supp(y)$, we also have $\textrm{sign}(Ax) \neq \textrm{sign}(Ay)$.  Accordingly, we provide the following definition.
\begin{definition}
The matrix $A \in \mathbb{R}^{m \times n}$ is a {\em valid universal $1$-bit measurement matrix} for support recovery on $\Xc_k(R)$ if and only if 
\begin{align*}
&\forall u \in \Xc_k(R),\forall v \in \Xc_k(R), \\
&\supp(u) \neq \supp(v) \implies {\rm sign}(Au) \neq {\rm sign}(Av)
\end{align*}
\end{definition}
Our main result is the following.\footnote{Asymptotic notation such as $\tilde{O}(\cdot)$ and $\tilde{\Omega}(\cdot)$ hides logarithmic factors of the argument.}
\begin{theorem}[Informal; see \cref{maintheorem}]\label{mainthminf}
Suppose $k$ and $R$ are sufficiently small in terms of $n$. Let $A \in \bbR^{m \times n}$ be a random matrix with i.i.d.~$N(0,1)$ entries. If it holds with constant probability that $A$ is a valid universal $1$-bit measurement matrix for support recovery on $\Xc_k(R)$, then $m = \tilde{\Omega}\big(Rk^{3/2} + k \log\frac{n}{k}\big)$.
\end{theorem}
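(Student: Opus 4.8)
The plan is to prove the lower bound by exhibiting, for the given random Gaussian matrix $A$, a pair of signals $u, v \in \Xc_k(R)$ with $\supp(u) \neq \supp(v)$ but ${\rm sign}(Au) = {\rm sign}(Av)$, whenever $m$ is significantly smaller than $Rk^{3/2}$. The additive $k\log\frac{n}{k}$ term is the classical counting bound (there are $\binom{n}{k}$ possible supports, and each measurement yields one bit per signal but collisions must be ruled out over exponentially many supports), so I would dispose of it separately by a union/counting argument and focus the main effort on the $Rk^{3/2}$ term. For that term, the natural construction is to fix a ground support $S$ of size $k$ and consider swapping one coordinate: let $v$ agree with some fixed $u$ on $k-1$ coordinates of $S$, and on the last coordinate put weight on index $i$ (for $u$) versus index $j$ (for $v$), where the shared $k-1$ coordinates carry small magnitude (say $1$) and the ``active'' coordinate carries magnitude $\Theta(R)$ — this is exactly what the dynamic-range constraint $R$ buys us. Then $Au - Av$ is a vector whose entries are, row-by-row, $\Theta(R)\cdot(A_{\ell i} - A_{\ell j})$, i.e.\ Gaussians of standard deviation $\Theta(R)$, while $Au$ itself has entries that are sums of $k-1$ unit-weight Gaussians plus one $\Theta(R)$-weight Gaussian, so $Au$ has a row-wise ``spread'' of order $\sqrt{k} + R$. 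The sign patterns ${\rm sign}(Au)$ and ${\rm sign}(Av)$ agree on row $\ell$ unless $Au$ and $Av$ straddle zero in that row, and since $Av = Au - \Theta(R)(A_{\cdot i} - A_{\cdot j})$, the probability of disagreement on a fixed row is roughly the probability that $(Au)_\ell$ lands within $\Theta(R)$ of zero, which is $\Theta\big(R / \sqrt{k+R^2}\big)$; in the regime $R \lesssim \sqrt{k}$ this is $\Theta(R/\sqrt{k})$. Over $m$ rows, the expected number of disagreeing rows is $\Theta(mR/\sqrt{k})$, so if $m \ll \sqrt{k}/R$ we would already find a single colliding pair for one fixed $(i,j)$ — but to get the full $k^{3/2}$ we must union over roughly $k$ choices of the swap coordinate (and argue near-independence across these choices, or rather use that the disagreement events are negatively correlated enough), pushing the threshold to $m \ll k^{3/2}/R$.

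The key steps, in order: (1) Set up the construction precisely — fix $S$, define the family of ``twin'' pairs $\{(u^{(i)}, u^{(j)})\}$ parametrized by the swapped coordinate, with the scalings above, verifying each lies in $\Xc_k(R)$ and that distinct pairs have distinct supports. (2) For a fixed pair, compute (up to constants) the probability that ${\rm sign}(Au^{(i)}) = {\rm sign}(Au^{(j)})$: condition on the Gaussian values in the shared coordinates and on $A_{\cdot i}, A_{\cdot j}$; the event that the signs agree on every row is a product over rows of explicit Gaussian anti-concentration / small-ball probabilities, yielding something like $\big(1 - c R/\sqrt{k}\big)^m \geq \exp(-c' mR/\sqrt{k})$ for the conditional probability, hence a non-negligible unconditional probability when $m \lesssim \sqrt{k}/R$. (3) Amplify across the $\sim k$ disjoint-ish swap choices: show that the events $E_t = \{{\rm sign}(Au^{(i_t)}) = {\rm sign}(Au^{(j_t)})\}$ for a collection of $\Theta(k)$ pairs sharing most of their structure are sufficiently independent (or apply a second-moment / Paley–Zygmund argument, or the Chen–Stein method) so that $\Pr[\bigcup_t E_t] = \Omega(1)$ already when $m = o(k^{3/2}/R)$ (modulo $\log$ factors hidden in $\tilde\Omega$). (4) Combine with the counting bound for the $k\log\frac nk$ term via a simple union-of-two-regimes argument.

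The main obstacle is step (3): the $\Theta(k)$ collision events are built on the same matrix $A$ and share the $k-1$ common coordinates and even share the ``active'' Gaussian column in half the pairs, so they are correlated, and a naive union bound goes the wrong direction (we want a lower bound on a union of ``good'' events, i.e.\ we want to show that at least one collision occurs). The clean way is a second-moment method: show $\E[\sum_t \mathbbm{1}[E_t]] = \Omega(1)$ is not enough — we need $\E[(\sum_t \mathbbm{1}[E_t])^2] = O\big(\E[\sum_t \mathbbm{1}[E_t]]^2\big)$, i.e.\ pairwise correlations $\Pr[E_t \cap E_{t'}]$ are not much larger than $\Pr[E_t]\Pr[E_{t'}]$. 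This requires carefully controlling, for two swaps, how much the joint ``both sign patterns collide'' probability inflates over the product; I expect this to reduce to bounding correlations between row-wise small-ball events under correlated Gaussians, which is delicate but tractable because the active columns are either identical or independent, limiting the number of ``types'' of pairs $(E_t, E_{t'})$ to a constant. A secondary technical point is making precise the ``$k$ and $R$ sufficiently small in terms of $n$'' hypotheses and the regime $R \lesssim \sqrt k$ (outside which the bound statement presumably degrades or the construction changes) — these are the ``mild assumptions'' flagged in the theorem and I would pin them down exactly where the Gaussian anti-concentration estimate in step (2) needs them.
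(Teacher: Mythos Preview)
Your plan has two problems, and the second is fatal to the claimed bound.

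First, the weights in your twin construction are reversed. You place weight $\Theta(R)$ on the single swapped coordinate and weight $1$ on the $k-1$ shared coordinates, giving a per-row sign-disagreement probability of order $R/\sqrt{k}$ and hence a single-pair collision probability $\exp(-\Theta(mR/\sqrt{k}))$. Any lower bound extracted from this \emph{decreases} in $R$, whereas the target $\tilde\Omega(Rk^{3/2})$ increases in $R$ (a larger dynamic range means a larger signal class, hence a harder universal guarantee). The paper's construction does the opposite: the $k-1$ shared coordinates carry weight $R$ and the swapped coordinate carries weight $1$. Then the sign on row $\ell$ can differ between the two twins only if $\bigl|R\sum_{j\in S}a_{\ell j}\bigr|$ is dominated by a single entry of the swapped column (at most $\sqrt{4\log m}$ with high probability), so the per-row disagreement probability is of order $\sqrt{\log m}/(R\sqrt{k})$, \emph{decreasing} in $R$. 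That is what makes collisions more likely as $R$ grows and puts the factor $R$ on the correct side.

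Second, and more seriously, your step (3) cannot manufacture the missing factor of $k$. You fix the shared set and take a union over $\Theta(k)$ swap choices, but a second-moment or Paley--Zygmund argument over $N$ events of probability $p$ forces $\Pr\bigl[\bigcup_t E_t\bigr]=\Omega(1)$ only once $Np=\Omega(1)$; with $p=\exp(-\Theta(mR/\sqrt{k}))$ and $N=\Theta(k)$ this yields $m=O(\sqrt{k}\,\log k/R)$, not $m=O(k^{3/2}/R)$. A union over polynomially many events boosts the single-event threshold only logarithmically, never multiplicatively by $k$. The paper instead \emph{fixes} the two swapped columns and takes the union over all $\binom{n-2}{k-1}$ choices of the shared set $S$ --- exponentially many events --- via de Caen's inequality. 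Even then the naive calculation fails (as the paper's overview explicitly warns); the extra factor of $k$ emerges only after showing that for pairs $(S,T)$ with overlap $\beta k$ one has $\Pr[F_S\cap F_T]/\Pr[F_S]^2 \le (1+O(\beta r^2)+O(r^3))^m$ with $r\asymp d/\sqrt{k}$, and then splitting $\sum_T \Pr[F_S\cap F_T]$ into three overlap regimes to control the de Caen denominator. This correlation analysis over exponentially many overlapping subsets is the technical core of the argument, and your proposal does not reach it.
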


The scaling regime of primary interest here is $k\ll n^{2/3}$, since if $k \gg n^{2/3}$ then one would improve on the $k^{3/2}$ scaling by simply letting $A$ be the $n \times n$ identity matrix (i.e., $m = n$).  Similarly, the interesting range of $R$ is $R \ll k^{0.5}$, since an upper bound of $O(k^2 \log(n))$ was shown in \cite{Ach17} for the universal support recovery of arbitrary $k$-sparse signals.  The assumed scaling of $k$ and $R$ are restricted accordingly in the formal statement, \cref{maintheorem}.

We now turn to an overview of the proof of \cref{mainthminf}. The lower bound of $\log {n \choose k} = \Omega(k \log(n/k))$ is simply due to the fact that the number of distinct measurement outcomes must be at least ${n \choose k}$.  Hence, the main goal is to show that with $m \le \tilde{O}(Rk^{3/2})$, $A$ is not a valid universal $1$-bit measurement matrix, with constant probability. Let $v_1, \dots, v_m \in \bbR^{n-2}$ denote the first $n-2$ coordinates of each of the $m$ rows of $A$.

We will first show that for $A$ to be a valid measurement matrix with constant probability, it must be the case that $\{v_1, \dots, v_m\}$ is $(n-2, k-1, \sqrt{4\log m}/R)$-{\em balanced}, where we say that a collection $V \subseteq \bbR^{n}$ is $(n, k, d)$-balanced if for every set $S\subseteq [n]$ of size $k$, there exists $v \in V$ such that $\left| \sum_{i \in S} v_i\right| \leq d$. The idea is as follows: Suppose that there exists a set $S \subseteq [n-2]$ such that for every $v \in \{v_1, \dots, v_m\}$, $\left|\sum_{i \in S}v_i\right| > \frac{\sqrt{4\log m}}{R}$. In this case, the vectors $x$ and $y$ with
 \[
    x_{i} = \begin{cases}
        R & \text{if $i \in S$}\\
        1 & \text{if $i = n - 1$}\\
        0 & \text{otherwise}\\
    \end{cases} \text{ and }
    y_{i} = \begin{cases}
        R & \text{if $i \in S$}\\
        1 & \text{if $i = n$}\\
        0 & \text{otherwise}\\
    \end{cases}
\]
cannot be distinguished by the measurement matrix $A$, since (as we will show) the entries in the last two columns of $A$ are all at most $\sqrt{4 \log m}$ with high probability. Thus, it suffices to prove a lower bound on the number of Gaussian vectors needed to form an $(n-2, k-1, \sqrt{4\log m}/R)$-{balanced} family.
 
 The study of bounds on the size of $(n, k, d)$-balanced families dates back to the early work of \cite{ABCO88}, which examined the setting $V \subseteq \{\pm 1\}^n$ and $k =n$. Indeed, one can view an $(n, k, d)$-balanced family as a collection of hyperplanes passing through the origin such that every $k$-sparse vector is within a margin $d$ of one of the hyperplanes. There is a long line of work on such problems and variants thereof, but {\em without} the sparsity constraint.
 For instance, \cite[Theorem 14]{calkin2008cuts} shows that if the number of hyperplanes is $o(n^{3/2})$, then the expected number of edges that are not cut is unbounded. See also \cite{Cos09,YY21} and the references therein. 
 
Returning to our setting, for a subset $S$ of size $k$, let $F_S$ denote the event that for every $v \in V$, $|\sum_{i \in S} v_i| > d$. We want to argue that if $m = |V|$ is too small, then $\Pr[\cup_S F_S]$ is lower bounded by a constant. To this end, we invoke {\em de Caen's inequality} that lower bounds the probability of a union.  

\begin{lemma}[de Caen's inequality, \cite{Dec97}]\label{decaen} 
Let $\{A_i\}_{i \in I}$ be a finite family of events in a probability space. Then:
\[
\Pr\left[\cup_{i \in I} A_i \right] \geq \sum_{i \in I} \frac{\Pr[A_i]^2}{\sum_{j \in I} \Pr[A_i \cap A_j]}.
\]
\end{lemma}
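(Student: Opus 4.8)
The plan is to prove de Caen's inequality by the Cauchy--Schwarz (second moment) method applied to the counting random variable $N := \sum_{i \in I} \mathbbm{1}_{A_i}$, which records how many of the events occur. Since $\bigcup_{i\in I} A_i = \{N \ge 1\}$, the goal is to lower bound $\Pr[N \ge 1] = \mathbb{E}[\mathbbm{1}_{N \ge 1}]$. The one structural fact I would use is that on the event $A_i$ we have $N \ge 1$, so the ratio $\mathbbm{1}_{A_i}/N$ is well-defined (adopting the convention that it equals $0$ when $N = 0$, which is harmless since then $\mathbbm{1}_{A_i} = 0$ as well), and moreover $\sum_{i\in I}\mathbbm{1}_{A_i}/N = \mathbbm{1}_{N\ge1}$ pointwise.

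First I would fix $i \in I$ with $\Pr[A_i] > 0$ and apply Cauchy--Schwarz after splitting $\mathbbm{1}_{A_i} = \sqrt{\mathbbm{1}_{A_i}/N}\cdot\sqrt{\mathbbm{1}_{A_i}N}$:
\[
\Pr[A_i]^2 \;=\; \Big(\mathbb{E}\big[\sqrt{\mathbbm{1}_{A_i}/N}\cdot\sqrt{\mathbbm{1}_{A_i}N}\,\big]\Big)^2 \;\le\; \mathbb{E}\!\left[\frac{\mathbbm{1}_{A_i}}{N}\right]\cdot\mathbb{E}\big[\mathbbm{1}_{A_i}N\big].
\]
Next I would identify the second factor: expanding $N = \sum_{j\in I}\mathbbm{1}_{A_j}$ gives $\mathbb{E}[\mathbbm{1}_{A_i}N] = \sum_{j\in I}\Pr[A_i\cap A_j]$, which is exactly the denominator in the statement. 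Dividing through yields, for each such $i$,
\[
\frac{\Pr[A_i]^2}{\sum_{j\in I}\Pr[A_i\cap A_j]} \;\le\; \mathbb{E}\!\left[\frac{\mathbbm{1}_{A_i}}{N}\right],
\]
and indices $i$ with $\Pr[A_i]=0$ contribute $0$ to the left-hand side, so they may be discarded (if every $\Pr[A_i]=0$, both sides of the claimed inequality are $0$ and there is nothing to prove).

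Finally I would sum over $i\in I$ and use linearity of expectation together with the pointwise identity $\sum_{i}\mathbbm{1}_{A_i}/N = \mathbbm{1}_{N\ge1}$:
\[
\sum_{i\in I}\frac{\Pr[A_i]^2}{\sum_{j\in I}\Pr[A_i\cap A_j]} \;\le\; \sum_{i\in I}\mathbb{E}\!\left[\frac{\mathbbm{1}_{A_i}}{N}\right] \;=\; \mathbb{E}\!\left[\frac{\sum_{i\in I}\mathbbm{1}_{A_i}}{N}\right] \;=\; \mathbb{E}\big[\mathbbm{1}_{N\ge1}\big] \;=\; \Pr\!\left[\textstyle\bigcup_{i\in I}A_i\right],
\]
which is the desired bound. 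The only genuinely non-routine step is guessing the correct weighting in the Cauchy--Schwarz application: one needs the factor $N^{-1/2}$ on one side precisely so that, after summing in $i$, the surviving $\sum_i \mathbbm{1}_{A_i} = N$ cancels the $N^{-1}$ and telescopes to $\mathbbm{1}_{N\ge1}$. The remaining details (well-definedness of $1/N$ on $A_i$, the zero-probability edge cases, and interchanging the finite sum with the expectation) are routine bookkeeping.
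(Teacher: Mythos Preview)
Your proof is correct and is in fact the standard Cauchy--Schwarz argument for de Caen's inequality. The paper itself does not supply a proof of this lemma; it merely cites the result from \cite{Dec97}, so there is nothing to compare against.
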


In our application, there are ${n \choose k}$ events of interest. For any $S$, $\Pr[F_S]$ is easy to lower bound using standard anti-concentration results for Gaussians. What occupies the bulk of our analysis is an upper bound for the denominator, $\sum_{T} \Pr[F_S \cap F_T]$. Somewhat surprisingly, the calculations here turn out to be quite delicate. To see why, consider the following naive approach: Let $E_S$ denote the event that for a single random Gaussian vector $v \in \bbR^n$, $|\sum_{i \in S} v_i| \leq r$, and let $\gamma = \Pr[E_S]$. Hence, the numerator is $(1-\gamma)^{2m}$.  For the denominator, the inclusion-exclusion formula gives:
\[
\Pr[F_S \cap F_T] = (1 - \Pr[F_S] - \Pr[F_T] + \Pr[F_S \cap F_T])^m
\]
Now, for low values of $k$, a typical $T$ has small intersection with $S$, and hence, $F_S$ and $F_T$ are approximately independent, i.e., $\Pr[F_S \cap F_T] \approx \gamma^2 \ll \gamma$. Thus, for most $T$, $\Pr[F_S \cap F_T] \approx (1-\Omega(\gamma))^m$, and thus, one might expect by invoking \cref{decaen} that $\Pr[\cup_S F_S] \geq (1-O(\gamma))^m$. Unfortunately, this is not enough, as $\gamma \approx d/\sqrt{k}$, and hence, we do not get constant failure probability for $m =\Theta(k^{3/2}/d)$ as we desired.

We work more carefully to get a stronger bound. First of all, we do a more refined analysis of the denominator based on the size of the overlap $|T \cap S|$. While for most $T$, the overlap is small and $\Pr[E_S \cap E_T] \approx \gamma^2$, it also holds that for the few sets $T$ with large overlap, $\Pr[E_S \cap E_T]$ approaches $\gamma$. Secondly, we keep close track of the {\em constant} factors in the terms of the denominator, because roughly speaking, we need to show that the numerator and denominator match in the first-order terms and only differ in the second-order terms. A more detailed overview of the proof is available in \cref{sec:balanced}. Our proof also generalizes from Gaussian to Rademacher measurements in the case that $R = 1$; this extension is presented in \cref{sec:rademacher}.

\subsection{Other Related Work}

Under the setting of {\em universal 1-bit CS}, which is our focus, various works have established both upper and lower bounds for both {\em support recovery} and {\em approximate recovery} (i.e., accuracy to within some target Euclidean distance $\epsilon$, up to scaling).

In \cite{plan2012robust}, it was shown that both $O\left(\frac{k}{\epsilon^6}\log{\frac{n}{k}}\right)$ and $O\left(\frac{k}{\epsilon^5}(\log{\frac{n}{k}})^2\right)$ measurements are sufficient for universal approximate recovery. Then, a significant improvement with respect to $\epsilon$ was achieved in \cite{gopi2013one}, showing that $O(k^3\log{\frac{n}{k}} + \frac{k}{\epsilon}\log{\frac{k}{\epsilon}})$ measurements are sufficient. Recently, \cite{Ach17} showed that using RUFFs (Robust Union Free Families), a modification of UFFs (first used in \cite{gopi2013one}), one could further reduce the number of measurements to $O(k^2\log{n} + \frac{k}{\epsilon}\log{\frac{k}{\epsilon}})$. In addition, they provided a lower bound of $\Omega(k\log{\frac{n}{k}} + \frac{k}{\epsilon})$.

As for the support recovery problem, \cite{gopi2013one} established that $O(k^2\log{n})$ measurements are sufficient for non-negative signals, which was generalised to arbitrary real-valued signals with a matching number of measurements in \cite{Ach17}. In addition, \cite{Ach17} provided a nearly tight lower bound of $\Omega(k^2\frac{\log{n}}{\log{k}})$ for this setting.

Signals with bounded dynamic range have received relatively less attention.  
Beyond binary vectors \cite{Flo19} (i.e., $R=1$), it has been shown that $O(R^2 k \log n)$ non-adaptive measurements suffice for the weaker {\em for-each} guarantee, while the dependence on $R$ can be avoided altogether using adaptive measurements \cite{Gup10}.  More recently, for universal 1-bit CS with {\em approximate support recovery}, it was shown in \cite{Maz21} that having a known bounded dynamic range $R$ reduces the leading term in the number of measurements from $k^{3/2}$ to $R k$.  We additionally show in \cref{subsec:upperbound} that an upper bound for {\em exact} support recovery under the for-all guarantee can be deduced from an approximate recovery result given in \cite{Jac13} (deducing the same from \cite{Maz21} is also straightforward).


Other related problems include group testing \cite{Du93,Ald19}, standard CS \cite{Wai09,Ati12,Sca15}, and multi-bit CS \cite{Sla15}, but the precise details are less relevant for comparing to our own results.


\section{Preliminaries}

Throughout the paper, the function $\log(\cdot)$ has base $e$, and all information measures are in units of nats.

\subsection{Useful Technical Tools}

We will routinely use the following standard estimate for the Gaussian density.
\begin{lemma}[Small ball probabilities for Gaussians]
\label{smallballlemma}
If $X \sim N(0, \sigma^2)$, then for any $0<\delta<1$:
\[
\sqrt{\frac2\pi} \delta - \frac{\delta^3}{\sqrt{2\pi}} \leq \Pr[|X| \leq  \delta \sigma] \leq \sqrt{\frac2\pi} \delta.
\]
In addition, the upper bound also holds for $\max_t \Pr[t-\delta\sigma \leq X \leq t+\delta\sigma]$. 
\end{lemma}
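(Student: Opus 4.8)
The plan is to reduce everything to the standard normal by scaling and then bound the relevant integral of the Gaussian density by elementary means. Since $X = \sigma Z$ with $Z \sim N(0,1)$, we have $\Pr[|X| \le \delta\sigma] = \Pr[|Z| \le \delta]$ and $\max_t \Pr[t - \delta\sigma \le X \le t + \delta\sigma] = \max_t \Pr[t - \delta \le Z \le t + \delta]$, so it suffices to prove the three inequalities for $\sigma = 1$. Writing $\varphi(x) = \frac{1}{\sqrt{2\pi}}e^{-x^2/2}$ for the standard Gaussian density and using symmetry, $\Pr[|Z| \le \delta] = 2\int_0^\delta \varphi(x)\,dx = \sqrt{\tfrac{2}{\pi}} \int_0^\delta e^{-x^2/2}\,dx$.

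For the two upper bounds I would bound the integrand by its pointwise maximum. Since $e^{-x^2/2} \le 1$, the last integral is at most $\delta$, giving $\Pr[|Z| \le \delta] \le \sqrt{\tfrac{2}{\pi}}\,\delta$. For the shifted window, $\Pr[t - \delta \le Z \le t + \delta] = \int_{t-\delta}^{t+\delta} \varphi(x)\,dx \le 2\delta \max_x \varphi(x) = 2\delta\,\varphi(0) = \sqrt{\tfrac{2}{\pi}}\,\delta$, uniformly in $t$, because $\varphi$ attains its maximum $\tfrac{1}{\sqrt{2\pi}}$ at the origin (the first upper bound is just the case $t=0$).

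For the lower bound I would use $e^{-u} \ge 1 - u$ with $u = x^2/2$, so that $\int_0^\delta e^{-x^2/2}\,dx \ge \int_0^\delta (1 - \tfrac{x^2}{2})\,dx = \delta - \tfrac{\delta^3}{6}$; multiplying through by $\sqrt{\tfrac{2}{\pi}}$ and relaxing $\tfrac16$ to $\tfrac12$ (noting $\sqrt{\tfrac{2}{\pi}}\cdot\tfrac12 = \tfrac{1}{\sqrt{2\pi}}$) yields $\Pr[|Z| \le \delta] \ge \sqrt{\tfrac{2}{\pi}}\,\delta - \tfrac{\delta^3}{\sqrt{2\pi}}$. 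There is essentially no obstacle here; the only thing worth noting is that the natural inequality $e^{-u}\ge 1-u$ produces a slightly sharper cubic constant ($\tfrac16$) than is claimed, which I would simply weaken to $\tfrac12$ so as to match the cleaner form stated in the lemma, this being all that is needed in the sequel.
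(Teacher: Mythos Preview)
Your proof is correct; the paper states this lemma as a ``standard estimate for the Gaussian density'' and does not supply a proof, so there is nothing to compare against. Your reduction to the standard normal and the elementary bounds $e^{-x^2/2}\le 1$ and $e^{-u}\ge 1-u$ are exactly the natural way to establish such an estimate.
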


We also recall the following well-known  asymptotic results about $\log{{n \choose k}}$:
\begin{itemize}
    \item If $k = o(n)$, then
        \begin{equation}
            \label{smallklognchoosek}
            \log{{n \choose k}} = (1 + o(1))k\log{\left(\frac{n}{k}\right)}
        \end{equation}
    
    \item If $k = \Theta(n)$, then
         \begin{equation}\log{{n \choose k}} = (1 + o(1))nH_2\left(\frac{k}{n}\right),
         \end{equation}
    where $H_2(p)$ is the binary entropy function in nats, i.e.,~$H_2(p) =  p\log{\big(\frac{1}{p}\big)} + (1 - p)\log{\big(\frac{1}{1 - p}\big)}$.
\end{itemize}

\subsection{Upper Bound for Bounded Dynamic Range Signals}
\label{subsec:upperbound}

Here we show that the universal approximate recovery guarantee of \cite{Jac13} translates directly to a universal support guarantee for signals in $\Xc_k(R)$.  Such a result may be considered folklore, but we are not aware of a formal statement in the literature.

The result of interest for our purposes is \cite[Thm.~2]{Jac13}, which reveals that in order to produce an estimate $\xvhat$ satisfying $\big\| \frac{\xv^*}{\|\xv^*\|} - \frac{\xvhat}{\|\xvhat\|} \big\| \le \epsilon_0$, an i.i.d.~Gaussian measurement matrix succeeds with probability at least $1-\delta$ (for any fixed constant $\delta > 0$) using a number of measurements satisfying
\begin{equation}
    m = O\bigg( \frac{k}{\epsilon_0} \log n + k \log \frac{1}{\epsilon_0} \bigg) \label{eq:m_Jacques}
\end{equation}
with a large enough implied constant.  To convert this to a support recovery guarantee, we need to establish how small $\epsilon_0$ should be so that the above guarantee also implies $\supp(\xvhat) = \supp(\xv^*)$.  To do so, let $x$ and $y$ be two $k$-sparse signals with different supports, and assume without loss of generality that $\|x\| = \|y\| = 1$.  Then there must exist some index $i$ where one signal (say $x$) is non-zero but the other is zero.  From this entry alone, we have $\|x - y\| \ge x_i$.  The assumption of dynamic range $R$ additionally implies that $x_i \ge \frac{1}{\sqrt{(k-1)R^2 + 1}}$, which is attained in the extreme case that $x$ has $k-1$ other non-zero values that are $R$ times higher than $x_i$.  It follows that $\|x - y\| \ge \frac{1}{\sqrt{(k-1)R^2 + 1}}$, which means that $\epsilon_0 = \Theta\big( \frac{1}{R\sqrt{k}} \big)$ suffices for support recovery.  Hence, \eqref{eq:m_Jacques} becomes
\begin{equation}
    m = O\big(  R k^{3/2} \log n \big). \label{eq:m_Jacques2}
\end{equation}
This finding naturally generalizes the sufficiency of $O(k^{3/2} \log n)$ measurements for binary signals, and demonstrates that the $k^2$ barrier can be avoided when $R \ll \sqrt{k}$. 


\section{Lower Bound for Bounded Dynamic Range Signals}

\cref{maintheorem} below is the more detailed version of our main result, informally summarized earlier in \cref{mainthminf}.

\begin{theorem}
\label{maintheorem}
Fix $\epsilon \in \big(0,\frac{2}{3}\big)$, and suppose that $k \le n^{\frac{2}{3} - \epsilon}$ and $R \le \min{(n^{0.5\epsilon}, k^{0.5 - \epsilon})}$. If $A =(a_{i, j}) \in \mathbb{R}^{m \times n}$, where each $a_{i, j}$ is an i.i.d.~standard Gaussian random variable, and
\begin{equation}
    m <cR\left(\frac{k}{\log k}\right)^{3/2}\min{\left({R^2}, \log^2{k}\right)} \label{eq:m_bound}
\end{equation}
for a sufficiently small constant $c>0$,
then with probability at least $1/3$, $A$ is not a valid universal $1$-bit measurement matrix for support recovery on $\Xc_k(R)$.
\end{theorem}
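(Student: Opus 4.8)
The plan is to follow the two-part structure outlined in Section~\ref{sec:overview}. The $\Omega(k\log(n/k))$ term is already handled by the counting argument (the number of possible sign patterns $2^m$ must exceed the number of supports $\binom{n}{k}$), so the whole effort goes into establishing the $\tilde\Omega(Rk^{3/2})$ term. The reduction step comes first: I would show that if $A$ is a valid universal $1$-bit measurement matrix with probability more than $2/3$, then the truncated rows $V=\{v_1,\dots,v_m\}\subseteq\bbR^{n-2}$ must form an $(n-2,\,k-1,\,\sqrt{4\log m}/R)$-balanced family with probability at least, say, $1/2$. This uses the explicit pair $x,y$ given in the overview: if some $S\subseteq[n-2]$ with $|S|=k-1$ has $|\sum_{i\in S}v_j|>\sqrt{4\log m}/R$ for \emph{every} row $j$, then since (by a union bound over $2m$ Gaussians and Lemma~\ref{smallballlemma}-type tail bounds) all entries in the last two columns of $A$ are at most $\sqrt{4\log m}$ in magnitude with probability $\ge 5/6$, the signals $x$ and $y$ have $\mathrm{sign}(Ax)=\mathrm{sign}(Ay)$ despite differing supports. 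So failure of balancedness forces invalidity.

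The core of the proof is then a lower bound on the number $m$ of i.i.d.\ Gaussian vectors in $\bbR^N$ (with $N=n-2$) needed so that, with constant probability, \emph{every} $\ell$-subset ($\ell=k-1$) is ``hit'' within margin $d=\sqrt{4\log m}/R$. Equivalently, writing $F_S$ for the event that subset $S$ is \emph{not} hit, I must show $\Pr[\bigcup_S F_S]$ is bounded below by a constant when $m$ is as small as \eqref{eq:m_bound}. Here I would apply de Caen's inequality (Lemma~\ref{decaen}) to the $\binom{N}{\ell}$ events $\{F_S\}$. The numerator $\Pr[F_S]^2 = (1-\gamma)^{2m}$ is immediate, where $\gamma = \Pr[|\sum_{i\in S}v_i|\le d] = \Theta(d/\sqrt\ell)$ via Lemma~\ref{smallballlemma} (the single-vector sum $\sum_{i\in S}v_i\sim N(0,\ell)$). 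The whole battle is the denominator $\sum_T \Pr[F_S\cap F_T]$. I would partition the sum over $T$ by the overlap size $t=|S\cap T|$: for each $t$ there are roughly $\binom{\ell}{t}\binom{N-\ell}{\ell-t}$ sets $T$, and $\Pr[F_S\cap F_T] = (1-2\Pr[E_S]+\Pr[E_S\cap E_T])^m$ where $E_S,E_T$ are the single-vector hit events. The key quantitative input is an estimate of $\Pr[E_S\cap E_T]$ as a function of $t$: when $t$ is small this is $\approx\gamma^2$ (near-independence), and it interpolates up to $\approx\gamma$ as $t\to\ell$; one needs this with enough precision that the exponent $-2\Pr[E_S]+\Pr[E_S\cap E_T]$ is controlled to first order. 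Summing $\binom{\ell}{t}\binom{N-\ell}{\ell-t}(1-2\gamma+\rho_t)^m$ over $t$, and comparing with the numerator, the de Caen bound should yield $\Pr[\bigcup_S F_S]\ge$ const precisely in the regime \eqref{eq:m_bound}, with the $\min(R^2,\log^2 k)$ factor arising from the two cases $d^2=4\log m/R^2$ being $\lesssim$ or $\gtrsim 1$ (this controls whether $\gamma\approx d/\sqrt\ell$ or the cubic correction term in Lemma~\ref{smallballlemma} matters, and bounds on $\log m$).

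I expect the main obstacle to be exactly the denominator estimate: one must track constants, not just orders, because the numerator and denominator agree to first order in $\gamma m$ and differ only in the second-order term, so any sloppiness in $\Pr[E_S\cap E_T]$ or in the binomial-coefficient sums over $t$ destroys the bound. In particular I would need a clean two-dimensional small-ball estimate for the Gaussian vector $(\sum_{i\in S}v_i,\ \sum_{i\in T}v_i)$, whose covariance is determined by $t$, $\ell$, and I'd need to show the sum over $t\ge 1$ of the ``correction mass'' $\binom{\ell}{t}\binom{N-\ell}{\ell-t}(1-2\gamma+\rho_t)^m \big/ \binom{N}{\ell}(1-2\gamma+\gamma^2)^m$ is $O(1)$ under the stated scaling $k\le n^{2/3-\epsilon}$, $R\le\min(n^{0.5\epsilon},k^{0.5-\epsilon})$ — these hypotheses are what make the large-overlap terms (which are heavier per set but far fewer in number) negligible against the bulk $t=O(1)$ terms. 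Once that sum is controlled, assembling de Caen's inequality and chaining back through the reduction step gives the $1/3$ failure probability claimed. The Rademacher generalization at $R=1$ would then follow by replacing the Gaussian small-ball estimates with their Rademacher analogues (Berry--Esseen plus lattice corrections), which I'd defer to a separate section as the paper indicates.
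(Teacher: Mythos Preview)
Your proposal is correct and follows essentially the same route as the paper: the reduction via the explicit pair $(x,y)$ and the tail bound on the last two columns matches \cref{successreducelemma}, and the de Caen argument with the overlap-stratified denominator is exactly the content of \cref{sidetheorem} and its proof in \cref{sec:balanced}. One small correction: the factor $\min(R^2,\log^2 k)$ does not come from the cubic correction in \cref{smallballlemma}, but rather from two separate constraints on $m$ that emerge from different overlap regimes---the paper splits the denominator into three ranges $A,B,C$ (overlap $\beta\in[0,\tilde k^{-1/2}]$, $[\tilde k^{-1/2},0.9]$, $[0.9,1]$), and the $A$-analysis forces $m\lesssim k^{3/2}/d^3$ while the $B,C$-analysis forces $m\lesssim (k^{3/2}/d)\log k$, whose minimum yields the stated bound after substituting $d=\sqrt{4\log m}/R$.
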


We observe that this lower bound matches the upper bound in \eqref{eq:m_Jacques} up to logarithmic factors (note that when ignoring logarithmic factors, we can safely replace $\min\left({R^2}, \log^2{k}\right)$ by $1$).  Regarding the assumed scaling on $k$ and $R$ here, we note the following:
\begin{itemize}
    \item The assumption $k \le n^{\frac{2}{3} - \epsilon}$ covers the sparsity regimes of primary interest, since $k \ge n^{\frac{2}{3} + \epsilon}$ would amount to the right-hand side of \eqref{eq:m_bound} exceeding $n$.  However, letting $m = n$ with $A$ being the identity matrix would then require fewer measurements.  Thus, in such scaling regimes, this trivial measurement scheme is more appropriate than the i.i.d.~Gaussian scheme.
    \item The assumption $R \le k^{0.5-\epsilon}$ covers the scaling regimes of primary interest on the number of measurements, because $R \ge k^{0.5+\epsilon}$ would imply that the right-hand side of \eqref{eq:m_bound} exceeds $\Omega(k^2)$.  However, it is known that $k^2$ dependence can be achieved even for {\em arbitrary} $k$-sparse signals regardless of the dynamic range \cite{Ach17}.
    \item On the other hand, the assumption $R \le n^{0.5\epsilon}$ imposes a non-trivial restriction.  Ideally this would be improved to $R \le n^{1.5\epsilon}$, which is a natural threshold because any higher would make the right-hand side of \eqref{eq:m_bound} again exceed $n$ when $k = \Theta(n^{2/3-\epsilon})$.  We believe that that our constant $0.5$ could be increased without too much extra effort, but that increasing it all the way to $1.5$ may be more challenging.
\end{itemize}
The ``failure'' probability of $1/3$ in \cref{maintheorem} is arbitrary, and can be adjusted to any fixed constant in $(0,1)$ while only affecting the unspecified constant $c$.  This is because any constant positive ``success'' probability could trivially be improved, and made arbitrarily close to one, by independently generating the measurement matrix $O(1)$ times.

The remainder of the section is devoted to proving \cref{maintheorem}.

\subsection{Reduction From Balancing}

As discussed in \cref{sec:overview}, we obtain the lower bound through a connection to a vector balancing problem, formally defined as follows.

\begin{definition}
    A set $V \subseteq \mathbb{R}^n$ is said to be {\em $(n, k, d)$-balanced} if for any $S \subseteq [n]$ of size $k$, there exists $v \in V$ satisfying $|\sum_{i \in S} v_i| \leq d$.
\end{definition}


We show that the existence of a universal 1-bit measurement matrix for $\Xc_k(R)$ of a certain size implies a bound on the size of a balanced family of vectors. A lower bound on the latter will thus imply a lower bound on the number of measurements.


\begin{lemma}
\label{successreducelemma}
If $A = (a_{i, j}) \in \mathbb{R}^{m \times n}$ is a valid universal $1$-bit measurement matrix for support recovery on $\Xc_k(R)$, then either 
\begin{itemize}
    \item $V = \{(a_{i,1}, a_{i,2}, \dots, a_{i,n - 2}) \mid i \in [m]\}$ is $\big(n - 2, k - 1, \frac{\sqrt{4\log{m}}}{R}\big)$-balanced, or
    \item there exist indices $i \in [1, m], j \in [n - 1, n]$ such that $|a_{i, j}| > \sqrt{4\log{m}}$.
\end{itemize}
\end{lemma}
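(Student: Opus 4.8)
The plan is to prove the \emph{contrapositive} of the stated implication. That is, assuming that $V = \{(a_{i,1},\dots,a_{i,n-2}) : i \in [m]\}$ is \emph{not} $\big(n-2,\,k-1,\,\tfrac{\sqrt{4\log m}}{R}\big)$-balanced \emph{and} that $|a_{i,j}| \le \sqrt{4\log m}$ for all $i \in [m]$ and $j \in \{n-1,n\}$, I will construct two signals in $\Xc_k(R)$ with distinct supports that produce the same $1$-bit measurements under $A$, thereby showing that $A$ is not a valid universal $1$-bit measurement matrix for support recovery on $\Xc_k(R)$.

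First, unpacking the failure of balancedness gives a set $S \subseteq [n-2]$ with $|S| = k-1$ such that $\big|\sum_{j \in S} a_{i,j}\big| > \sqrt{4\log m}/R$ for every row $i \in [m]$. With this $S$ fixed, I define $x,y \in \RR^n$ exactly as in the technical overview: $x_j = y_j = R$ for $j \in S$, $x_{n-1}=1$, $y_n=1$, and all other coordinates zero. The routine verifications are that each of $x,y$ has exactly $|S|+1 = k$ nonzero entries and dynamic range exactly $R/1 = R$ (here the extra singleton coordinate and the scaling by $R$ on $S$ do the work), so $x,y \in \Xc_k(R)$, and that $\supp(x) = S \cup \{n-1\} \neq S \cup \{n\} = \supp(y)$. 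It therefore remains to show ${\rm sign}(Ax) = {\rm sign}(Ay)$.

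For a fixed row $i$, since $S \subseteq [n-2]$ the two measurements share the common term $c_i := R\sum_{j\in S} a_{i,j}$, namely $(Ax)_i = c_i + a_{i,n-1}$ and $(Ay)_i = c_i + a_{i,n}$. The decisive inequality is that $|c_i| = R\big|\sum_{j\in S} a_{i,j}\big| > \sqrt{4\log m}$ by the choice of $S$, whereas $|a_{i,n-1}|, |a_{i,n}| \le \sqrt{4\log m}$ by the second hypothesis; hence the common term strictly dominates the perturbation, so $c_i + a_{i,n-1}$ and $c_i + a_{i,n}$ are both nonzero and carry the sign of $c_i$. Applying this to every row gives ${\rm sign}(Ax) = {\rm sign}(Ay)$ while $\supp(x) \neq \supp(y)$, contradicting validity and establishing the contrapositive.

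There is no serious obstacle in this argument — it is a short reduction — but a few points need care: keeping the strict-versus-nonstrict inequalities aligned in the sign-domination step (and checking that the arguments are genuinely nonzero, so the convention ${\rm sign}(0)=1$ never comes into play); confirming that the $1/R$ factor in the threshold $\sqrt{4\log m}/R$ is precisely what makes $|c_i|$ clear $\sqrt{4\log m}$ after multiplication by $R$; and the elementary membership check $x,y \in \Xc_k(R)$. The genuine difficulty of the paper lies downstream, in lower-bounding the size of a balanced family of i.i.d.\ Gaussian vectors, not in this reduction.
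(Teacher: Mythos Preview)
Your proposal is correct and is essentially the same argument as the paper's: both fix the last two columns to be small, construct the pair $x,y$ supported on $S\cup\{n-1\}$ and $S\cup\{n\}$ with value $R$ on $S$ and $1$ on the extra coordinate, and use the inequality $|R\sum_{j\in S}a_{i,j}|$ versus $\sqrt{4\log m}$ to relate distinguishability to balancedness. The only cosmetic difference is that you frame it as the contrapositive (non-balanced $\Rightarrow$ not valid), whereas the paper argues directly (valid $\Rightarrow$ balanced); the construction and the decisive inequality are identical.
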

\begin{proof}
Suppose that $|a_{i, j}| \leq \sqrt{4\log{m}}$ for all $i \in [1, m], j \in [n - 1, n]$. Consider any subset $S \subseteq [n - 2]$, where $|S| = k - 1$. Let $x, y$ be $R$-bounded $k$-sparse vectors such that 
    $$x_{j} = \begin{cases}
        R & \text{if $j \in S$}\\
        1 & \text{if $j = n - 1$}\\
        0 & \text{otherwise}\\
    \end{cases} \text{ and }
    y_{j} = \begin{cases}
        R & \text{if $j \in S$}\\
        1 & \text{if $j = n$}\\
        0 & \text{otherwise}.\\
    \end{cases}
    $$
    Since $A$ is a valid universal $1$-bit measurement matrix for support recovery on $\Xc_k(R)$, we must have ${\rm sign}(Ax) \neq {\rm sign}(Ay)$, so there must exist a row $a_i$ in $A$ such that  ${\rm sign}(a_i \cdot x) \neq {\rm sign}(a_i \cdot y)$. Without loss of generality, suppose that $a_i \cdot x \geq 0$ and $a_i \cdot y < 0$. Hence,
    \begin{align}
    \label{boundentries}
    \begin{split}
    \sum_{j \in S} Ra_{i,j} + a_{i,n - 1} \geq 0, \qquad \text{and} \qquad \sum_{j \in S} Ra_{i,j} +  a_{i, n} < 0.
    \end{split}
    \end{align}
    Since $|a_{i, n - 1}|, |a_{i, n}| \leq \sqrt{4\log{m}}$, it follows from \eqref{boundentries} that
    \begin{align*}
        \left|R\sum_{j \in S} a_{i, j}\right| \leq \sqrt{4\log{m}}
    \end{align*}
    Hence, $V$ is $\big(n - 2, k - 1, \frac{\sqrt{4\log{m}}}{R}\big)$-balanced, as claimed.
\end{proof}

\ignore{
\begin{corollary}
\label{failreducelemma}
Suppose that each $a_{i, j}$ is an i.i.d.~standard Gaussian random variable. If
\begin{itemize}
    \item $V = \{v | v = (a_{i, 1}, a_{i, 2} \dots, a_{i, n - 2}), i \in [1, m]\}$ is not $\left(n - 2, k - 1, \frac{\sqrt{4\log{m}}}{R}\right)$-balanced AND
    \item $\forall i \in [n - 1, n], \forall j \in [1, m], |a_{i, j}| \leq \sqrt{4\log{m}}$
\end{itemize} then $A$ is not a valid universal $1$-bit measurement matrix for support recovery on $\Xc_k(R)$.
\end{corollary}

\begin{proof}
This is the contrapositive of the \cref{successreducelemma}
\end{proof}
}

\subsection{Proof of \cref{maintheorem}}

In this section, we prove \cref{maintheorem} using the following lower bound on the size of balanced vector families, which will in turn be proved in \cref{sec:balanced}. 
\begin{theorem}
\label{sidetheorem}
Fix $\epsilon \in \big(0,\frac{2}{3}\big)$, and suppose that $k \le n^{\frac{2}{3} - \epsilon}$. Let $V \subseteq \mathbb{R}^n$ be a set of $m$ vectors, each independently drawn with i.i.d.~$N(0,1)$ entries. Then, for any $d$ satisfying
\begin{equation}
    d \ge \max{\left(\frac{4^{1/(1 + \epsilon)}k^{1.5}}{n^{1/(1 + \epsilon)}}, k^{\epsilon - 0.5}\sqrt{4\log{m}}\right)}, \quad d \le \sqrt{4\log{m}}, \label{eq:choice_d}
\end{equation}
if $m$ satisfies
\[
 m \le c\frac{k^{1.5}}{d}\min{\left(\frac{1}{d^2}, \log{k}\right)}
 \]
for a sufficiently small constant $c>0$, then with probability at least $1/2$ the set $V$ is not $\big(n, k, d)$-balanced.
\end{theorem}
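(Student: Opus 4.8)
The event ``$V$ is not $(n,k,d)$-balanced'' is exactly $\bigcup_{S}F_S$, where $S$ ranges over the $\binom nk$ subsets of $[n]$ of size $k$ and $F_S$ is the event that $\big|\sum_{i\in S}v_i\big|>d$ for every $v\in V$. I would lower bound $\Pr\big[\bigcup_S F_S\big]$ using de Caen's inequality (\cref{decaen}). Writing $\gamma:=\Pr\big[\big|\sum_{i\in S}v_i\big|\le d\big]=\Pr[\,|N(0,k)|\le d\,]$ for a single $N(0,1)^n$ vector and $\rho_j:=\Pr\big[\big|\sum_{i\in S}v_i\big|\le d \text{ and } \big|\sum_{i\in T}v_i\big|\le d\,\big]$ when $|S\cap T|=j$, independence of the $m$ rows gives $\Pr[F_S]=(1-\gamma)^m$ and, by inclusion--exclusion, $\Pr[F_S\cap F_T]=(1-2\gamma+\rho_j)^m$. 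Grouping the $T$'s by $j=|S\cap T|$, de Caen reduces the whole problem to the single inequality
\[
\E_{J}\!\left[\Big(1+\frac{\rho_J-\gamma^2}{(1-\gamma)^2}\Big)^{m}\right]\le 2,
\]
where $J$ is hypergeometric, $\Pr[J=j]=\binom kj\binom{n-k}{k-j}/\binom nk$; indeed $\sum_T\Pr[F_S\cap F_T]=\binom nk(1-\gamma)^{2m}\,\E_J\big[(1+(\rho_J-\gamma^2)/(1-\gamma)^2)^m\big]$, and the de Caen ratio is the reciprocal of this expectation.

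Next I would nail down $\gamma$ and $\rho_j$. Since $d\le\sqrt{4\log m}$ and $m$ is polynomial in $k$ under the hypotheses, $\delta:=d/\sqrt k=o(1)$, so \cref{smallballlemma} gives $\gamma=(1-o(1))\sqrt{2/\pi}\,\delta$ and hence $\gamma^2=(1-o(1))\tfrac2\pi\tfrac{d^2}{k}$. For $\rho_j$, decompose $\sum_{i\in S}v_i=X+Z$ and $\sum_{i\in T}v_i=Y+Z$, where $Z=\sum_{i\in S\cap T}v_i$ and $X,Y$ are the independent ``private'' parts; conditioning on $Z$ and applying the ``shifted'' small-ball bound of \cref{smallballlemma} to $X$ and $Y$ yields $\rho_j\le\tfrac2\pi\tfrac{d^2}{k-j}$ for $j\le k/2$, and trivially $\rho_j\le\gamma$ always. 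The key structural facts are that $\rho_0=\gamma^2$ \emph{exactly} (so the $j=0$ term of the expectation is precisely $1$) and that $\rho_j-\gamma^2\le O\!\big(\gamma^2\,(j+d^2)/k\big)$ for $1\le j\le k/2$. Since $\sum_j\Pr[J=j]=1$, this says the numerator and denominator in de Caen ``agree to first order''; only the genuinely small contribution of $j\ge1$ must be controlled.

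I would then bound the expectation by splitting at $J=k/2$. For $J\le k/2$: $(1+(\rho_J-\gamma^2)/(1-\gamma)^2)^m\le e^{O(m\gamma^2 d^2/k)}\,e^{O(m\gamma^2 J/k)}$, the first factor being $1+o(1)$, so the partial sum is at most $(1+o(1))\,\E_J[e^{\beta J}]$ with $\beta=O(m\gamma^2/k)$. Bounding the hypergeometric moment generating function by that of $\mathrm{Bin}(k,k/n)$ gives $\E_J[e^{\beta J}]\le\exp\!\big(\tfrac{k^2}n(e^\beta-1)\big)\le e^{O(\beta k^2/n)}$, and one checks $\beta k^2/n=O(m\gamma^2 k/n)=o(1)$; this step is precisely where the hypotheses $m\le c\,k^{3/2}d^{-1}\min(d^{-2},\log k)$ and $d\ge 4^{1/(1+\epsilon)}k^{3/2}/n^{1/(1+\epsilon)}$ (i.e.\ $n\ge4(k^{3/2}/d)^{1+\epsilon}$) are used. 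Hence the $J\le k/2$ part is $1+o(1)$. For $J>k/2$: $\rho_J\le\gamma$ forces $1+(\rho_J-\gamma^2)/(1-\gamma)^2\le 1/(1-\gamma)$, so this part is at most $(1-\gamma)^{-m}\,\Pr[J>k/2]\le e^{O(ck\log k)}\cdot\tfrac k2\big(O(k/n)\big)^{k/2}=n^{-\Omega(k)}$, using $k\le n^{2/3-\epsilon}$ and taking $c$ small. Summing the two contributions gives $\E_J[\cdots]\le 1+o(1)<2$, and de Caen then yields $\Pr\big[\bigcup_S F_S\big]\ge 1/(1+o(1))\ge 1/2$.

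\textbf{Main obstacle.} As indicated in \cref{sec:overview}, the difficulty is not in the individual small-ball computations but in the bookkeeping of constants: one must prove the ``excess'' expectation is $1+o(1)$, not merely $O(1)$, since the crude independence heuristic ($\rho_J\approx\gamma^2$ for typical $J$) only yields a constant failure probability for $m=O(\sqrt k/d)$ rather than the desired $m=\Theta(k^{3/2}/d)$. Concretely one must (i) carry the constant $\sqrt{2/\pi}$ through so that the $j=0$ cancellation is exact, (ii) establish $m\gamma^2 k/n=o(1)$, which is exactly what pins down the admissible range of $d$ in \eqref{eq:choice_d}, and (iii) balance the small-$J$ regime (bounded via the hypergeometric MGF) against the large-$J$ regime (bounded via the super-exponential hypergeometric tail overcoming the $(1-\gamma)^{-m}$ blow-up), which dictates both the split point and the smallness of $c$.
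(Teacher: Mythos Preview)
Your proposal is correct and follows a genuinely different (and in some ways cleaner) route than the paper, although both start from de Caen's inequality and the same small-ball estimates.

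The paper rewrites the de Caen denominator as $\sum_\beta v_\beta$ and splits the overlap parameter $\beta=|S\cap T|/k$ into \emph{three} ranges, $[0,\tilde k^{-1/2}]$, $[\tilde k^{-1/2},0.9]$, $[0.9,1]$ with $\tilde k=k/d^2$. The first range is handled by uniformly bounding $\Pr_\beta[F_S\cap F_T]/\Pr[F_S]^2\le 1.5$ and then invoking the Vandermonde identity $\sum_l\binom{k}{l}\binom{n-k}{l}=\binom{n}{k}$; the middle range is handled by a pointwise bound on $\log v_\beta$ via an auxiliary function $f(\beta)=\beta k\log\frac{n-k}{k}-(1+\epsilon)kH_2(\beta)$ whose monotonicity on $[\tilde k^{-1/2},0.9]$ must be verified; and the last range uses $\Pr_\beta[F_S\cap F_T]\le 1$ together with crude bounds on the binomials. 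You instead observe that the de Caen ratio equals $1/\E_J\big[(1+(\rho_J-\gamma^2)/(1-\gamma)^2)^m\big]$ with $J$ hypergeometric, split only at $J=k/2$, and on the main region replace the paper's Vandermonde-plus-$f(\beta)$ machinery by a single application of Hoeffding's convex-order comparison $\E_J[e^{\beta J}]\le \E_B[e^{\beta B}]$ with $B\sim\mathrm{Bin}(k,k/n)$. This collapses the paper's regions $A$ and $B$ into one step and makes transparent exactly where the lower bound on $d$ in \eqref{eq:choice_d} enters (namely to force $m\gamma^2 k/n=o(1)$), whereas in the paper that condition is buried inside the monotonicity argument for $f$. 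Your large-$J$ tail is morally the paper's region $C$, handled by the crude bound $\rho_J\le\gamma$ together with $\Pr[J>k/2]\le(O(k/n))^{k/2}$; this last estimate is a touch informal as written but follows from the standard hypergeometric tail bound $\Pr[J\ge j]\le\binom{k}{j}(k/(n-k))^j$. In short, the paper's proof is more hands-on and self-contained, while yours is shorter but imports the Hoeffding MGF comparison as a black box.
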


For the sake of generality, we have considered the full range of $d$ in \eqref{eq:choice_d} that our proof permits, but in view of \cref{successreducelemma}, proving \cref{maintheorem} only requires handling the much more specific choice of $d = \frac{\sqrt{4 \log m}}{R}$, as we now show.

\begin{proof}[Proof of \cref{maintheorem}]
\noindent Let $H$ be the event that the matrix $A = (a_{i, j}) \in \mathbb{R}^{m \times n}$ sampled from i.i.d.~standard Gaussian random variables is not a valid universal $1$-bit measurement matrix for support recovery on $\Xc_k(R)$. Let $G$ be the event that $\forall i \in [1, m], \forall j \in [n - 1, n], |a_{i, j}| \leq \sqrt{4\log{m}}$, and let $P$ be the event that the set $V$ in \cref{successreducelemma} is not $\big(n - 2, k - 1, \frac{\sqrt{4\log{m}}}{R}\big)$-balanced.

From \cref{successreducelemma}, $P \cap G \implies H$, and therefore, $\Pr[H] \geq \Pr[P] - \Pr[\overline{G}]$. 
We show that $\Pr[\overline{G}]\leq \frac16$:

\begin{align*}
    \Pr[\overline{G}] &= \Pr\left[ \bigcup_{i = 1}^{m} \bigcup_{j = n - 1}^{n} \big\{ a_{i, j} > \sqrt{4\log{m}} \big\}\right]\\
    &\leq 2m \Pr[Z > \sqrt{4\log{m}}]\\
    &\leq 2m e^{-2 \log m} \le \frac16,
 \end{align*}
where in the last inequality we assume that $k$ (and hence $m$) exceeds a sufficiently large constant.
Therefore, $\Pr[H] \geq \frac{1}{2} - \frac{1}{6}= \frac{1}{3}$. The minor difference in $n$ and $k$ between Theorems \ref{maintheorem} and \ref{sidetheorem} can be absorbed by adjusting the constant $c$.

\cref{sidetheorem} provides the following condition on $m$:
    $$m \le c\frac{k^{1.5}}{d}\min{\left(\frac{1}{d^2}, \log{k}\right)}$$ 
for $d$ satisfying \eqref{eq:choice_d}.  We claim that the choice $d = \frac{\sqrt{4\log{m}}}{R}$ indeed satisfies \eqref{eq:choice_d}; this is seen as follows:
\begin{itemize}
    \item \cref{lemmad} below establishes that
    $\big(\frac{\sqrt{4\log{m}}}{R}\big)^{1 + \epsilon} \ge \frac{4k^{1.5(1 + \epsilon)}}{n}$, which implies $d \ge \frac{4^{1/(1 + \epsilon)}k^{1.5}}{n^{1/(1 + \epsilon)}}$.
    \item The inequality $\frac{\sqrt{4\log{m}}}{R} \geq k^{\epsilon - 0.5}\sqrt{4\log{m}}$ follows immediately from $R \leq k^{0.5 - \epsilon}$.
    \item $\frac{\sqrt{4\log{m}}}{R} \leq \sqrt{4\log{m}}$ follows immediately from $R \ge 1$.
\end{itemize}
Substituting this choice of $d$ into \cref{sidetheorem}, we obtain the condition
\begin{equation}
    m \le c'\frac{R}{\sqrt{\log{m}}}k^{1.5}\min{\left(\frac{R^2}{\log{m}}, \log{k}\right)} \label{eq:m_init}
\end{equation}
for a sufficiently small constant $c'$.  Manipulating this expression so as to obtain $m$ only on the the left hand side, we obtain  the condition
$$m \le c''\frac{R}{\sqrt{\log{k}}}k^{1.5}\min{\left(\frac{R^2}{\log{k}}, \log{k}\right)}.$$
This manipulation uses the assumption $R \le k^{0.5 - \epsilon}$, which implies that $\log m$ and $\log k$ are of the same order when $m$ equals the largest value satisfying \eqref{eq:m_init}.
\end{proof}

\begin{lemma}
\label{lemmad}
When $R \le n^{0.5\epsilon}$ and $k \le n^{\frac{2}{3}-\epsilon}$, we have $\big(\frac{\sqrt{4\log{m}}}{R}\big)^{1 + \epsilon} \ge \frac{4k^{1.5(1 + \epsilon)}}{n}$.
\end{lemma}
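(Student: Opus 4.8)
The plan is to strip off the exponent $1+\epsilon$ and reduce the claim to an elementary one-line estimate. Since every quantity in sight is positive and $\log m>0$ once $m\ge 2$, raising both sides of the desired inequality to the power $\tfrac{1}{1+\epsilon}$ shows that it is equivalent to
\[
\sqrt{4\log m}\ \ge\ \frac{4^{1/(1+\epsilon)}\,k^{3/2}\,R}{n^{1/(1+\epsilon)}}.
\]
Thus it suffices to bound the right-hand side by an absolute constant and then observe that $\sqrt{4\log m}$ exceeds that constant because $m$ is large in the regime of \cref{maintheorem}.

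For the bound on the right-hand side I would substitute the two hypotheses directly: $R\le n^{0.5\epsilon}$ and, from $k\le n^{2/3-\epsilon}$, $k^{3/2}\le n^{1-1.5\epsilon}$. This gives
\[
\frac{4^{1/(1+\epsilon)}\,k^{3/2}\,R}{n^{1/(1+\epsilon)}}\ \le\ 4^{1/(1+\epsilon)}\, n^{(1-1.5\epsilon)+0.5\epsilon-1/(1+\epsilon)}\ =\ 4^{1/(1+\epsilon)}\, n^{1-\epsilon-1/(1+\epsilon)}.
\]
The only algebraic point worth isolating is the identity $1-\epsilon-\tfrac{1}{1+\epsilon}=-\tfrac{\epsilon^2}{1+\epsilon}<0$, so the exponent of $n$ is negative, whence $n^{1-\epsilon-1/(1+\epsilon)}\le 1$; together with $4^{1/(1+\epsilon)}\le 4$ (valid since $\epsilon>0$), the right-hand side is at most $4$.

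It then remains to note that $\sqrt{4\log m}\ge 4$, i.e.\ $m\ge e^{4}$. This is where I would invoke that $m$ is at least a sufficiently large absolute constant: in the setting of \cref{maintheorem} this is legitimate because if $m<\log_2{n\choose k}$ then $A$ cannot produce enough distinct sign patterns to be a valid universal $1$-bit measurement matrix, so the conclusion of \cref{maintheorem} holds trivially, while $\log_2{n\choose k}\to\infty$ as $k\to\infty$ with $k\le n^{2/3-\epsilon}$. The main thing to be careful about is therefore not any single inequality but the bookkeeping: the statement really requires $m$ to exceed a constant, so I would make that standing assumption explicit at the outset of the proof; everything else is a direct substitution.
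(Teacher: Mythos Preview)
Your proof is correct and follows essentially the same approach as the paper: both substitute the hypotheses $R\le n^{0.5\epsilon}$ and $k\le n^{2/3-\epsilon}$, reduce the resulting power of $n$ to have a nonpositive exponent, and close with the standing assumption $\log m\ge 4$. Your version is somewhat cleaner, since you take the $(1+\epsilon)$-th root at the outset and isolate the identity $1-\epsilon-\tfrac{1}{1+\epsilon}=-\tfrac{\epsilon^2}{1+\epsilon}$ explicitly, whereas the paper threads the same computation through a longer chain of inequalities; but the substance is identical.
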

\begin{proof}
We have
\begin{align*}
    R &\le n^{0.5\epsilon}\\
    &\le \frac{\sqrt{4\log{m}}}{4} \cdot n^{0.5\epsilon} \text{ [Assuming $\log{m} \ge 4$]}\\
    &\le \frac{\sqrt{4\log{m}}}{4^{\frac{1}{1 + \epsilon}}} \cdot n^{\frac{0.5\epsilon + 0.5\epsilon^2}{1 + \epsilon}}\\
    &\le \sqrt{4\log{m}} \cdot \left(\frac{n^{0.5\epsilon + 1.5\epsilon^2}}{4}\right)^{\frac{1}{1 + \epsilon}}  \text{ [Since $0.5\epsilon^2 \le 1.5\epsilon^2$]}\\
    &= \sqrt{4\log{m}} \cdot \left(\frac{n}{4n^{(\frac{2}{3} - \epsilon)(1.5 + 1.5\epsilon)}}\right)^{\frac{1}{1 + \epsilon}} \\
    &\le \sqrt{4\log{m}} \cdot \left(\frac{n}{4k^{1.5(1 + \epsilon)}}\right)^{\frac{1}{1 + \epsilon}}, \text{ [Since $k < n^{\frac{2}{3} - \epsilon}$]}
\end{align*}
and re-arranging gives
\begin{align*}
    \left(\frac{\sqrt{4\log{m}}}{R}\right)^{1 + \epsilon} &\ge \frac{4k^{1.5(1 + \epsilon)}}{n}.
\end{align*}
\end{proof}


\section{Proof of \cref{sidetheorem} (Lower Bound for Balancing)}\label{sec:balanced}
Let $v^{(1)}, \dots, v^{(m)}$ denote the vectors in $V$, each independently drawn with i.i.d.~$N(0,1)$ entries.  Let $x_s \in \{0,1\}^n$ be the binary vector with support $s$, and define $X = \{x_s\}_{|s|=k}$. 
Denote by $F_s$ the failure event for the subset $s$ in the balancing problem, i.e.,~the event that $|x_s \cdot v^{(i)}| > d$ for all $i=1,\dotsc,m$.

We establish a lower bound on $\Pr[\bigcup_{s \in X}F_s]$ using de Caen's lower bound (\cref{decaen}):
$$\Pr\left[\bigcup_{s \in X}F_s\right] \geq \sum_{s \in X} \frac{\Pr[F_s]^2}{\sum_{t \in X}\Pr[F_s \cap F_t]},$$
with the goal being to lower-bound the right-hand side by $\frac{1}{2}$.

Let $E_{s, i}$ be a single failure event, i.e.,~$|x_s \cdot v^{(i)}| > d$, such that $F_s = \bigcap_{i = 1}^{m}E_{s, i}$.  We split the terms $\Pr[F_s \cap F_t]$ according to the amount of overlap between the two relevant subsets, and accordingly write $\Pr_{\beta}[F_s \cap F_t] = \Pr[F_s \cap F_t]$ in the case that $|x_s \cap x_t| = \beta k$ for some $\beta \in [0, 1]$.

\begin{definition}
We introduce the quantities
\begin{equation}
    \alpha = {n \choose k} \Pr[F_s]^2 \label{eq:def_alpha}
\end{equation}
and
\begin{equation}
    \nu_{\beta} = {k \choose (1 - \beta)k}{n - k \choose (1 - \beta)k}\Pr_{\beta}[F_s \cap F_t]. \label{eq:def_v}
\end{equation}
\end{definition}

Here for a fixed $x_s$, $\nu_{\beta}$ represents the sum of the contributions of all the pairs $(x_s, x_t)$ where $|x_s \cap x_t| = \beta k$ in the denominator of de Caen's bound (i.e.,~$\sum_{t \in X}\Pr[F_s \cap F_t]$). Formally, $\nu_{\beta} = \sum_{t: |x_t \cap x_s| =\beta k}\Pr_{\beta}[F_s \cap F_t]$.  Note that $\nu_{\beta}$ is well-defined only when $\beta k \in \mathbb{Z}_{\geq 0}$, which is understood to hold throughout our analysis.

For a fixed $x_s$, the number of $x_t$'s in $X$ such that $|x_s \cap x_t| = \beta k$ is ${k \choose (1 - \beta)k}{n - k \choose (1 - \beta)k} = {k \choose \beta k}{n - k \choose (1 - \beta)k}$, because ${k \choose \beta k}$ is the number of ways to choose $\beta k$ entries within the non-zero entries of $x_s$ to be a part of $x_t$ and ${n - k \choose (1 - \beta)k}$ is the number of ways to pick the remaining entries from outside of $x_s$, ensuring that the intersection size is exactly $\beta k$.

To simplify subsequent notation, we define
\[ \widetilde{k} = \frac{k}{d^2} \implies \widetilde{k}^{-0.5} = k^{-0.5}d.\]
We define three summations, which partition the sum in the denominator of de Caen's bound (i.e.,~$\sum_{t \in X}\Pr[F_s \cap F_t]$) as follows:
$$A = \sum_{\beta \in [0, \widetilde{k}^{-0.5}]}\nu_\beta,~~ B = \sum_{\beta \in [\widetilde{k}^{-0.5}, 0.9]}\nu_\beta, \text{ and } C = \sum_{\beta \in [0.9, 1]}\nu_\beta.$$
Here and subsequently, $\displaystyle \sum_{\beta \in [a, b]}\nu_{\beta}$ is compact notation meant to represent the sum over only the well-defined $\beta \in [a, b]$. Formally, 
$$\sum_{\beta \in [a, b]}\nu_{\beta} = \sum_{t = \ceil{ak}}^{\floor{bk}}\nu_{\frac{t}{k}} = \sum_{t = \ceil{ak}}^{\floor{bk}}{k \choose k - t}{n - t \choose k - t}\Pr_{\beta = \frac{t}{k}}[F_s \cap F_t].$$

\noindent \textbf{Intuition behind why we split into three sums}: 
\begin{itemize}
    \item $A$ is the sum of all the pairs who intersection size is ``sufficiently small'' with respect to $k$, i.e.,~at most $\sqrt{\widetilde{k}}$. Since the intersection is so small, so we are able to utilise the fact that $\Pr_{\beta}[F_s \cap F_t] \approx \Pr[F_s]^2$ to get $A \leq 1.5\alpha$ (\cref{lemmaA}). 
    \item For $B$, the sum of the binomial terms, i.e.,~${k \choose (1 - \beta)k}{n - k \choose (1 - \beta)k}$, is much smaller than ${n \choose k}$, and the largest asymptotic term in which $Pr_{\beta}[F_s \cap F_t]$ differs from $\Pr[F_s]^2$ is $O(\beta\frac{d^2}{k})$.  Both of these balance out and eventually give $B \leq 0.25\alpha$ (\cref{lemmaB}). 
    \item Lastly, $C$ is used for the specific case of $\beta = \Omega(1)$, where we loosely bound $\Pr_{\beta}[F_s \cap F_t] \leq 1$ and show that the product of the two binomial terms is small, so $C \leq 0.25\alpha$ (\cref{lemmaC}).
\end{itemize}

We now proceed more formally.  For any fixed $s \in X$, we have $\sum_{t \in X}\Pr_{\beta}[F_s \cap F_t] \leq A + B + C$, and hence
\begin{align*}
    \frac{1}{\sum_{t \in X}\Pr_{\beta}[F_s \cap F_t]} &\geq \frac{1}{A + B + C}.
\end{align*}
Let $Q$ be the event that the set $V \subseteq \mathbb{R}^n$, where $|V| = m$ and each $V_{i, j}$ is an i.i.d.~standard Gaussian random variable, is not $\left(n, k, d\right)$-balanced.
Then, since $\bigcup_{s \in X}F_{s} \implies Q$, we have
$$ \Pr[Q] \geq \Pr\left[\bigcup_{s \in X}F_{s}\right]. $$
Then, using de Caen's bound (\cref{decaen}), we have
\begin{align*}
    \Pr\left[\bigcup_{s \in X}F_s\right] &\geq \sum_{s \in X} \frac{\Pr[F_s]^2}{\sum_{t \in X}\Pr[F_s \cap F_t]} \\
    &\geq \frac{|X|\Pr[F_1]^2}{A + B + C}\\
    &= \frac{{n \choose k} \Pr[F_1]^2}{A + B + C}.
\end{align*}
The numerator is equal to $\alpha$ by definition, and $A$, $B$, and $C$ will be bounded in the subsequent subsections (\cref{lemmaA}, \cref{lemmaB} and \cref{lemmaC} below) to deduce that
\begin{align*}
    \Pr\left[\bigcup_{s \in X}F_s\right] &\geq \frac{\alpha}{(1.5 + 0.25 + 0.25)\alpha} = \frac{1}{2}.
\end{align*}
Combining the above, we have $\Pr[Q] \geq \frac{1}{2}$ as desired.

With some small constants $c'$ and $c''$, we will require $m \le c' \frac{k^{1.5}}{d^3}$ in \cref{lemmaA}, and $m \le c'' \frac{k^{1.5}}{d}\log{k}$ in \cref{lemmaB} and \cref{lemmaC}.  Since we need these to hold simultaneously, the overall scaling on $m$ is
$$m \le c\frac{k^{1.5}}{d}\min{\left(\frac{1}{d^2}, \log{k}\right)}$$ for a sufficiently small constant $c$.

We now proceed with the analysis of $A$, $B$, and $C$.

\subsection{Initial Characterization of $E_{s,i}$ and $F_s$}
Recall that we set $ \widetilde{k} = \frac{k}{d^2}$, and that $d$ satisfies \eqref{eq:choice_d}.
 Additionally, define $$r = \frac{1}{\sqrt{2\pi}}\frac{2d}{\sqrt{k}} = \sqrt{\frac{2}{\pi \widetilde{k}}}. $$ 
We proceed with several straightforward auxiliary lemmas; in the following, $\overline{(\cdot)}$ denotes the complement of an event, and $x_s \in \{0,1\}^n$ denotes the binary vector with support $s$.

\begin{lemma}
\label{lemmaE}
Under the preceding definitions, we have $r - O(\frac{1}{\widetilde{k}^{1.5}}) \leq \Pr[\overline{E_{s, i}}] \leq r$.
\end{lemma}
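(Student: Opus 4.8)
The plan is to recognize $\Pr[\overline{E_{s,i}}]$ as a one-dimensional Gaussian small-ball probability and then apply \cref{smallballlemma} directly. Since $x_s \in \{0,1\}^n$ has exactly $k$ nonzero entries and $v_i$ has i.i.d.\ $N(0,1)$ coordinates, the inner product $x_s \cdot v_i = \sum_{j \in \supp(x_s)} v_{i,j}$ is a sum of $k$ independent standard Gaussians, hence distributed as $N(0,k)$. Consequently $\overline{E_{s,i}}$ — the event $|x_s \cdot v_i| \le d$ — has probability $\Pr[|X| \le d]$ with $X \sim N(0,k)$, i.e.\ $\Pr[|Z| \le d/\sqrt{k}]$ for a standard Gaussian $Z$.

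Next I would invoke \cref{smallballlemma} with $\sigma = \sqrt{k}$ and $\delta = d/\sqrt{k}$, noting first that $\delta = \widetilde{k}^{-1/2}$ and that in the regime of interest we have $d \le \sqrt{4\log m}$ with $k$ polynomially large, so $\widetilde{k} = k/d^2 \ge 1$ (indeed $\widetilde{k} \to \infty$) and thus $0 < \delta < 1$ as required. The lemma then gives
\[
\sqrt{\tfrac{2}{\pi}}\,\delta - \frac{\delta^3}{\sqrt{2\pi}} \;\le\; \Pr[\overline{E_{s,i}}] \;\le\; \sqrt{\tfrac{2}{\pi}}\,\delta .
\]

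Finally I would rewrite both sides in terms of $r$ and $\widetilde{k}$. The upper bound is exact: $\sqrt{2/\pi}\,\delta = \sqrt{2/\pi}\cdot d/\sqrt{k} = \tfrac{1}{\sqrt{2\pi}}\cdot \tfrac{2d}{\sqrt{k}} = r$. For the correction term in the lower bound, $\delta^3 = (d/\sqrt{k})^3 = (d^2/k)^{3/2} = \widetilde{k}^{-3/2}$, so $\delta^3/\sqrt{2\pi} = O(\widetilde{k}^{-3/2})$ and hence $\Pr[\overline{E_{s,i}}] \ge r - O(\widetilde{k}^{-3/2})$, which is the claim. The argument is essentially routine; the only points that need a little care are checking $\delta < 1$ using the constraints on $d$ in \eqref{eq:choice_d} together with $k$ being polynomially large, and the elementary bookkeeping identities $\sqrt{2/\pi}\,\delta = r$ and $\delta^3 = \widetilde{k}^{-3/2}$. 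I do not anticipate any genuine obstacle in this lemma.
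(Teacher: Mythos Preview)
Your proposal is correct and follows essentially the same approach as the paper: recognize $x_s\cdot v_i\sim N(0,k)$, apply \cref{smallballlemma} with $\delta=d/\sqrt{k}$ and $\sigma^2=k$, and then rewrite the resulting bounds using the identities $\sqrt{2/\pi}\,\delta=r$ and $\delta^3=\widetilde{k}^{-3/2}$. Your added remark verifying $0<\delta<1$ via \eqref{eq:choice_d} is a small extra bit of care, but otherwise the argument is identical.
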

\begin{proof}
Observe that $x_s \cdot v^{(i)}$ is distributed as $N(0,k)$, since it is the sum of $k$ i.i.d.~$N(0,1)$ variables. Hence, for $Z \sim N(0,k)$, we have $\Pr[\overline{E_{s,i}}] = \Pr\left[|Z| \leq d\right]$. Invoking \cref{smallballlemma} with $\delta = \frac{d}{\sqrt{k}}$ and $\sigma^2 = k$, we obtain
\begin{align*}
\sqrt{\frac{2}{\pi}} \cdot \frac{d}{\sqrt{k}} - \frac{\left(\frac{d}{\sqrt{k}}\right)^3}{\sqrt{2\pi}} &\leq \Pr[\overline{E_{s,i}}] \leq \sqrt{\frac{2}{\pi}} \cdot \frac{d}{\sqrt{k}}.
\end{align*}
The lemma follows from our definitions of $\widetilde{k}$ and $r$.
\end{proof}

\begin{lemma}
\label{lemmaF2}
We have $\Pr[F_s]^2 \geq (1 - 2r + r^2)^m$.
\end{lemma}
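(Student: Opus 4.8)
The plan is to observe that Lemma~\ref{lemmaF2} is an immediate consequence of independence together with the one-sided bound already established in Lemma~\ref{lemmaE}. First I would recall that by definition $F_s = \bigcap_{i=1}^{m} E_{s,i}$, where $E_{s,i}$ is the event $|x_s \cdot v_i| > d$. Since the vectors $v_1,\dots,v_m$ are drawn independently (and with identical distribution), the events $E_{s,1},\dots,E_{s,m}$ are mutually independent and identically distributed, so that
\[
\Pr[F_s] = \prod_{i=1}^{m}\Pr[E_{s,i}] = \big(\Pr[E_{s,1}]\big)^m.
\]

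Next I would use Lemma~\ref{lemmaE}, which gives $\Pr[\overline{E_{s,1}}] \le r$, hence $\Pr[E_{s,1}] = 1 - \Pr[\overline{E_{s,1}}] \ge 1 - r$. Note $1 - r \ge 0$ here because $r = \sqrt{2/(\pi\widetilde{k})} < 1$ under the standing assumptions on $d$ (equivalently $\widetilde k = k/d^2$ is larger than a constant), so raising to the $m$-th power preserves the inequality:
\[
\Pr[F_s] \ge (1-r)^m .
\]
Squaring both sides (both are nonnegative) yields $\Pr[F_s]^2 \ge (1-r)^{2m} = \big((1-r)^2\big)^m = (1 - 2r + r^2)^m$, which is exactly the claim.

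There is no real obstacle in this lemma; the only things worth being careful about are (i) justifying that the $E_{s,i}$ are independent across $i$ — this follows purely from the independence of the rows $v_i$, and does not require anything about the coordinates within a single $v_i$ — and (ii) checking that $1-r \ge 0$ so that the exponentiation step is monotone, which is guaranteed by the regime of $d$ fixed in \eqref{eq:choice_d}. The lemma is stated in the form $(1-2r+r^2)^m$ rather than $(1-r)^{2m}$ purely for later convenience, since in the bound on $\alpha$ it will be matched against a comparable expansion of the binomial-type terms appearing in $A$, $B$, and $C$.
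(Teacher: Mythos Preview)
Your proof is correct and follows essentially the same approach as the paper: factor $\Pr[F_s]$ using independence of the rows, apply the upper bound $\Pr[\overline{E_{s,i}}]\le r$ from \cref{lemmaE} to get $\Pr[F_s]\ge(1-r)^m$, then square and rewrite $(1-r)^{2m}$ as $(1-2r+r^2)^m$. Your added remarks on why independence holds and why $1-r\ge 0$ are reasonable clarifications but not substantive departures.
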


\begin{proof}
Since $F_s = \bigcap_{i = 1}^{m}E_{s, i}$, we have
\begin{align*}
    \Pr[F_s]^2 &= (\Pi_{i = 1}^{m}\Pr[E_{s, i}])^2\\
    &\geq (1 - r)^{2m} \text{ [Using \cref{lemmaE}]}\\
    &= (1 - 2r + r^2)^m.
\end{align*}
\end{proof}

\begin{lemma}
\label{lemmaalpha}
The quantity $\alpha = {n \choose k} \Pr[F_s]^2$ satisfies $\log{\alpha} \geq \log{n \choose k} - 2mr - O\big(\frac{m}{\widetilde{k}}\big)$.
\end{lemma}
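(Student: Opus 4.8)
The plan is to read off $\log\alpha$ directly from the definition \eqref{eq:def_alpha} and feed in the bound on $\Pr[F_s]^2$ that \cref{lemmaF2} already provides, then linearize the logarithm. Taking logs in \eqref{eq:def_alpha} gives $\log\alpha = \log\binom{n}{k} + \log\Pr[F_s]^2$, so it suffices to show $\log\Pr[F_s]^2 \ge -2mr - O\big(\tfrac{m}{\widetilde{k}}\big)$, and adding $\log\binom{n}{k}$ back finishes the lemma.

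By \cref{lemmaF2}, $\Pr[F_s]^2 \ge (1-2r+r^2)^m = (1-r)^{2m}$, hence $\log\Pr[F_s]^2 \ge 2m\log(1-r)$. The next step is to certify that $r$ is safely below $1$ — in fact $r = o(1)$. Recall $r = \sqrt{2/(\pi\widetilde{k})}$ with $\widetilde{k} = k/d^2$, and \eqref{eq:choice_d} forces $d \le \sqrt{4\log m}$, so $\widetilde{k} \ge k/(4\log m)$; since $m$ is at most polynomial in $k$ we have $\log m = O(\log k)$ and therefore $\widetilde{k} \to \infty$, giving $r \le \tfrac12$ once $k$ exceeds a constant. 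On the interval $0 \le r \le \tfrac12$ the elementary estimate $\log(1-r) \ge -r - r^2$ holds (from $-\log(1-r) = \sum_{j\ge1} r^j/j \le r + r^2$ in that range).

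Combining these pieces,
\[
\log\Pr[F_s]^2 \;\ge\; 2m\log(1-r) \;\ge\; -2mr - 2mr^2 \;=\; -2mr - \frac{4m}{\pi\widetilde{k}} \;=\; -2mr - O\!\big(\tfrac{m}{\widetilde{k}}\big),
\]
using $r^2 = \tfrac{2}{\pi\widetilde{k}}$; adding $\log\binom{n}{k}$ yields the stated bound. There is essentially no obstacle here: the only mild subtlety is verifying $\widetilde{k}\to\infty$ (equivalently $r\le\tfrac12$) so that the one-line Taylor estimate applies, and this is immediate from the admissible range of $d$ in \eqref{eq:choice_d}.
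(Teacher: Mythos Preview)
Your proof is correct and follows essentially the same route as the paper: take logs, invoke \cref{lemmaF2} to get $\log\Pr[F_s]^2 \ge 2m\log(1-r)$, and use the elementary bound $\log(1-r)\ge -r-r^2$ together with $r^2=\Theta(1/\widetilde{k})$. You merely add an explicit check that $r\le\tfrac12$ (via $\widetilde{k}\to\infty$ from the admissible range of $d$), which the paper leaves implicit.
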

\begin{proof}
Using \cref{lemmaF2} and the inequality $- x - x^2 \leq \log{(1 - x)}$ for $x \in [0, \frac{1}{2})$, we obtain
\begin{align*}
\log{\alpha} \geq \log{{n \choose k}} + m\log{(1 - 2r + r^2)} \geq \log{{n \choose k}} - 2mr - 2mr^2,
\end{align*}
and the lemma follows since $r^2 = \Theta(1/\widetilde{k})$.
\end{proof}

\subsection{Analysis of $A$}
We first obtain a bound on $\Pr_\beta[F_s \cap F_t]$ that applies for the analysis of both $A$ and $B$.

\begin{lemma}
\label{lemmaFbeta}
Given $\beta \leq 0.9$, $\Pr_{\beta}[F_s \cap F_t] \leq \big(1 - 2r + r^2 + O(\beta r^2) + O(\widetilde{k}^{-3/2})\big)^m$.
\end{lemma}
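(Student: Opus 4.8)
The plan is to compute $\Pr_\beta[F_s \cap F_t]$ by conditioning on the coordinates of a single Gaussian vector $v$ that lie in the intersection $x_s \cap x_t$. Write $x_s$ and $x_t$ for two $k$-sparse binary vectors with $|x_s \cap x_t| = \beta k$, and fix one row $v = v_i$ of $V$ with i.i.d.\ $N(0,1)$ entries. Decompose $x_s \cdot v = W + U$ and $x_t \cdot v = W + U'$, where $W = \sum_{j \in x_s \cap x_t} v_j \sim N(0,\beta k)$ is the shared part, and $U = \sum_{j \in x_s \setminus x_t} v_j$, $U' = \sum_{j \in x_t \setminus x_s} v_j$ are independent $N(0,(1-\beta)k)$ random variables, also independent of $W$. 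Conditioned on $W = w$, the events $\overline{E_{s,i}} = \{|w + U| \le d\}$ and $\overline{E_{t,i}} = \{|w + U'| \le d\}$ become independent, each with probability $p(w) := \Pr[\,|N((-w),(1-\beta)k)| \le d\,]$; by the "max over $t$" part of \cref{smallballlemma} applied with $\sigma^2 = (1-\beta)k$ and $\delta = d/\sqrt{(1-\beta)k}$, we get $p(w) \le \sqrt{2/\pi}\cdot d/\sqrt{(1-\beta)k}$ for every $w$. So, conditioning first on $W$, the single-row success event satisfies
\[
\Pr[E_{s,i} \cap E_{t,i}] = \E_W\big[(1 - p(W))^2\big] = 1 - 2\E_W[p(W)] + \E_W[p(W)^2].
\]
Then, since the $m$ rows are independent, $\Pr_\beta[F_s \cap F_t] = \big(1 - 2\E_W[p(W)] + \E_W[p(W)^2]\big)^m$, and it remains to show $1 - 2\E_W[p(W)] + \E_W[p(W)^2] \le 1 - 2r + r^2 + O(\beta r^2) + O(\widetilde k^{-3/2})$.

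For the linear term, the point is that $\E_W[p(W)]$ is close to $r = \frac{1}{\sqrt{2\pi}}\cdot\frac{2d}{\sqrt k}$. Indeed, $\E_W[p(W)] = \Pr[\,|W + U| \le d\,] = \Pr[\,|x_s \cdot v| \le d\,] = \Pr[\overline{E_{s,i}}]$, which by \cref{lemmaE} lies in $[\,r - O(\widetilde k^{-3/2}),\, r\,]$. Hence $-2\E_W[p(W)] \le -2r + O(\widetilde k^{-3/2})$. For the quadratic term I would bound $\E_W[p(W)^2] \le \big(\sup_w p(w)\big)\cdot \E_W[p(W)]$. Using $\sup_w p(w) \le \sqrt{2/\pi}\, d/\sqrt{(1-\beta)k} = r \cdot \sqrt{\frac{\pi}{2}}\cdot\frac{1}{\sqrt{1-\beta}}\cdot\sqrt{\frac\pi2}^{-1}$... more cleanly: $\sup_w p(w) = \frac{r}{\sqrt{1-\beta}}$ up to the constant matching $r$'s definition (since $r$ itself is $\sqrt{2/\pi}\,d/\sqrt k$), and $\E_W[p(W)] \le r$, giving $\E_W[p(W)^2] \le \frac{r^2}{\sqrt{1-\beta}}$. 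Finally, $\frac{1}{\sqrt{1-\beta}} = 1 + O(\beta)$ for $\beta \le 0.9$ (the constant is absorbed since $1-\beta \ge 0.1$ on this range), so $\E_W[p(W)^2] \le r^2 + O(\beta r^2)$. Assembling the three bounds yields $1 - 2\E_W[p(W)] + \E_W[p(W)^2] \le 1 - 2r + r^2 + O(\beta r^2) + O(\widetilde k^{-3/2})$, and raising to the $m$-th power gives the claim.

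The main obstacle I anticipate is being careful that the "error" terms really are of the claimed order uniformly in $\beta \le 0.9$ — in particular that the $O(\beta r^2)$ term genuinely captures the deviation of $\E_W[p(W)^2]$ from $r^2$ (one must not lose a factor like $\frac{1}{1-\beta}$ that blows up as $\beta \to 1$, which is exactly why the hypothesis $\beta \le 0.9$ is imposed and why the regime $\beta > 0.9$ is handled separately in \cref{lemmaC}). A secondary point of care is the direction of the inequality in \cref{lemmaE}: we need the \emph{lower} bound $\E_W[p(W)] \ge r - O(\widetilde k^{-3/2})$ to control $-2\E_W[p(W)]$ from above, and the \emph{upper} bound $\E_W[p(W)] \le r$ together with $\sup_w p(w) \le r/\sqrt{1-\beta}$ to control $\E_W[p(W)^2]$; both are available from \cref{smallballlemma}. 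Everything else is routine substitution of the definitions $r = \sqrt{2/(\pi \widetilde k)}$ and $\widetilde k = k/d^2$.
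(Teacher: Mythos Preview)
Your proof is correct and follows essentially the same route as the paper: both condition on the shared part $W\sim N(0,\beta k)$, use \cref{lemmaE} for the linear term and \cref{smallballlemma} for the quadratic term, and then Taylor-expand in $\beta$ using $\beta\le 0.9$. The only (cosmetic) difference is that you bound $\E_W[p(W)^2]\le (\sup_w p(w))\cdot\E_W[p(W)]\le r^2/\sqrt{1-\beta}$, whereas the paper's \cref{lemmaEbeta} uses the slightly looser $\E_W[p(W)^2]\le (\sup_w p(w))^2\le r^2/(1-\beta)$; both reduce to $r^2+O(\beta r^2)$, so the conclusion is unchanged.
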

\begin{proof}
Since $F_s \cap F_t = \bigcap_{p = 1}^{m}(E_{s, p} \cap E_{t, p})$, we have $\Pr[F_s \cap F_t] = \prod_{p = 1}^{m}\Pr[E_{s, p} \cap E_{t, p}] = \prod_{p = 1}^{m} (1 - \Pr[\overline{E_{s, p}}] - \Pr[\overline{E_{t, p}}] + \Pr[\overline{E_{s, p}} \cap \overline{E_{t, p}}])$. \cref{lemmaE} already bounds $\Pr[\overline{E_{s,p}}]$. 

The following lemma analyzes $\Pr[\overline{E_{s, p}} \cap \overline{E_{t, p}}]$. The event $\overline{E_{s, p}} \cap \overline{E_{t, p}}$ means that the binary vectors $x_s$ and $x_t$ are balanced with respect to $v^{(p)}$. To upper bound this term, we utilise the fact that for both of them to be balanced, if we fix the contribution of the intersection region (i.e.,~of $x_s \cap x_t$ with respect to $v^{(p)}$) then the contribution of the exclusive regions (i.e.,~of $x_s \setminus x_t$ and $x_t \setminus x_s$ with respect to $v^{(p)}$) will be constrained to be in a region of size at most $2 d$.

\begin{lemma}
\label{lemmaEbeta}
Given $|x_s \cap x_t| = \beta k$, where $\beta \leq 0.9$, it holds for each $p \in [m]$ that $\Pr[\overline{E_{s, p}} \cap \overline{E_{t, p}}] \leq r^2(1 + \beta + O(\beta^2))$.
\end{lemma}
\begin{proof}
Let $f_{\beta k}(z)$ denote the density function of $N(0, \beta k)$ at point $z$.  We have
\begin{align*}
    \Pr[\overline{E_{s, p}} \cap \overline{E_{t, p}}] &= \Pr\left[|x_s \cdot v^{(p)}| \leq d, |x_t \cdot v^{(p)}| \leq d\right]\\
    &= \int_{-\infty}^{\infty}f_{\beta k}(i)\Pr\left[-z - d \leq N(0, k - \beta k) \leq -z + d\right]^2 \,dz.\\
\end{align*}
Defining $q = \max_{z \in \mathbb{R}}\Pr\big[-z - d \leq N(0, k - \beta k) \leq -z + d\big]$, we have from \cref{smallballlemma} that $q \leq 2\frac{d}{\sqrt{(k - \beta k)2\pi}}=\frac{r}{\sqrt{1-\beta}}$.
Thus, for $\beta \leq 0.9$, a Taylor expansion gives
\[
    \Pr[\overline{E_{s, p}} \cap \overline{E_{t, p}}] 
    \leq \int_{-\infty}^{\infty}f_x(z)q^2 \,dz
    \leq q^2 
    \leq \frac{r^2}{1 - \beta}= r^2(1 + \beta + O(\beta^2)).
    \]
\end{proof}
We can now conclude the proof of \cref{lemmaFbeta}:
\begin{align*}
\Pr[F_s \cap F_t] &= \Pi_{p = 1}^{m}\Pr[E_{s, p} \cap E_{t, p}]\\
&= \Pi_{p = 1}^{m} (1 - \Pr[\overline{E_{s, p}}] - \Pr[\overline{E_{t, p}}] + \Pr[\overline{E_{s, p}} \cap \overline{E_{t, p}}]) \\
&\leq \left(1 - 2r + O\left(\frac{1}{\widetilde{k}^{1.5}}\right) + r^2(1 + \beta + O(\beta^2))\right)^m  \text{ [Using \cref{lemmaE} and \cref{lemmaEbeta}]}\\
&\leq \left(1 - 2r + r^2 + O(\beta r^2) + O\left(\frac{1}{\widetilde{k}^{1.5}}\right)\right)^m.
\end{align*}
\end{proof}

We are now ready to prove our desired upper bound on $A$.
\begin{lemma}
\label{lemmaA}
Given $m \le c \frac{k^{1.5}}{d^3}$ for some sufficiently small constant $c$, and $k \le n^{\frac{2}{3} - \epsilon}$, it holds that $A \leq 1.5\alpha$.
\end{lemma}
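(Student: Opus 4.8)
The plan is to exploit that, on the range $\beta \in [0,\widetilde{k}^{-1/2}]$ defining $A$, the joint failure probability $\Pr_\beta[F_s \cap F_t]$ exceeds the product $\Pr[F_s]^2$ only by a factor $(1+O(\widetilde{k}^{-3/2}))^m$; summing over all such $\beta$ then shows $A \le \alpha\,(1+O(\widetilde{k}^{-3/2}))^m$, and the hypothesis $m \le c\,k^{1.5}/d^3 = c\,\widetilde{k}^{3/2}$ makes this factor at most $e^{O(c)} \le \tfrac{3}{2}$.

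First I would derive a bound on $\Pr_\beta[F_s \cap F_t]$ that is uniform over the range of $\beta$ in $A$. By \cref{lemmaFbeta}, for every $\beta \le 0.9$ we have $\Pr_\beta[F_s\cap F_t] \le \big(1 - 2r + r^2 + O(\beta r^2) + O(\widetilde{k}^{-3/2})\big)^m$. For $\beta \le \widetilde{k}^{-1/2}$, since $r^2 = \tfrac{2}{\pi\widetilde{k}}$, the term $O(\beta r^2)$ is itself $O(\widetilde{k}^{-3/2})$, so, writing $1-2r+r^2 = (1-r)^2$ and factoring it out,
\[
\Pr_\beta[F_s\cap F_t] \;\le\; \big((1-r)^2 + O(\widetilde{k}^{-3/2})\big)^m \;=\; (1-r)^{2m}\,\big(1 + O(\widetilde{k}^{-3/2})\big)^m ,
\]
where the factorization uses $(1-r)^{-2} = O(1)$, valid since $\widetilde{k}$ is large: $d \le \sqrt{4\log m}$ together with $m \le c\,k^{1.5}/d^3$ forces $d = O(\sqrt{\log k})$ and hence $\widetilde{k} = k/d^2 = \Omega(k/\log k)$ (and, in particular, $\widetilde{k}^{-1/2} < 0.9$, so \cref{lemmaFbeta} does apply to every $\beta$ in the sum).

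Next I would sum over $\beta$. Substituting into $v_\beta = {k \choose \beta k}{n-k \choose k-\beta k}\,\Pr_\beta[F_s\cap F_t]$ and pulling the uniform bound out of the sum,
\[
A \;=\; \sum_{\beta \in [0,\widetilde{k}^{-1/2}]} v_\beta \;\le\; (1-r)^{2m}\big(1 + O(\widetilde{k}^{-3/2})\big)^m \!\! \sum_{\beta \in [0,\widetilde{k}^{-1/2}]} \!\! {k \choose \beta k}{n-k \choose k-\beta k} \;\le\; {n \choose k}\,(1-r)^{2m}\big(1 + O(\widetilde{k}^{-3/2})\big)^m ,
\]
the last step being a partial sum of the Vandermonde identity $\sum_{j=0}^{k}{k \choose j}{n-k \choose k-j} = {n \choose k}$. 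On the other hand \cref{lemmaF2} gives $\alpha = {n \choose k}\Pr[F_s]^2 \ge {n \choose k}(1-r)^{2m}$, so
\[
\frac{A}{\alpha} \;\le\; \big(1 + O(\widetilde{k}^{-3/2})\big)^m \;\le\; \exp\!\big(O(m\,\widetilde{k}^{-3/2})\big).
\]
Since $\widetilde{k}^{3/2} = k^{1.5}/d^3$, the hypothesis $m \le c\,k^{1.5}/d^3$ yields $m\,\widetilde{k}^{-3/2} \le c$, whence $A/\alpha \le e^{O(c)} \le \tfrac{3}{2}$ for a small enough absolute constant $c$.

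I expect the only real subtlety to be checking that the $O(\beta r^2)$ error in \cref{lemmaFbeta} is truly dominated by $O(\widetilde{k}^{-3/2})$ across the whole range $\beta \le \widetilde{k}^{-1/2}$ — this is exactly why that threshold is where $A$ ends and $B$ begins — together with being careful about the numerical constant: the slack up to $1.5\alpha$ (rather than $(1+o(1))\alpha$) is deliberate, so that combined with the companion bounds $B \le 0.25\alpha$ and $C \le 0.25\alpha$ the denominator in de Caen's inequality stays at $2\alpha$, delivering the target $\Pr[\cup_s F_s] \ge 1/2$. The hypothesis $k \le n^{2/3-\epsilon}$ enters here only to keep $d$, and thus $\widetilde{k}$, in the regime where these asymptotic estimates hold; its more essential role is in the analyses of $B$ and $C$.
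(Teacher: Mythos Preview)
Your proof is correct and follows essentially the same route as the paper's: both use \cref{lemmaFbeta} to observe that for $\beta \le \widetilde{k}^{-1/2}$ the $O(\beta r^2)$ term is absorbed into $O(\widetilde{k}^{-3/2})$, bound the ratio $\Pr_\beta[F_s\cap F_t]/\Pr[F_s]^2$ by $(1+O(\widetilde{k}^{-3/2}))^m$, invoke Vandermonde on the binomial sum, and conclude via $m \le c\,\widetilde{k}^{3/2}$. Your additional remark justifying that $\widetilde{k}$ is large (via $d \le \sqrt{4\log m}$ and the assumed bound on $m$) is a welcome clarification the paper leaves implicit.
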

\begin{proof}
Defining $l = (1 - \beta)k$, we have
\begin{align}\label{eqn:A}
    A &\leq \sum_{\beta \in [0, \widetilde{k}^{-0.5}]}{k \choose (1 - \beta)k}{n - k \choose (1 - \beta)k}\Pr_{\beta}[F_s \cap F_t] \leq \sum_{l = k - \widetilde{k}^{0.5}}^{k}{k \choose l}{n - k \choose l}\Pr_{\beta}[F_s \cap F_t],
\end{align}
Since $\beta \leq \widetilde{k}^{-0.5}$, we have $O(\beta r^2) = O(\frac{\beta}{\widetilde{k}}) = O(\frac{1}{\widetilde{k}^{1.5}})$.  Hence, \cref{lemmaF2} and \cref{lemmaFbeta} respectively yield:
\begin{align*}
    \Pr[F_s]^2 &\geq \left(1 - 2r + r^2\right)^m, \\
    \Pr_{\beta}[F_s \cap F_t] &\leq \left(1 - 2r + r^2 + O\left(\frac{1}{\widetilde{k}^{1.5}}\right)\right)^m,
\end{align*}
and combining these gives the following for some constant $C$:
\begin{align*}
    \frac{\Pr_{\beta}[F_s \cap F_t]}{\Pr[F_s]^2} &\leq \left(1 + \frac{O(\widetilde{k}^{-1.5})}{1 - 2r + r^2}\right)^m \leq \left(1 + C\widetilde{k}^{-1.5}\right)^m \leq e^{-mC\widetilde{k}^{-1.5}}. 
\end{align*}
Taking $m \le \frac{1}{3C}\widetilde{k}^{1.5}$, it follows that $\Pr_\beta[F_s \cap F_t] \leq e^{\frac{1}{3}}\Pr[F_s]^2 \leq 1.5\Pr[F_s]^2$. Substituting into \eqref{eqn:A} and observing that
\begin{align*}
    \sum_{l = k - \widetilde{k}^{0.5}}^{k} {k \choose l}{n - k \choose l} \leq \sum_{l = 0}^{k} {k \choose l}{n - k \choose l} = {n \choose k},
\end{align*}
we obtain $A \leq {n \choose k}1.5\Pr[F_s]^2= 1.5\alpha$. 
\end{proof}

\subsection{Analysis of $B$}

The main step in the analysis of $B$ is showing that $\nu_{\beta} \leq \frac{0.25\alpha}{k}$ for all $\beta \in [\widetilde{k}^{-0.5}, 0.9]$. It will be convenient to study the logarithm (base $e$) of $\nu_\beta$.


\begin{lemma}
\label{lemmavbeta}
Given $\beta \leq 0.9$, for large enough $k$, we have $\log{\nu_{\beta}} \leq (1 + \epsilon)kH_2(\beta) + \log{n - k \choose (1 - \beta)k} - 2mr + O\big(\frac{m}{\widetilde{k}}\big)$.
\end{lemma}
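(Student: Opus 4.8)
The plan is to expand $\log v_\beta$ into its three constituent pieces using the definition \eqref{eq:def_v},
\[
\log v_\beta = \log {k \choose (1-\beta)k} + \log {n-k \choose (1-\beta)k} + \log \Pr_\beta[F_s \cap F_t],
\]
and to bound the first and third terms separately, leaving the middle binomial term untouched since it already appears verbatim in the claimed bound.

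For the first term I would use the standard entropy estimate ${k \choose j} \le e^{k H_2(j/k)}$ (valid in nats for every $0 \le j \le k$) together with the symmetry $H_2(1-\beta) = H_2(\beta)$, which gives $\log {k \choose (1-\beta)k} \le k H_2(\beta) \le (1+\epsilon) k H_2(\beta)$, where the last step just uses $\epsilon > 0$ and $H_2 \ge 0$. Using the exact inequality rather than the asymptotic $\log{k \choose \beta k} = (1+o(1)) k H_2(\beta)$ is convenient because it holds across the whole range $\beta \in [0,1]$, and hence covers the small-$\beta$ values relevant in \cref{lemmaA} as well as the constant-order $\beta$ relevant in \cref{lemmaB}.

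For the third term I would invoke \cref{lemmaFbeta}, which for $\beta \le 0.9$ gives $\Pr_\beta[F_s \cap F_t] \le \big(1 - 2r + r^2 + O(\beta r^2) + O(\widetilde k^{-3/2})\big)^m$. Taking logarithms and applying $\log(1-x) \le -x$ — legitimate because, once $\widetilde k$ is large, $r = \sqrt{2/(\pi \widetilde k)}$ is small and the bracketed quantity lies in $(0,1)$ — yields $\log \Pr_\beta[F_s \cap F_t] \le -2mr + m\big(r^2 + O(\beta r^2) + O(\widetilde k^{-3/2})\big)$. Since $r^2 = \tfrac{2}{\pi \widetilde k}$, since $\beta \le 0.9$ forces $\beta r^2 = O(1/\widetilde k)$, and since $\widetilde k = k/d^2 \to \infty$ (the hypothesis on $m$ makes $m = \poly(k)$, so $d^2 \le 4\log m = O(\log k) \ll k$), all three error terms collapse into a single $O(1/\widetilde k)$, leaving $\log \Pr_\beta[F_s \cap F_t] \le -2mr + O(m/\widetilde k)$. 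Summing the three bounds gives exactly the statement of \cref{lemmavbeta}.

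I do not anticipate a serious obstacle: the lemma is essentially a repackaging of \cref{lemmaFbeta} and the entropy bound on ${k \choose \beta k}$. The only thing requiring mild care is the bookkeeping that absorbs $r^2$, $\beta r^2$, and $\widetilde k^{-3/2}$ into one $O(1/\widetilde k)$ term, together with the verification — immediate from $d \le \sqrt{4\log m}$ and $m = \poly(k)$ — that $\widetilde k$ is large enough for $\log(1-x) \le -x$ to apply and for $\widetilde k^{-3/2} \le \widetilde k^{-1}$ to hold.
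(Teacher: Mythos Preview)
Your proposal is correct and follows essentially the same approach as the paper's proof: expand $\log v_\beta$, leave the ${n-k \choose (1-\beta)k}$ term intact, bound $\log{k\choose(1-\beta)k}$ by an entropy term, and bound $\log\Pr_\beta[F_s\cap F_t]$ via \cref{lemmaFbeta} together with $\log(1-x)\le -x$, then collapse the $r^2$, $\beta r^2$, and $\widetilde{k}^{-3/2}$ errors into a single $O(m/\widetilde{k})$. The only cosmetic difference is that the paper invokes the asymptotic $\log{k\choose\beta k}=(1+o(1))kH_2(\beta)$ and absorbs the $o(1)$ into the $(1+\epsilon)$ factor for large $k$, whereas you use the exact inequality $\log{k\choose\beta k}\le kH_2(\beta)$ and then weaken trivially; your variant is slightly cleaner since it does not require large $k$ at that step.
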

\begin{proof}
We have
\begin{align*}
    \nu_{\beta} &= {k \choose (1 - \beta)k}{n - k \choose (1 - \beta)k}\Pr_{\beta}[F_s \cap F_t] \\
    &\leq {k \choose (1 - \beta)k}{n - k \choose (1 - \beta)k}\left(1 - 2r + r^2 + O(\beta r^2) + O\Big(\frac{1}{\widetilde{k}^{1.5}}\Big)\right)^m. \text{ [Using \cref{lemmaFbeta}]}
\end{align*}
Recalling the asymptotic results on binomial coefficients as specified in \cref{smallklognchoosek}, we have the following (once $k$ is large enough so that $1 + o(1) \leq 1 + \epsilon$):
\begin{align*}
    \log{\nu_{\beta}} &\leq (1 + \epsilon)kH_2(\beta) + \log{n - k \choose (1 - \beta)k} + m\log{\left(1 - 2r + r^2 + O(\beta r^2) + O\Big(\frac{1}{\widetilde{k}^{1.5}}\Big)\right)}\\
    &\leq (1 + \epsilon)kH_2(\beta) + \log{n - k \choose (1 - \beta)k} - 2mr + mr^2 + O(m\beta r^2) + O\left(\frac{m}{\widetilde{k}^{1.5}}\right) \\
        &\hspace*{7.5cm}\text{ [Using $\log{(1 - x)} \leq -x$]} \\
    &\leq (1 + \epsilon)kH_2(\beta) + \log{n - k \choose (1 - \beta)k} - 2mr + O\left(\frac{m}{\widetilde{k}}\right). \text{ [Using $r = O\big( \frac{1}{\sqrt{\widetilde{k}}} \big)$]}
\end{align*}
\end{proof}

We also need the following estimate of a difference of binomial coefficients.
\begin{lemma}
\label{lemmalogsub}
$\log{n \choose k} - \log{n - k \choose (1 - \beta)k} \geq \beta k \log{\left(\frac{n - k}{k}\right)}$.
\end{lemma}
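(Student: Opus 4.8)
The plan is to prove the equivalent multiplicative form, namely ${n \choose k} \ge \big(\frac{n-k}{k}\big)^{\beta k}{n-k \choose (1-\beta)k}$, by writing the ratio of the two binomial coefficients as a quotient of falling factorials and then bounding each piece crudely. Write $a = \beta k$ and $b = (1-\beta)k$, so that $a+b = k$; in the regime of interest $k \le n^{2/3-\epsilon}$, so $0 \le b \le k \le n-k$, and in particular ${n-k \choose b}$ is a genuine (nonzero) binomial coefficient with positive constituent factors. (If $b > n-k$ the claimed inequality holds trivially, since its right-hand side vanishes.)

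First I would expand
\[
 \frac{ {n \choose k} }{ {n-k \choose b} } \;=\; \frac{n!\, b!\, (n-k-b)!}{k!\, \big((n-k)!\big)^2} \;=\; \frac{b!}{k!}\cdot\frac{\prod_{i=0}^{k-1}(n-i)}{\prod_{i=0}^{b-1}(n-k-i)},
\]
using $\frac{n!}{(n-k)!} = \prod_{i=0}^{k-1}(n-i)$ and $\frac{(n-k)!}{(n-k-b)!} = \prod_{i=0}^{b-1}(n-k-i)$.

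Then I would bound the three factors separately: (i) $\prod_{i=0}^{k-1}(n-i) \ge (n-k)^k$, since each factor is at least $n-k+1 > n-k$; (ii) $\prod_{i=0}^{b-1}(n-k-i) \le (n-k)^b$, since each factor is at most $n-k$ (this uses $b \le n-k$ to keep every factor positive); and (iii) $\frac{b!}{k!} \ge k^{-(k-b)} = k^{-a}$, since $\frac{k!}{b!} = (b+1)(b+2)\cdots k$ is a product of $k-b = a$ integers each at most $k$. Multiplying these estimates gives $\frac{ {n \choose k} }{ {n-k \choose b} } \ge \frac{(n-k)^{k-b}}{k^{a}} = \big(\frac{n-k}{k}\big)^{a}$, and taking logarithms (base $e$) yields the claim with $a = \beta k$. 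There is no real obstacle here — it is a routine elementary binomial estimate — and the only thing needing a little care is keeping the direction of each of the three inequalities consistent and verifying positivity of the factors in step (ii).
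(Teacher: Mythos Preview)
Your proof is correct and takes essentially the same approach as the paper: write the ratio ${n\choose k}\big/{n-k\choose (1-\beta)k}$ as an explicit product of integer ratios and bound each factor crudely to obtain $\big(\frac{n-k}{k}\big)^{\beta k}$. The only difference is bookkeeping — the paper groups the factorials into the two products $\prod_{a}\frac{n-a}{n-k-a}$ and $\prod_{b}\frac{n-k+\beta k-b}{k-b}$ (then drops the first as being $\ge 1$), whereas you separate $\frac{b!}{k!}$, $\prod_i(n-i)$, and $\prod_i(n-k-i)$; both decompositions lead to the same bound with equally elementary estimates.
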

\begin{proof}
    We have
\begin{align*}
    \frac{{n \choose k}}{{n - k \choose (1 - \beta)k}} 
    &= \prod_{a = 0}^{k - \beta k - 1}\left(\frac{n - a}{n - k - a}\right) \times \prod_{b = 0}^{\beta k - 1}\left(\frac{n - k + \beta k - b}{k - b}\right)\\
    &= \prod_{a = 0}^{k - \beta k - 1}\left(1 + \frac{k}{n - k - a}\right) \times \prod_{b = 0}^{\beta k - 1}\left(\frac{n - k + \beta k - b}{k - b}\right)\\
    &\geq \left(1 + \frac{k}{n}\right)^{k - \beta k} \times \left(\frac{n - k}{k}\right)^{\beta k}.
\end{align*}
Taking the log, it follows that
\begin{align*}
    \log{n \choose k} - \log{n - k \choose (1 - \beta)k} &\geq (k - \beta k)\log{\left(1 + \frac{k}{n}\right)} + \beta k \log{\left(\frac{n - k}{k}\right)} 
    \geq \beta k \log{\left(\frac{n - k}{k}\right)}.
\end{align*}
\end{proof}

We can now obtain the key result of this subsection, namely, a bound on $B$.
\begin{lemma}
\label{lemmaB}
Given $m \le c \frac{k^{1.5}}{d}\log{k}$ for a small enough constant $c$, $k \le n^{\frac{2}{3} - \epsilon}$, and $d$ satisfying \eqref{eq:choice_d},
it holds that $B \leq 0.25\alpha$.
\end{lemma}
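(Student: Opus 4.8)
The plan is to show that for every relevant $\beta \in [\widetilde{k}^{-0.5}, 0.9]$, the individual term $v_\beta$ is at most $\frac{0.25\alpha}{k}$; since the sum $B = \sum_\beta v_\beta$ ranges over at most $k$ values of $\beta$, this immediately yields $B \le 0.25\alpha$. Thus the core task reduces to a single inequality comparing $\log v_\beta$ against $\log \alpha - \log k$. I would begin by recalling the two bounds already proved: from \cref{lemmavbeta}, $\log v_\beta \le (1+\epsilon)k H_2(\beta) + \log\binom{n-k}{(1-\beta)k} - 2mr + O(m/\widetilde{k})$, and from \cref{lemmaalpha}, $\log\alpha \ge \log\binom{n}{k} - 2mr - O(m/\widetilde{k})$. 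Subtracting, the $-2mr$ terms cancel exactly (this cancellation is the whole point of tracking constants carefully, as flagged in the overview), and we are left needing
\[
(1+\epsilon)k H_2(\beta) + \log\binom{n-k}{(1-\beta)k} - \log\binom{n}{k} + O\!\left(\frac{m}{\widetilde{k}}\right) \le -\log k.
\]

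Next I would invoke \cref{lemmalogsub}, which gives $\log\binom{n}{k} - \log\binom{n-k}{(1-\beta)k} \ge \beta k \log\frac{n-k}{k}$, turning the requirement into
\[
(1+\epsilon)k H_2(\beta) - \beta k \log\frac{n-k}{k} + O\!\left(\frac{m}{\widetilde{k}}\right) \le -\log k.
\]
The key structural observation is that $H_2(\beta) \le \beta\log\frac{1}{\beta} + \beta \le \beta\log\frac{e}{\beta}$, and since $\beta \ge \widetilde{k}^{-0.5} = d/\sqrt k$ with $d \le \sqrt{4\log m}$, we have $1/\beta \le \sqrt k/d \le \sqrt k$, so $(1+\epsilon)k H_2(\beta) \le (1+\epsilon)\beta k \log(e\sqrt k) = O(\beta k \log k)$. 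Meanwhile $\beta k \log\frac{n-k}{k} = \Omega(\beta k \log n)$ using $k \le n^{2/3-\epsilon}$ (so $\frac{n-k}{k} \ge n^{1/3+\epsilon(1+o(1))}$, hence $\log\frac{n-k}{k} = \Theta(\log n)$). Since $\log n \gg \log k$ in this regime — indeed $\log k \le (\tfrac23-\epsilon)\log n$ — the negative term $-\beta k\log\frac{n-k}{k}$ dominates $(1+\epsilon)kH_2(\beta)$ by a constant factor, so their sum is at most $-c_0 \beta k \log n$ for some constant $c_0 > 0$. This is $\le -c_0\widetilde{k}^{-0.5}k\log n = -c_0\sqrt k\,d^{-1}\cdot\log n\cdot d^2 \cdot\ldots$; more simply, it is $\le -c_0 \sqrt k \log n$ using $\beta k \ge \widetilde k^{-0.5}k = \sqrt{k}\,d \ge \sqrt k$, which is certainly $\le -\log k$ for large $k$.

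The remaining piece is the error term: I need $O(m/\widetilde{k}) = O(m d^2/k)$ to be dominated by the slack in the inequality, i.e.~$O(md^2/k) \le c_0 \sqrt k \log n - \log k$, for which it suffices that $md^2/k = O(\sqrt k \log k)$, i.e.~$m = O(k^{1.5}\log k / d^2)$. Here I would use the hypothesis $m \le c\,k^{1.5}\log k / d$ together with $d \le \sqrt{4\log m} = O(\sqrt{\log k})$ (valid in this regime, as established in the proof of \cref{maintheorem}), which gives $md^2/k \le c\,k^{0.5}\,d\log k = O(k^{0.5}(\log k)^{1.5})$, comfortably below $\sqrt k \log n$ once $c$ is small and $k$ large. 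The main obstacle I anticipate is not any single estimate but rather bookkeeping the constants and the $\epsilon$-dependence so that the bound is uniform over the whole range $\beta \in [\widetilde{k}^{-0.5}, 0.9]$ — in particular verifying that the ratio $\log\frac{n-k}{k} \,/\, H_2(\beta)$ stays bounded below by a constant strictly exceeding $1+\epsilon$ even as $\beta \to 0.9$ (where $H_2(\beta)$ is largest, of order $\log 2$, while $\log\frac{n-k}{k}$ remains $\Omega(\log n)$, so this is fine but should be checked explicitly). A secondary subtlety is confirming that $d$ satisfying \eqref{eq:choice_d} indeed forces $\widetilde k^{-0.5} \le 0.9$ so that the interval $[\widetilde k^{-0.5},0.9]$ is nonempty and the argument is not vacuous.
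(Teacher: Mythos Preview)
Your overall plan and the first half of the argument are correct and match the paper: subtract the bounds from \cref{lemmavbeta} and \cref{lemmaalpha}, invoke \cref{lemmalogsub}, and reduce to showing $f(\beta) := \beta k\log\frac{n-k}{k} - (1+\epsilon)kH_2(\beta)$ exceeds $\log(4k) + O(m/\widetilde{k})$ uniformly over $\beta \in [\widetilde{k}^{-0.5}, 0.9]$.

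The gap is in the step where you claim domination by a constant factor. You write that $1/\beta \le \sqrt{k}/d \le \sqrt{k}$, citing $d \le \sqrt{4\log m}$ --- but that is an \emph{upper} bound on $d$, so it cannot yield $\sqrt{k}/d \le \sqrt{k}$. Nothing in the hypotheses forces $d \ge 1$; indeed $d$ may be as small as $k^{\epsilon - 0.5}\sqrt{4\log m} \ll 1$. If instead you use that second condition in \eqref{eq:choice_d} to bound $\log(1/\beta) \le (1-\epsilon)\log k + O(1)$, the comparison becomes $(1-\epsilon^2)\beta k\log k$ against $(1/3+\epsilon)\beta k\log n$, i.e.\ roughly $(1-\epsilon^2)(2/3-\epsilon)$ against $(1/3+\epsilon)$; for small $\epsilon$ (try $\epsilon = 0.1$) the first quantity is the larger one, so the ``domination by a constant factor'' fails and you do not get $f(\beta)\ge c_0\beta k\log n$. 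The same unjustified $d \ge 1$ is reused in the line $\beta k \ge \sqrt{k}\,d \ge \sqrt{k}$.

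The repair is to use the \emph{first} condition in \eqref{eq:choice_d}, namely $d^{1+\epsilon} \ge 4k^{1.5(1+\epsilon)}/n$, which gives $(1+\epsilon)\log(\sqrt{k}/d) \le \log n - (1+\epsilon)\log k - \log 4$ and hence a bracket $\log\frac{n-k}{k} - (1+\epsilon)\log(e/\beta)$ of size at least $\epsilon\log k - O(1)$ (not $c_0\log n$). This yields $f(\widetilde{k}^{-0.5}) \ge \Omega\big(\epsilon\, d\sqrt{k}\log k\big)$ rather than $\Omega(\sqrt{k}\log n)$; one then checks this is $\gg \log k$ via the second condition in \eqref{eq:choice_d}, which gives $d\sqrt{k} \ge k^{\epsilon}\sqrt{4\log m}$. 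The paper carries this out by first proving $f$ is increasing on $[\widetilde{k}^{-0.5}, 0.9]$ (\cref{fincreasing}), so only the left endpoint matters, and then evaluating $f(\widetilde{k}^{-0.5})$ explicitly. Your uniform-in-$\beta$ route via $H_2(\beta) \le \beta\log(e/\beta)$ would also work once the $d$-dependence is retained correctly, and your treatment of the $O(m/\widetilde{k})$ term then goes through as written, since $md^2/k \le c\,d\sqrt{k}\log k$ is precisely the comparison needed against the corrected main term.
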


\begin{proof}
We introduce the function
\begin{equation}
    f(\beta) = \beta k \log{\Big(\frac{n - k}{k}\Big)} - (1 + \epsilon)kH_{2}(\beta) \label{eq:f_def}
\end{equation}
which yields 
\begin{equation}
    \frac{\partial f}{\partial \beta} (\beta) = k\log{\bigg(\frac{n - k}{k} {\Big(\frac{\beta}{1 - \beta}\Big)}^{1 + \epsilon}\bigg)}. \label{eq:f_deriv}
\end{equation}
For $\beta \in [\widetilde{k}^{-0.5}, 0.9]$, we have from \cref{lemmaalpha} and \cref{lemmavbeta} that
\begin{align*}
    \log{\alpha} - \log{\nu_\beta} &\geq \left(\log{n \choose k} - \log{n - k \choose (1 - \beta)k}\right) - (1 + \epsilon)kH_2(\beta) - O\left(\frac{m}{\widetilde{k}}\right) \\
    &\geq \beta k \log{\left(\frac{n - k}{k}\right)} - (1 + \epsilon)kH_2(\beta) - O\left(\frac{m}{\widetilde{k}}\right) \text{ [Using \cref{lemmalogsub}]}\\
    &= f(\beta) - O\left(\frac{m}{\widetilde{k}}\right).
\end{align*}
In \cref{fincreasing} below, we show that $f(\beta)$ is an increasing function in the range $[\widetilde{k}^{-0.5}, 0.9]$ with the minimum at $f(\widetilde{k}^{-0.5})$. Moreover, recalling $\widetilde{k} = \frac{k}{d^2}$, we have
\begin{align*}
    kH_2(\widetilde{k}^{-0.5}) &= k\left(k^{-0.5}d\log{\left(\frac{1}{k^{-0.5}d}\right)} + \left(1 - k^{-0.5}d\right)\log{\left(\frac{1}{1 - k^{-0.5}d}\right)}\right)\\
    &= k^{0.5}d\left(\log{\left(\frac{1}{k^{-0.5}d}\right)} - \log{\left(\frac{1}{1 - k^{-0.5}d}\right)}\right) + k\log{\left(\frac{1}{1 - k^{-0.5}d}\right)}\\
    &\leq k^{0.5}d\left(\log{\left(\frac{1}{k^{-0.5}d}\right)}\right) - k\log{\left(1 - k^{-0.5}d\right)}\\
    &\leq k^{0.5}d(0.5\log{k} - \log{d}) - k\left(-k^{-0.5}d - k^{-1}d^2\right) \\
        & \hspace{2cm} \text{ [Using $\log{(1 - x)} \geq - x - x^2$, when $x \in [0, \frac{1}{2}$]]}\\
    &= k^{0.5}d(1 + 0.5\log{k} - \log{d}) + d^2\\
    &\leq k^{0.5}d\log{\left(\frac{2k^{0.5}}{d}\right)} + d^2.
\end{align*}
Hence,
\begin{align*}
    f(\widetilde{k}^{-0.5}) &= \widetilde{k}^{-0.5}k\log{\left(\frac{n - k}{k}\right)} - (1 + \epsilon)kH_2(\widetilde{k}^{-0.5}) \\
    &\geq k^{0.5}d\log{\left(\frac{n - k}{k}\right)} - (1 + \epsilon)\left(k^{0.5}d\log{\left(\frac{2k^{0.5}}{d}\right)} + d^2\right)\\
    &\geq k^{0.5}d\log{\left(\frac{n}{2k}\right)} - k^{0.5}d\log{\left(\left(\frac{2k^{0.5}}{d}\right)^{1 + \epsilon}\right)} - 2d^2 \text{ [Since $\epsilon < 1$]}\\
    &= k^{0.5}d\log{\left(\frac{nd^{1 + \epsilon}}{2^{2 + \epsilon}k^{1.5 + 0.5\epsilon}}\right)} - 2d^2 \\
    &\geq k^{0.5}d\log{\left(\frac{nd^{1 + \epsilon}}{4k^{1.5 + 0.5\epsilon}}\right)} - 2d^2. \text{ [Since $\epsilon > 0$]}\\
\end{align*}

Using the condition $d^{1 + \epsilon} > \frac{4k^{1.5 + 0.5\epsilon + \epsilon}}{n}$ in \eqref{eq:choice_d}, we additionally have
\begin{gather*}
    \frac{nd^{1 + \epsilon}}{4k^{1.5 + 0.5\epsilon}} > k^{\epsilon}\\
    \implies \log{\left(\frac{nd^{1 + \epsilon}}{4k^{1.5 + 0.5\epsilon}}\right)} > \epsilon\log{k}.
\end{gather*} 
Therefore,
\begin{align*}
    \log{\alpha} - \log{\nu_{\beta}} &\geq f(\widetilde{k}^{-0.5}) - O\left(\frac{m}{\widetilde{k}}\right)\\
    &\geq \epsilon k^{0.5}d\log{k} - 2d^2 - O\left(\frac{m}{\widetilde{k}}\right).
\end{align*}
Since $m < c\frac{k^{1.5}}{d}\log{k}$ for a small enough constant $c$, we can select $c$ such that $O\big(\frac{m}{\widetilde{k}}\big) \le \frac{\epsilon}{2} \frac{k^{1.5}}{d}\log{k} \times \frac{d^2}{k} = \frac{\epsilon}{2} k^{0.5} d \log{k}$
    
Therefore, the above expression simplifies to
\begin{align*}
    \log{\alpha} - \log{\nu_{\beta}} &\geq \epsilon k^{0.5}d\log{k} - 2d^2 - O\left(\frac{m}{\widetilde{k}}\right)\\
    &\geq \frac{\epsilon}{2}k^{0.5}d\log{k} - 2d^2
\end{align*}
We analyse the two terms individually as follows, again considering the conditions on $d$ in \eqref{eq:choice_d}:
\begin{itemize}
    \item For the first term,
    \begin{align*}
        \frac{\epsilon}{2} k^{0.5}d\log{k} 
        &= \frac{\epsilon}{2}k^{\epsilon}\sqrt{4\log{m}}\log{k} \text{ [Since $d \geq k^{\epsilon - 0.5}\sqrt{4\log{m}}$]}\\
        &\geq \Omega(k^{\epsilon}).
    \end{align*}
    
    \item For the second term, using $d \le \sqrt{4\log{m}}$, we obtain
    \begin{align*}
       2d^2 &\le 8\log{m}\\
       &= O(\log{k}). \text{ [Since $d \geq k^{\epsilon - 0.5}\sqrt{4\log{m}}$ and $m = O\big(\frac{k^{1.5}}{d}\log{k}\big)$]}
    \end{align*} 
\end{itemize}
Combining these, we have
$$\frac{\epsilon}{2} k^{0.5}d\log{k} - 2d^2 = \Omega(k^{\epsilon}) \geq \log{k} + \log{4},$$
where the last step holds for $k$ exceeding a sufficiently large constant.  It follows that, for all $\beta \in [\widetilde{k}^{-0.5}, 0.9]$, we have $\log{\alpha} - \log{\nu_{\beta}} \geq \log{k} + \log{4}$, or equivalently $\nu_{\beta} \leq 0.25\frac{\alpha}{k}$.  Summing over the relevant $\beta$ values then gives
\begin{align*}
    B = \sum_{\beta \in [\widetilde{k}^{-0.5}, 0.9]}\nu_{\beta} \leq k \times 0.25\frac{\alpha}{k}
    = 0.25\alpha.
\end{align*}
\end{proof}

In the above analysis, we focused on $\widetilde{k}^{-0.5}$ and then used monotonicity to handle all $\beta \in [\widetilde{k}^{-0.5}, 0.9]$.  The following lemma gives the required monotonicity property.


\begin{lemma}
\label{fincreasing}
Given $d$ satisfying \eqref{eq:choice_d} and $k \le n^{\frac{2}{3} - \epsilon}$, we have for $\beta \ge \widetilde{k}^{-0.5}$ that $\frac{\partial f}{\partial \beta}(\beta) > 0$.
\end{lemma}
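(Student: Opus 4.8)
The plan is to reduce the claim to a single inequality at the left endpoint $\beta = \widetilde{k}^{-0.5}$ and then verify it directly from the lower bound on $d$ in \eqref{eq:choice_d}. Recall from \eqref{eq:f_deriv} that
\[
\frac{\partial f}{\partial \beta}(\beta) = k\log\!\left(\frac{n-k}{k}\left(\frac{\beta}{1-\beta}\right)^{1+\epsilon}\right),
\]
so $\frac{\partial f}{\partial\beta}(\beta) > 0$ is equivalent to $\frac{n-k}{k}\big(\tfrac{\beta}{1-\beta}\big)^{1+\epsilon} > 1$. Since $\beta \mapsto \frac{\beta}{1-\beta}$ is strictly increasing on $[0,1)$, the left-hand side is strictly increasing in $\beta$, so it suffices to prove the inequality at $\beta = \widetilde{k}^{-0.5}$. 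First I would note that $\widetilde{k}^{-0.5} \in (0,1)$: indeed $\widetilde{k} = k/d^2$ and $d \le \sqrt{4\log m}$ with $m$ polynomially bounded in $k$ forces $d^2 = O(\log k) \ll k$, so $\widetilde{k} \to \infty$ and $\beta = \widetilde{k}^{-0.5}$ is a legitimate interior point (also confirming that $[\widetilde{k}^{-0.5}, 0.9]$ is a nonempty subinterval of $(0,1)$).

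Next, at $\beta = \widetilde{k}^{-0.5} = k^{-0.5}d$ I would use $\frac{\beta}{1-\beta} \ge \beta = k^{-0.5}d$ together with the first lower bound in \eqref{eq:choice_d}, which upon raising to the power $1+\epsilon$ gives $d^{1+\epsilon} \ge \frac{4 k^{1.5(1+\epsilon)}}{n}$. Combining these,
\[
\frac{n-k}{k}\left(\frac{\beta}{1-\beta}\right)^{1+\epsilon} \;\ge\; \frac{n-k}{k}\cdot\frac{d^{1+\epsilon}}{k^{0.5(1+\epsilon)}} \;\ge\; \frac{n-k}{n}\cdot\frac{4 k^{1.5(1+\epsilon)}}{k^{1+0.5(1+\epsilon)}} \;=\; \frac{4(n-k)}{n}\, k^{\epsilon},
\]
where the exponent collapses because $1.5(1+\epsilon) - 1 - 0.5(1+\epsilon) = \epsilon$.

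Finally, since $k \le n^{\frac{2}{3}-\epsilon}$ implies $k = o(n)$, we have $n - k \ge n/2$ for $n$ exceeding a sufficiently large constant, so the right-hand side is at least $2k^{\epsilon} > 1$. By the monotonicity noted above, $\frac{n-k}{k}\big(\tfrac{\beta}{1-\beta}\big)^{1+\epsilon} > 1$ for every $\beta \ge \widetilde{k}^{-0.5}$ (with $\beta < 1$), hence $\frac{\partial f}{\partial\beta}(\beta) > 0$ there, as claimed. I do not anticipate any genuine obstacle; the only points requiring minor care are (i) checking $\widetilde{k}^{-0.5} \in (0,1)$ so that reducing to the endpoint is legitimate, and (ii) the exponent bookkeeping when substituting the bound on $d$ from \eqref{eq:choice_d}.
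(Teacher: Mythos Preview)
Your proposal is correct and follows essentially the same approach as the paper: both arguments use $\frac{\beta}{1-\beta}\ge\beta\ge k^{-0.5}d$ together with the lower bound $d^{1+\epsilon}\ge 4k^{1.5(1+\epsilon)}/n$ from \eqref{eq:choice_d} to show the argument of the logarithm in \eqref{eq:f_deriv} is at least $1$. Your version differs only cosmetically, reducing to the endpoint via monotonicity rather than carrying the inequality $\beta\ge k^{-0.5}d$ for all $\beta$ directly, and you extract the slightly sharper bound $2k^{\epsilon}$ where the paper stops at $\ge 1$.
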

\begin{proof}
Since $\widetilde{k} = \frac{k}{d^2}$, the assumption $\beta \ge \widetilde{k}^{-0.5}$ becomes $\beta \ge k^{-0.5}d$.  Additionally using the trivial bound $\frac{\beta}{1-\beta} \ge \beta$, it follows that
$$ \left(\frac{\beta}{1 - \beta}\right)^{1 + \epsilon} \ge k^{-0.5(1 + \epsilon)}d^{1 + \epsilon}. $$
Moreover, by assumption in \eqref{eq:choice_d}, we have
$$ d^{1 + \epsilon} \ge \frac{4k^{1.5 + 0.5\epsilon + \epsilon}}{n} \ge \frac{2k^{1.5 + 0.5\epsilon + \epsilon}}{n}, $$
and combining the two preceding equations gives
\begin{align*}
    {\left(\frac{\beta}{1 - \beta}\right)}^{1 + \epsilon} &\ge k^{-0.5(1 + \epsilon)}\frac{2k^{1.5 + 0.5\epsilon + \epsilon}}{n}\\
        &\ge \frac{2k^{1 + \epsilon}}{n}\\
        &\ge \frac{2k}{n}\\
        &\ge \frac{k}{n - k}.
\end{align*}
It follows that
\begin{equation*}
    \frac{n - k}{k} {\left(\frac{\beta}{1 - \beta}\right)}^{1 + \epsilon} \ge 1,
\end{equation*}
or equivalently,
\begin{equation*}
    k\log{\left(\frac{n - k}{k} {\left(\frac{\beta}{1 - \beta}\right)}^{1 + \epsilon}\right)} \ge 0.
\end{equation*}
Recalling from \eqref{eq:f_deriv} that the left-hand side is $\frac{\partial f}{\partial \beta}(\beta)$, this completes the proof.
\end{proof}

\subsection{Analysis of $C$}
\begin{lemma}
\label{lemmaC}
Given $m < c\frac{k^{1.5}}{d}\log{k}$ for a small enough constant $c$, $k \le n^{\frac{2}{3} - \epsilon}$, and $d$ satisfying \eqref{eq:choice_d},
it holds that $C \leq 0.25\alpha$.
\end{lemma}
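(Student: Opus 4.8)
The plan is to control $C$ with the crudest available estimate $\Pr_\beta[F_s\cap F_t]\le 1$, and to show that the total binomial weight of the pairs with overlap at least $0.9k$ is already exponentially smaller than $\alpha$. Writing $l=(1-\beta)k$, the range $\beta\in[0.9,1]$ corresponds to $l\in\{0,1,\dots,\lfloor 0.1k\rfloor\}$, so from the definition of $v_\beta$,
\[
C=\sum_{\beta\in[0.9,1]}v_\beta=\sum_{l=0}^{\lfloor 0.1k\rfloor}{k\choose l}{n-k\choose l}\Pr_{\beta=1-l/k}[F_s\cap F_t]\ \le\ \sum_{l=0}^{\lfloor 0.1k\rfloor}{k\choose l}{n-k\choose l}.
\]
Since $0.1k$ is far below $(n-k)/2$, the factor ${n-k\choose l}$ is increasing over the summation range, so it is at most ${n-k\choose \lfloor 0.1k\rfloor}\le{n\choose \lfloor 0.1k\rfloor}\le\big(\tfrac{10en}{k}\big)^{0.1k}$, while $\sum_{l=0}^{\lfloor 0.1k\rfloor}{k\choose l}\le 2^k$; hence $\log C\le 0.1\,k\log(n/k)+O(k)$.

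Next I would invoke \cref{lemmaalpha}, which gives $\log\alpha\ge\log{n\choose k}-2mr-O(m/\widetilde k)$, together with \eqref{smallklognchoosek} (applicable since $k\le n^{2/3-\epsilon}=o(n)$) to get $\log{n\choose k}=(1+o(1))k\log(n/k)$. The two error terms are negligible: with $r=\tfrac{1}{\sqrt{2\pi}}\tfrac{2d}{\sqrt k}$, $\widetilde k=k/d^2$, and $m\le c\,k^{1.5}\log k/d$, one has $2mr\le\tfrac{4c}{\sqrt{2\pi}}\,k\log k$ and $O(m/\widetilde k)=O(md^2/k)=O(k^{1/2}d\log k)=O(k^{1/2}\log^{3/2}k)=o(k)$, where we used $d\le\sqrt{4\log m}=O(\sqrt{\log k})$ because \eqref{eq:m_bound} makes $m$ polynomially bounded in $k$.

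Finally I would compare exponents. The hypothesis $k\le n^{2/3-\epsilon}$ forces $n/k\ge k^{(1+3\epsilon)/(2-3\epsilon)}$, hence $\log k\le 2\log(n/k)$ and $\log(n/k)\to\infty$, so every additive $O(k)$ term is $o\big(k\log(n/k)\big)$. Combining the estimates above,
\[
\log\alpha-\log C\ \ge\ (1-o(1))k\log(n/k)-\tfrac{4c}{\sqrt{2\pi}}\,k\log k-0.1\,k\log(n/k)-O(k)\ \ge\ \Big(0.9-\tfrac{8c}{\sqrt{2\pi}}-o(1)\Big)k\log(n/k),
\]
which exceeds the constant $\log 4$ once $c$ is a sufficiently small constant and $k$ is larger than an absolute constant; equivalently $C\le 0.25\,\alpha$.

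I do not expect a genuine obstacle here, since this is the loosest of the three sums, but the one point requiring care — and the reason the restriction $m\le c\,k^{1.5}\log k/d$ reappears exactly as in \cref{lemmaB} — is that the factor $\Pr[F_s]^2$ hidden inside $\alpha$ contributes the term $-2mr=-\Theta(k\log k)$ at the threshold, which is of the same order (up to a constant) as the main gain $k\log(n/k)$; one must therefore keep $c$ small enough that this term cannot swallow the $\Theta\big(k\log(n/k)\big)$ separation between the weight $k^{\Theta(0.1k)}$ of the high-overlap pairs and ${n\choose k}=k^{\Theta(k)}$.
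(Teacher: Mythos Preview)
Your argument is correct and follows essentially the same route as the paper: bound $\Pr_\beta[F_s\cap F_t]\le 1$ crudely, show the total binomial weight for overlaps $\beta\ge 0.9$ is at most $\exp\big(0.1k\log(n/k)+O(k)\big)$, and compare against $\log\alpha\ge (1-o(1))k\log(n/k)-2mr$, using the hypothesis on $m$ to keep $2mr$ from overwhelming the gap. The only cosmetic differences are that the paper bounds each summand by the maximal one (getting $0.2k{k\choose 0.1k}{n-k\choose 0.1k}$) and works with $\log n$ rather than $\log(n/k)$, while you sum $\sum_l{k\choose l}\le 2^k$ and bound ${n-k\choose l}$ by its maximum; both yield $\log C\le 0.1k\log(n/k)+O(k)$.

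One small correction: your appeal to \eqref{eq:m_bound} to conclude $\log m=O(\log k)$ points to the wrong equation (that is the bound in \cref{maintheorem}, not the present setting of \cref{sidetheorem}). The right justification is the lemma's own hypothesis $m<c\,k^{1.5}\log k/d$ together with the lower bound $d\ge k^{\epsilon-0.5}\sqrt{4\log m}$ from \eqref{eq:choice_d}, which gives $m\sqrt{\log m}=O(k^{2-\epsilon}\log k)$ and hence $\log m=O(\log k)$. This is purely a citation slip and does not affect the argument.
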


\begin{proof}
We take $c$ to be small enough so that $\frac{2m}{\sqrt{\widetilde{k}}} + O\big(\frac{m}{\widetilde{k}}\big) = 2m\frac{d}{\sqrt{k}} + O\big(\frac{md^2}{k}\big) \le \frac{k}{2}\log{\frac{n}{k}}$.  Then, we have
\begin{align*}
    \log{\alpha} &\geq \log{{n \choose k}} - 2mr - O\left(\frac{m}{\widetilde{k}}\right) \text{ [Using \cref{lemmaalpha}]}\\
    &\geq k\log{\frac{n}{k}} - 2\frac{m}{\sqrt{\widetilde{k}}} - O\left(\frac{m}{\widetilde{k}}\right) \text{ [Since $r = \sqrt{\frac{2}{\pi \widetilde{k}}}$ and $\frac{2}{\pi} \le 1$]}\\
    &\geq \frac{k}{2}\log{\frac{n}{k}} \\
    &\geq \frac{k}{6}\log{n}. \text{ [Since $k < n^{\frac{2}{3} - \epsilon}$]}
\end{align*}
We proceed by bounding $C$ as follows:
\begin{align*}
    C &= \sum_{\beta \in [0.9, 1]}\nu_{\beta}\\
    &= \sum_{\beta \in [0.9, 1]}{k \choose (1 - \beta)k}{n - k \choose (1 - \beta)k}\Pr_{\beta}[F_s \cap F_t]\\
    &\leq 0.2k{k \choose 0.1k} {n - k \choose 0.1k} \Pr_{\beta}[F_s \cap F_t] \text{ [Since $k \geq 20 \implies \left\lceil 0.1k \right\rceil \leq 0.2k$]}\\
    &\leq 0.2k {k \choose 0.1k} {n - k \choose 0.1k} \times 1.
\end{align*}    
Taking the logarithm, we obtain
\begin{align*}
    \log{C} &\leq \log{\left(0.2k{k \choose 0.1k}{n - k \choose 0.1k}\right)}\\
    &\leq \log{(0.2)} + \log{k} + (0.1k)(\log{(10)} + 1) + (0.1k)(\log{(n)} + 1)\\
        &\hspace{4cm} \text{  [Using $\log{{n \choose k}} \leq k\left(\log{\left(\frac{n}{k}\right)} + 1\right)$]}\\
    &\leq 0.1k\log{n} + 0.6k + \log{k} + \log{(0.2)}\\
    &\leq 0.15k\log{n} - \log{4} \text{ [For large $n$, $0.6k + \log{k} + \log{0.2} < 0.05k\log{n} - \log{4}$]}\\
    &\leq \frac{k}{6}\log{n} - \log{4}.
\end{align*}    
Combining the above, we obtain the desired bound $C \leq 0.25\alpha$ (recalling that the log has base $e$).
\end{proof}


\section{Lower Bound for Rademacher Measurements}\label{sec:rademacher}

In the Rademacher measurement scheme, each entry of the measurement matrix is an i.i.d.~Rademacher random variable, i.e., $+1$ with probability $0.5$ and $-1$ with probability $0.5$.  Our analysis for Gaussian measurements, with slight modifications, is able to provide a lower bound in the case of Rademacher measurements and binary signals.

Before proceeding, we note that support recovery for $\Xc_k(R)$ with $R \ge 2$ is impossible using Rademacher measurements.  We demonstrate this via the following example.

\begin{example}
Let $n = 3$, $k = 2$ and $R = 2$, and consider $x_1 = [2, 1, 0]^T, x_2 = [2, 0, 1]^T$, noting that $\supp(x_1) \neq \supp(x_2)$.
In this case, none of the $2^3 = 8$ combinations of $b = (b_1, b_2, b_3) \in \{\pm 1\}^3$ can differentiate ${\rm sign}(b \cdot x_1)$ from ${\rm sign}(b \cdot x_2)$, i.e.,~${\rm sign}(2b_1 + 1b_2 + 0b_3)$ is same as ${\rm sign}(2b_1 + 0b_2 + 1b_3)$ for all 8 possible choices of $b$. This is because once the $\pm 1$ coefficient to $2$ is specified, the $\pm 1$ coefficient to $1$ is too small to impact the sign.  We can easily extend this example to general values of $R > 1$ and $k \geq 2$.
\end{example}

In view of this example, we focus our attention on $R = 1$, with binary signals $x \in \{0,1\}^n$.  

\begin{theorem}
\label{rademacherthm}
Fix $\epsilon \in \big(0,\frac{2}{3}\big)$, and suppose that $k \le n^{\frac{2}{3} - \epsilon}$. If $A = (a_{i, j}) \in \{\pm 1\}^{m \times n}$ where each $a_{i, j}$ is an independent Rademacher random variable and where $m = O(k^{1.5})$ with a small enough implied constant, then with probability at least $\frac{1}{2}$, $A$ is not a valid $1$-bit measurement matrix for support recovery on $\Xc_k(1)$
\end{theorem}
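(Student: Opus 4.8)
The plan is to mirror the two-part argument used in the Gaussian case: a reduction from valid measurement matrices to balanced families (as in \cref{successreducelemma}), followed by a de Caen--type lower bound on the size of a balanced family of random \emph{Rademacher} vectors (the analog of \cref{sidetheorem}). The first part in fact simplifies. Since every entry of $A$ lies in $\{\pm 1\}$, we automatically have $|a_{i,j}| = 1 \le \sqrt{4\log m}$ for all $i,j$ once $m$ exceeds a small constant, so the ``large last-column entry'' alternative of \cref{successreducelemma} can never occur. Running the argument of \cref{successreducelemma} with $R = 1$ and using $|a_{i,n-1}| = |a_{i,n}| = 1$ in place of the bound $\sqrt{4\log m}$ gives the \emph{stronger} conclusion: if $A$ is valid then the truncated rows $V = \{(a_{i,1},\dots,a_{i,n-2})\}_{i\in[m]}$ form an $(n-2,\,k-1,\,1)$-balanced family, i.e., the balancing margin is now the \emph{constant} $d = 1$ rather than $\sqrt{4\log m}$. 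This is precisely why the cleaner bound $m = O(k^{3/2})$ (with no logarithmic factors) appears here: substituting a constant $d$ into a bound of the form $m \le c\frac{k^{1.5}}{d}\min(\frac{1}{d^2},\log k)$ --- which is what the de Caen argument delivers --- gives $m = O(k^{3/2})$. The discrepancy between $(n-2,k-1)$ and $(n,k)$ is absorbed into the constant exactly as in the proof of \cref{maintheorem}, and the side conditions \eqref{eq:choice_d} are all satisfied with room to spare for constant $d$, $k \le n^{2/3-\epsilon}$ and $m$ polynomial in $k$ (e.g.\ $4^{1/(1+\epsilon)}k^{1.5}/n^{1/(1+\epsilon)}\to 0$ since $k^{1.5}\le n^{1-1.5\epsilon}$ with $1-1.5\epsilon<\tfrac{1}{1+\epsilon}$, and $k^{\epsilon-0.5}\sqrt{4\log m}\to 0$).

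For the Rademacher balancing lower bound I would re-run the de Caen argument of \cref{sec:balanced} essentially line for line, changing only the distributional inputs and setting $\widetilde k := k/d^2 = k$. Define $r := \Pr\big[|S_k| \le 1\big]$, where $S_k$ is a sum of $k$ i.i.d.\ Rademacher variables; this plays the role that $r$ played in the Gaussian proof. From the exact identity $\Pr[S_k = j] = \binom{k}{(k+j)/2}2^{-k}$ (for $j$ of the same parity as $k$) and Stirling's formula --- equivalently, the local central limit theorem with an $O(k^{-3/2})$ remainder, which replaces the Gaussian estimate \cref{smallballlemma} --- one obtains $r = \Theta(k^{-1/2})$, and more importantly the two genuinely distributional lemmas are recovered: \cref{lemmaE} becomes the tautology $\Pr[\overline{E_{s,i}}] = r$ together with $r = \Theta(k^{-1/2})$; and in the analog of \cref{lemmaEbeta} one conditions on the contribution $z = \sum_{j \in \supp(x_s)\cap\supp(x_t)} v_{p,j}$ of the overlap (a sum of $\beta k$ Rademacher variables), which renders the two exclusive contributions independent, so that $\Pr[\overline{E_{s,p}}\cap\overline{E_{t,p}}] = \sum_z \Pr[S_{\beta k}=z]\,\Pr\big[|S_{(1-\beta)k}+z|\le 1\big]^2 \le q^2$, where $q := \max_z \Pr\big[|S_{(1-\beta)k}+z|\le 1\big] \le (1+o(1))\,r/\sqrt{1-\beta}$ for $\beta \le 0.9$. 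With these substitutions, \cref{lemmaF2,lemmaalpha,lemmaFbeta,lemmavbeta,lemmaA,lemmaB,lemmaC} go through verbatim (the local-CLT error terms now being $O(k^{-2})$, which is absorbed by the $O(\widetilde k^{-3/2})$ terms already present), and \cref{lemmalogsub,fincreasing}, being purely binomial-coefficient estimates, require no change. Combining $A \le 1.5\alpha$, $B \le 0.25\alpha$, $C \le 0.25\alpha$ with de Caen's inequality (\cref{decaen}) then gives $\Pr[V \text{ is not } (n,k,1)\text{-balanced}] \ge 1/2$.

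Putting the pieces together: if $m = O(k^{3/2})$ with a sufficiently small implied constant (and $k$ exceeds a large constant), then with probability at least $1/2$ the truncated matrix $V$ fails to be $(n-2,k-1,1)$-balanced, which by the reduction forces $A$ to not be a valid $1$-bit measurement matrix for support recovery on $\Xc_k(1)$. Note that, in contrast to the Gaussian case, there is no loss from a ``bad last column'' event, so the success probability remains $1/2$ rather than dropping to $1/3$.

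The step I expect to require the most care is controlling the local-CLT errors uniformly enough --- in particular, establishing $q \le (1+o(1))\,r/\sqrt{1-\beta}$ with the constant on the right equal to exactly $1$. This matters because the whole de Caen analysis rests on the numerator and denominator of \cref{decaen} matching to first order and differing only in second-order terms; a spurious constant factor $C>1$ in $q^2 \le \frac{Cr^2}{1-\beta}$ would be fatal, capping $m$ at $O(1/r^2) = O(k)$ instead of $O(k^{3/2})$. Here parity is the only real subtlety: when $k$ is odd both $r$ and $q$ pick up a factor of roughly $2$ (two lattice points in the relevant length-$2$ window), and when $k$ is even neither does; crucially these factors are perfectly correlated, so the ratio $q\big/\!\big(r/\sqrt{1-\beta}\big)$ still tends to $1$ in every case. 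The remaining parity bookkeeping --- which residues $z$ are attainable for a sum of $\beta k$ Rademacher variables, and handling non-integer $\beta k$ by restricting to the well-defined values of $\beta$ --- is routine and parallels the Gaussian treatment.
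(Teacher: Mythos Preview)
Your proposal is correct and follows the same two-stage architecture as the paper (reduction to a balancing problem, then a de Caen argument with Rademacher local-CLT estimates in place of \cref{smallballlemma}). Two points of divergence are worth noting. First, in the reduction you truncate to the first $n-2$ columns as in \cref{successreducelemma}, whereas the paper observes that since \emph{every} entry of $A$ is $\pm 1$, any two coordinates can play the role of $n-1,n$; this yields the slightly stronger conclusion that the full rows are $(n,k-1,1)$-balanced, though the difference is absorbed into constants either way. Second, and more substantively, you propose to handle the odd-$k$ case directly inside the de Caen analysis by arguing that the factor-of-$2$ parity corrections in $r$ and $q$ cancel in the ratio $q/(r/\sqrt{1-\beta})$. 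This is valid, but getting the \emph{upper} bound $q \le r/\sqrt{1-\beta}$ with constant exactly $1$ (not $1+o(1)$) for odd $k$ requires sharper Stirling bookkeeping than you indicate, since the central binomial coefficient for odd arguments is not bounded by $\sqrt{2/(\pi m)}\,2^m$ in the same clean way. The paper sidesteps this entirely: it proves the balancing lower bound only for \emph{even} $k$ (where $\Pr[\overline{E_{s,i}}] = \binom{k}{k/2}2^{-k}$ and the bound $\binom{k-d}{\lfloor (k-d)/2\rfloor}2^{-(k-d)} \le \sqrt{2/(\pi(k-d))}$ gives $q^2 \le r^2/(1-\beta)$ exactly, with $r := \sqrt{2/(\pi k)}$), and then handles odd $k-1$ by the containment $\Xc_{k-1}(1) \subseteq \Xc_k(1)$, reducing to parameter $k-2$. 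Your route works but is fiddlier; the paper's reduction-to-even trick is the cleaner move.
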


\subsection{Proof Sketch}

For the proof of the Rademacher case, we reduce the $(n, k - 1, 1)$-balanced problem to $1$-bit CS.

\begin{lemma}
\label{rademacherreduction}
If $A \in \{\pm 1\}^{m \times n}$ is a valid $1$-bit measurement matrix for support recovery on $\Xc_k(1)$, then the row vectors of $A$ are $(n, k - 1, 1)$-balanced.
\end{lemma}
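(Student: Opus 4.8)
The plan is to mirror the Gaussian reduction of \cref{successreducelemma}, only the argument becomes strictly simpler: since every entry of a Rademacher matrix is deterministically $\pm 1$, there is no need for an auxiliary ``bounded entries'' event, and so we may use all $n$ coordinates rather than setting two of them aside (which is why the target is $(n,k-1,1)$-balanced rather than $(n-2,k-1,\cdot)$-balanced).

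First I would fix an arbitrary $S \subseteq [n]$ with $|S| = k-1$ and, using $k-1 \le n-2$ (which holds for large $n$ under $k \le n^{2/3-\epsilon}$), pick two distinct indices $p, q \in [n] \setminus S$. Then I would form the two binary $k$-sparse test vectors $x = \mathbf{1}_{S \cup \{p\}}$ and $y = \mathbf{1}_{S \cup \{q\}}$; both lie in $\Xc_k(1)$ (all nonzero entries equal $1$, so the dynamic range is exactly $1$) and have distinct supports, so validity of $A$ forces the existence of a row $a_i$ with $\mathrm{sign}(a_i \cdot x) \neq \mathrm{sign}(a_i \cdot y)$.

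Next, writing $\Sigma = \sum_{j \in S} a_{i,j}$, observe that $a_i \cdot x = \Sigma + a_{i,p}$ and $a_i \cdot y = \Sigma + a_{i,q}$ with $a_{i,p}, a_{i,q} \in \{\pm 1\}$. A short case analysis on the signs (using the convention $\mathrm{sign}(0)=1$): without loss of generality $\Sigma + a_{i,p} \ge 0 > \Sigma + a_{i,q}$, which forces $-a_{i,p} \le \Sigma < -a_{i,q}$ and hence $-1 \le \Sigma < 1$, so $|\sum_{j \in S} a_{i,j}| \le 1$ (and, since $\Sigma$ is an integer, in fact $\Sigma \in \{-1,0\}$). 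As $S$ was arbitrary, this is exactly the statement that the rows of $A$ are $(n,k-1,1)$-balanced.

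I do not expect any genuine obstacle in this lemma; the only points needing care are the sign convention in the case split and the mild requirement $n \ge k+1$ so that two distinguishing coordinates $p,q$ can be placed outside $S$. The real work toward \cref{rademacherthm} comes afterwards — replacing the Gaussian small-ball estimate of \cref{smallballlemma} with an anti-concentration bound for the Rademacher sum $\sum_{j \in S} a_{i,j}$ and re-running the de Caen computation of \cref{sec:balanced} with $d = 1$ — but that lies beyond this reduction.
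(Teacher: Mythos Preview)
Your proposal is correct and follows essentially the same approach as the paper: the paper's proof sketch (and its commented-out full version) likewise fixes $S$ of size $k-1$, picks two indices $\gamma_1,\gamma_2 \notin S$, builds the binary vectors $\mathbf{1}_{S\cup\{\gamma_1\}}$ and $\mathbf{1}_{S\cup\{\gamma_2\}}$, and deduces $-1 \le \sum_{j\in S} a_{i,j} \le 1$ from the sign discrepancy on some row $a_i$. Your observation that no ``bounded entries'' event is needed because $|a_{i,j}|=1$ deterministically is exactly the simplification the paper highlights.
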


This lemma is proved in a near-identical manner to \cref{successreducelemma}, but the term $\frac{\sqrt{4 \log m}}{R}$ is replaced by one (which trivially equals the magnitude of any given entry of $A$).  Another slight difference here is that we do not need to isolate two specific entries of $x$; with binary-valued $x$ and $A$, {\em any} two entries can play the role that entries $n-1$ and $n$ played in \cref{successreducelemma}.  Since these differences are straightforward, we omit the details.

Again having established a suitable reduction, the main remaining task is to prove the following lower bound for the balancing problem.

\begin{theorem}
\label{rademachersidethm}
Fix $\epsilon \in \big(0,\frac{2}{3}\big)$, and suppose that $k$ is an even number satisfying $k \le n^{\frac{2}{3} - \epsilon}$.  For a set $V \subseteq \{\pm 1\}^n$ where $|V| = m$ and each $V_{i, j}$ is an i.i.d.~Rademacher random variable, if $m = O(k^{1.5})$ with a small enough implied constant, then with probability at least $\frac{1}{2}$, the set $V$ is not $(n, k, 1)$-balanced.
\end{theorem}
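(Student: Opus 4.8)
The plan is to mirror the proof of \cref{sidetheorem} almost verbatim, specialized to $d = 1$ (so that $\widetilde{k} = k/d^2 = k$), with the Gaussian small-ball estimates replaced by their Rademacher local-limit analogues. First I would set up the identical machinery: enumerate the $k$-sparse vectors of $\{0,1\}^n$ as $x_1, x_2, \dots$; let $E_{s,i}$ be the event $|x_s \cdot v_i| > 1$ and $F_s = \bigcap_{i=1}^m E_{s,i}$; and apply de Caen's inequality (\cref{decaen}) to lower bound $\Pr[\bigcup_s F_s]$. Writing $\alpha = \binom{n}{k}\Pr[F_s]^2$ and splitting the denominator $\sum_t \Pr[F_s \cap F_t]$ into the sums $A$, $B$, $C$ over the overlap ranges $\beta = |x_s \cap x_t|/k \in [0, \widetilde{k}^{-1/2}]$, $[\widetilde{k}^{-1/2}, 0.9]$, $[0.9, 1]$, the target is again $A \le 1.5\alpha$, $B \le 0.25\alpha$, $C \le 0.25\alpha$, giving $\Pr[\bigcup_s F_s] \ge \alpha/(2\alpha) = 1/2$, which lower bounds $\Pr[V \text{ is not } (n,k,1)\text{-balanced}]$ as desired. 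Throughout, $r := \sqrt{2/(\pi k)}$ plays the role of the parameter $r$ from \cref{sec:balanced}.

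The only genuinely new work is the Rademacher replacement for \cref{smallballlemma}, and hence for \cref{lemmaE} and \cref{lemmaEbeta}. Because $k$ is \emph{even}, $x_s \cdot v_i = \sum_{j \in \supp(x_s)} (v_i)_j$ is an even integer, so $\overline{E_{s,i}}$ is exactly the event $x_s \cdot v_i = 0$; thus $\Pr[\overline{E_{s,i}}] = \binom{k}{k/2} 2^{-k}$, and Stirling's estimate $\binom{2m}{m}2^{-2m} = \sqrt{1/(\pi m)}(1 - O(1/m))$ gives $r - O(k^{-3/2}) \le \Pr[\overline{E_{s,i}}] \le r$, exactly the conclusion of \cref{lemmaE} with $\widetilde{k} = k$. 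For the joint event of \cref{lemmaEbeta}, when $|x_s \cap x_t| = \beta k$ I would write $x_s \cdot v_p = Z_0 + Z_1$ and $x_t \cdot v_p = Z_0 + Z_2$, where $Z_0$ sums the $\beta k$ Rademacher coordinates of $v_p$ on $\supp(x_s) \cap \supp(x_t)$ and $Z_1, Z_2$ sum the $(1-\beta)k$ coordinates on $\supp(x_s)\setminus\supp(x_t)$ and $\supp(x_t)\setminus\supp(x_s)$, with $Z_0, Z_1, Z_2$ independent. Conditioning on $Z_0 = z$, the event $\overline{E_{s,p}}\cap\overline{E_{t,p}}$ forces $Z_1 = Z_2 = -z$; since the maximum pmf of a sum of $\ell$ Rademachers is at most $\sqrt{2/(\pi\ell)}$ (by the same Stirling bound, passing to $\ell+1$ when $\ell$ is odd), we get $\Pr[Z_1 = -z] \le r/\sqrt{1-\beta}$, whence
\[
\Pr[\overline{E_{s,p}} \cap \overline{E_{t,p}}] = \sum_z \Pr[Z_0 = z]\,\Pr[Z_1 = -z]^2 \le \frac{r^2}{1-\beta} = r^2\big(1 + \beta + O(\beta^2)\big)
\]
for $\beta \le 0.9$ --- matching \cref{lemmaEbeta}. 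Evenness of $k$ is used once more here, to ensure $\beta k$ and $(1-\beta)k$ share a parity so that the conditioning is consistent; for odd $k$ the relevant event is $x_s \cdot v_i \in \{-1,+1\}$ and the clean correspondence breaks down, which is why the theorem restricts to even $k$.

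With \cref{lemmaE} and \cref{lemmaEbeta} reproduced, every downstream result of \cref{sec:balanced} --- \cref{lemmaF2}, \cref{lemmaalpha}, \cref{lemmaFbeta}, \cref{lemmaA}, \cref{lemmavbeta}, \cref{lemmalogsub}, \cref{fincreasing}, \cref{lemmaB}, \cref{lemmaC} --- carries over unchanged with $d = 1$ and $\widetilde{k} = k$, since those proofs use only the conclusions of \cref{lemmaE}/\cref{lemmaEbeta} together with distribution-free binomial asymptotics. The one verification needed is that $d = 1$ meets hypothesis \eqref{eq:choice_d}: the bound $d \ge 4^{1/(1+\epsilon)} k^{1.5}/n^{1/(1+\epsilon)}$ holds since $k \le n^{2/3-\epsilon}$ forces $4 k^{1.5(1+\epsilon)} \le n$ for large $n$ (the exponent $(2/3-\epsilon)\cdot 1.5(1+\epsilon) = 1 - \epsilon/2 - 1.5\epsilon^2$ is strictly less than $1$); the bound $d \ge k^{\epsilon-1/2}\sqrt{4\log m}$ holds since $m = O(k^{1.5})$ makes $\sqrt{4\log m} = O(\sqrt{\log k}) \ll k^{1/2-\epsilon}$; and $d = 1 \le \sqrt{4\log m}$ once $m$ exceeds a constant. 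Among the three constraints on $m$, the bound $m \le c k^{1.5}/d^3 = ck^{1.5}$ from \cref{lemmaA} dominates $m \le ck^{1.5}\log k$ from \cref{lemmaB} and \cref{lemmaC}, yielding the overall requirement $m \le c k^{1.5}$, exactly as assumed.

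The step I expect to require the most care is the leading-order match between the Rademacher and Gaussian estimates. De Caen's bound pays off only because the numerator $\binom{n}{k}\Pr[F_s]^2$ and the dominant piece $A$ of the denominator agree to \emph{first order} in $r$; so it is not enough that $\Pr[\overline{E_{s,i}}] = \Theta(1/\sqrt{k})$, one needs $\Pr[\overline{E_{s,i}}] = r(1 - o(1))$ with the exact constant $r = \sqrt{2/(\pi k)}$, and one needs the perturbation in \cref{lemmaEbeta} to be $r^2(1 + \beta + O(\beta^2))$ with the correct $+\beta$ term --- both of which the computations above secure. Everything else is a transcription of \cref{sec:balanced} with $d$ set to $1$.
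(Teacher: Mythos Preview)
Your proposal is correct and follows essentially the same route as the paper: replace the Gaussian small-ball estimates in \cref{lemmaE} and \cref{lemmaEbeta} by their Rademacher local-limit analogues (via Stirling for $\binom{k}{k/2}2^{-k}$ and conditioning on the overlap sum), then carry the rest of \cref{sec:balanced} through with $d=1$ and $\widetilde{k}=k$, with the binding constraint $m \le c k^{3/2}$ coming from the $A$-term. The only cosmetic difference is that the paper re-proves the monotonicity of $f$ for $\beta \ge k^{-1/2}$ directly from $k \le n^{2/3-\epsilon}$, whereas you verify that $d=1$ satisfies \eqref{eq:choice_d} and invoke \cref{fincreasing}; both arguments are valid.
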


Before outlining the proof, we show how this result yields \cref{rademacherthm}.

\begin{proof}[Proof of \cref{rademacherthm}]
Let $H$ be the event that the matrix $A = (a_{i, j}) \in \{\pm 1\}^{m \times n}$ with i.i.d.~Rademacher entries is not a valid universal $1$-bit measurement matrix for support recovery on $\Xc_k(1)$.
Let $P$ be the event that the set $V = \{(a_{i, 1}, a_{i, 2}, \dots, a_{i, n})| i \in [1, m]\}$ is not $(n, k - 1, 1)$-balanced.

Assuming momentarily that $k-1$ is even, we have from \cref{rademacherreduction} that $P \implies H$.  From \cref{rademachersidethm}, we have $\Pr[P] \geq \frac{1}{2}$ when $m = O((k-1)^{1.5})$. Under the big O notation we are able to change $k - 1$ to $k$ with only a slight change in the constant, and the desired lower bound on $\Pr[H]$ follows from that on $\Pr[P]$.

If $k-1$ is odd, then we can simply use the fact that $\Xc_{k-1}(1) \subseteq \Xc_{k}(1)$, and apply \cref{rademacherthm} with parameter $k-2$ (which is even).  Approximating $k-2$ by $k$ under the big-O notation in the same way as above, we deduce the same result.
\end{proof}

\subsection{Key Differences in Proof of \cref{rademachersidethm} w.r.t.~Proof of \cref{sidetheorem}}

Recall that $E_{s,i}$ is the balancing failure event for set $s$ and a single measurement $i$, and $F_s = \bigcap_{i=1}^m E_{s,i}$ is the overall failure event for set $s$.

We again split $\sum_{t \in X}Pr[F_s \cap F_t]$ into three terms $A$, $B$ and $C$. However in this analysis our cut-off point between $A$ and $B$ is $k^{-0.5}$ (instead of the original $\widetilde{k}^{-0.5}$).  Thus, we define 
    $$A = \sum_{\beta \in [0, k^{-0.5}]}\nu_\beta, B = \sum_{\beta \in [k^{-0.5}, 0.9]}\nu_\beta \text{ and } C = \sum_{\beta \in [0.9, 1]}\nu_\beta,$$
with $\nu_{\beta}$ defined in \eqref{eq:def_v}. 
Accordingly, in the original proof, all occurrences of $\widetilde{k}$ simplify to $k$, and $r = \sqrt{\frac{2}{\pi \widetilde{k}}}$ is redefined to $r = \sqrt{\frac{2}{\pi k}}$.

Since this is a discrete setting, the proofs of $r - O(\frac{1}{k}) \leq \Pr[\overline{E_{s, i}}] \leq r$ and $\Pr[\overline{E_{s, p}} \cap \overline{E_{t, p}}] \leq r^2(1 + \beta + O(\beta^2))$ (when $|x_s \cap x_t| = \beta k$) are established slightly differently, outlined as follows.

\begin{lemma} 
\label{rademacherlemma2nchoosen} ${2n \choose n} = \frac{4^n}{\sqrt{\pi n}}\left(1 - \frac{1}{8n} + O(n^{-2})\right), \forall n \in \mathbb{N}$
\end{lemma}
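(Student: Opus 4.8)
The plan is to derive this directly from Stirling's asymptotic expansion for the factorial, keeping terms up to order $n^{-2}$. Recall the classical expansion
\[
    N! = \sqrt{2\pi N}\Big(\tfrac{N}{e}\Big)^N\Big(1 + \tfrac{1}{12N} + \tfrac{1}{288 N^2} + O(N^{-3})\Big),
\]
which may be cited as a standard fact (or, if effective constants valid for every $n \in \mathbb{N}$ are desired, one can instead invoke Robbins' two-sided bounds $\sqrt{2\pi N}(N/e)^N e^{1/(12N+1)} \le N! \le \sqrt{2\pi N}(N/e)^N e^{1/(12N)}$ and expand the exponentials). Applying this with $N = 2n$ and $N = n$ gives
\[
    (2n)! = \sqrt{4\pi n}\Big(\tfrac{2n}{e}\Big)^{2n}\Big(1 + \tfrac{1}{24n} + O(n^{-2})\Big), \qquad (n!)^2 = 2\pi n\Big(\tfrac{n}{e}\Big)^{2n}\Big(1 + \tfrac{1}{12n} + O(n^{-2})\Big)^2.
\]

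Next I would form the ratio ${2n \choose n} = (2n)!/(n!)^2$. The factors $(2n/e)^{2n}$ and $(n/e)^{2n}$ combine to leave a clean $4^n$, and the prefactor simplifies as $\sqrt{4\pi n}/(2\pi n) = 1/\sqrt{\pi n}$. It then remains to handle the ratio of the two correction series: squaring the denominator series gives $1 + \tfrac{1}{6n} + O(n^{-2})$, and hence
\[
    \frac{1 + \tfrac{1}{24n} + O(n^{-2})}{1 + \tfrac{1}{6n} + O(n^{-2})} = \Big(1 + \tfrac{1}{24n}\Big)\Big(1 - \tfrac{1}{6n}\Big) + O(n^{-2}) = 1 - \tfrac{1}{8n} + O(n^{-2}),
\]
using $\tfrac{1}{1+x} = 1 - x + O(x^2)$ with $x = \tfrac{1}{6n} + O(n^{-2})$ and the arithmetic $\tfrac{1}{24} - \tfrac{1}{6} = -\tfrac{1}{8}$. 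Assembling the pieces yields ${2n \choose n} = \tfrac{4^n}{\sqrt{\pi n}}\big(1 - \tfrac{1}{8n} + O(n^{-2})\big)$, as claimed.

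There is essentially no substantive obstacle here; the only point meriting a remark is the claim that the estimate holds \emph{for all} $n \in \mathbb{N}$ rather than merely asymptotically. This is immediate once one uses an effective form of Stirling (e.g.\ the Robbins bounds above), since then every error term in the computation carries an explicit constant; alternatively, any fixed finite range of small $n$ is a trivial finite check that can be absorbed into the implied constant of the $O(n^{-2})$ term.
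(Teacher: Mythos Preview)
Your proposal is correct and follows essentially the same approach as the paper: both invoke Stirling's approximation to obtain the expansion, with your version being more explicit about the arithmetic that turns the factorial series into the $-\tfrac{1}{8n}$ coefficient.
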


\begin{proof}
This follows easily from Stirling's approximation:
\begin{align*}
{2n \choose n} &= \frac{2^{2n}}{\sqrt{\pi n}}\left(1 - \frac{1}{8n} + \frac{1}{128n^2} + \frac{5}{1024n^3} + ...\right) \\
&= \frac{4^n}{\sqrt{\pi n}}\left(1 - \frac{1}{8n} + O(n^{-2})\right).
\end{align*}
\end{proof}

\begin{lemma}
\label{rademacherlemmaE}
For even-valued $k$ we have, $r - O\left(\frac{1}{k^{1.5}}\right) \leq \Pr[\overline{E_{s, i}}] \leq r$.
\end{lemma}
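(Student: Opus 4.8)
We need to show Lemma~\ref{rademacherlemmaE}: for even-valued $k$, the quantity $\Pr[\overline{E_{s,i}}] = \Pr[|x_s \cdot v_i| \le 1]$, where $v_i \in \{\pm 1\}^n$ has i.i.d.\ Rademacher entries and $x_s$ is binary $k$-sparse, satisfies $r - O(k^{-1.5}) \le \Pr[\overline{E_{s,i}}] \le r$, with $r = \sqrt{2/(\pi k)}$.

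**Plan.** The inner product $x_s \cdot v_i$ is a sum of $k$ i.i.d.\ Rademacher variables, so it equals $2B - k$ where $B \sim \mathrm{Binomial}(k, 1/2)$. Since $k$ is even, the event $|x_s \cdot v_i| \le 1$ is exactly the event $x_s \cdot v_i = 0$, i.e.\ $B = k/2$. Hence $\Pr[\overline{E_{s,i}}] = 2^{-k}\binom{k}{k/2}$. Writing $k = 2n$, this is $4^{-n}\binom{2n}{n}$, and now I invoke Lemma~\ref{rademacherlemma2nchoosen} (the Stirling estimate for the central binomial coefficient), which gives $4^{-n}\binom{2n}{n} = \frac{1}{\sqrt{\pi n}}\bigl(1 - \frac{1}{8n} + O(n^{-2})\bigr)$. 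Substituting $n = k/2$ yields
\[
\Pr[\overline{E_{s,i}}] = \sqrt{\frac{2}{\pi k}}\Bigl(1 - \frac{1}{4k} + O(k^{-2})\Bigr) = r - \frac{r}{4k} + O(k^{-2.5}).
\]
Since $r = \Theta(k^{-1/2})$, the correction term $\frac{r}{4k}$ is $\Theta(k^{-1.5})$, and the $O(k^{-2.5})$ is lower order. Therefore $\Pr[\overline{E_{s,i}}] \le r$ (the leading correction is negative) and $\Pr[\overline{E_{s,i}}] \ge r - O(k^{-1.5})$, which is exactly the claimed two-sided bound.

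**Main obstacle.** There is essentially no obstacle here: the proof is a direct computation once one observes that evenness of $k$ collapses the band $|x_s \cdot v_i| \le 1$ to the single atom $\{x_s \cdot v_i = 0\}$. The only point requiring a line of care is confirming that the next binomial mass, $\Pr[x_s \cdot v_i = \pm 2] = 2 \cdot 2^{-k}\binom{k}{k/2-1}$, is not needed — it is not, because $2 > 1$ — and that the sign of the Stirling correction makes the upper bound $\le r$ exact rather than $\le r(1+o(1))$. I would also note explicitly, as the excerpt's convention requires, that this replaces the Gaussian small-ball estimate (Lemma~\ref{smallballlemma}) used in Lemma~\ref{lemmaE}, with $\widetilde{k}$ specialized to $k$ and $r$ redefined as $\sqrt{2/(\pi k)}$; all downstream lemmas ($A$, $B$, $C$ analyses) then go through verbatim with these substitutions.

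**Proof.**
\begin{proof}
Since $x_s$ is binary with exactly $k$ nonzero coordinates and $v_i$ has i.i.d.\ Rademacher entries, $x_s \cdot v_i = \sum_{j \in \supp(x_s)} (v_i)_j$ is a sum of $k$ i.i.d.\ $\pm 1$ variables, hence $x_s \cdot v_i \dequal 2B - k$ for $B \sim \mathrm{Binomial}(k, \tfrac12)$. Because $k$ is even, $2B - k$ is an even integer, so $|x_s \cdot v_i| \le 1$ holds if and only if $x_s \cdot v_i = 0$, i.e.\ $B = k/2$. Therefore
\[
\Pr[\overline{E_{s,i}}] = \Pr[B = k/2] = 2^{-k}\binom{k}{k/2}.
\]
Writing $k = 2n$ and applying \cref{rademacherlemma2nchoosen},
\[
\Pr[\overline{E_{s,i}}] = 4^{-n}\binom{2n}{n} = \frac{1}{\sqrt{\pi n}}\Bigl(1 - \frac{1}{8n} + O(n^{-2})\Bigr).
\]
Substituting $n = k/2$ and recalling $r = \sqrt{\tfrac{2}{\pi k}}$, this becomes
\[
\Pr[\overline{E_{s,i}}] = \sqrt{\frac{2}{\pi k}}\Bigl(1 - \frac{1}{4k} + O(k^{-2})\Bigr) = r\Bigl(1 - \frac{1}{4k} + O(k^{-2})\Bigr).
\]
Since $r = \Theta(k^{-1/2})$, the correction $\frac{r}{4k} + O(r k^{-2})$ is of order $k^{-1.5}$, and in particular it is nonnegative for $k$ large enough. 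Hence $\Pr[\overline{E_{s,i}}] \le r$ and $\Pr[\overline{E_{s,i}}] \ge r - O(k^{-1.5})$, which is the claim.
\end{proof}
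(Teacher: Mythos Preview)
Your proof is correct and follows essentially the same approach as the paper: reduce $|x_s \cdot v_i|\le 1$ to $x_s \cdot v_i = 0$ using evenness of $k$, compute the central binomial probability $2^{-k}\binom{k}{k/2}$, and apply \cref{rademacherlemma2nchoosen} to get $r\bigl(1-\tfrac{1}{4k}+O(k^{-2})\bigr)$. The paper's proof is terser but identical in substance.
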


\begin{proof}
Recalling that the elements of $V$ are $v^{(1)},\dotsc,v^{(m)}$ and $x_s \in \{0,1\}^n$ has support $s$, we have
\begin{align*}
\Pr[\overline{E_{s, i}]} &= \Pr[|x_s \cdot v^{(i)}| \leq 1]\\
&= \Pr[x_s \cdot v^{(i)} = 0] \text{ [Since $k$ is even]}\\
&= {k \choose {\frac{k}{2}}}2^{-k}\\
&= \sqrt{\frac{2}{\pi k}}\left(1 - \frac{1}{4k} + O(k^{-2})\right) \text{ [Using \cref{rademacherlemma2nchoosen}]}.
\end{align*}
It follows for large enough $k$ that $r - O\big(\frac{1}{k^{1.5}}\big) \leq \Pr[\overline{E_{s, i}}] \leq r$.
\end{proof}

\begin{lemma}
\label{rademacherlemmaEbeta}
For even-valued $k$ and $|x_s \cap x_t| = \beta k$, where $\beta \leq 0.9$, we have for each index $p \in [m]$ that $\Pr[\overline{E_{s, p}} \cap \overline{E_{t, p}}] \leq r^2(1 + \beta + O(\beta^2))$.
\end{lemma}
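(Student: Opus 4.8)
The plan is to follow the skeleton of the Gaussian proof of \cref{lemmaEbeta} verbatim, substituting the continuous small-ball bound of \cref{smallballlemma} by a point-mass estimate for sums of Rademacher variables, supplied by \cref{rademacherlemma2nchoosen}. Write $\ell = (1-\beta)k$ for the common size of the two exclusive index blocks $x_s \setminus x_t$ and $x_t \setminus x_s$, and introduce the partial sums $W = \sum_{i \in x_s \cap x_t} v_{p,i}$, $U = \sum_{i \in x_s \setminus x_t} v_{p,i}$, $U' = \sum_{i \in x_t \setminus x_s} v_{p,i}$, so that $x_s \cdot v_p = W + U$ and $x_t \cdot v_p = W + U'$. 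Since these are sums over disjoint coordinate blocks, $W$, $U$, $U'$ are mutually independent, and $U, U'$ are identically distributed symmetric binomials on $\{-\ell,-\ell+2,\dots,\ell\}$.

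The crucial observation, exactly as in the discrete setting of \cref{rademacherlemmaE}, is that $x_s \cdot v_p = W + U$ is a sum of $k$ terms each in $\{\pm 1\}$, hence has the parity of $k$ and is therefore \emph{even}; thus the event $\overline{E_{s,p}} = \{|x_s \cdot v_p| \le 1\}$ coincides with $\{W + U = 0\}$, and likewise $\overline{E_{t,p}} = \{W + U' = 0\}$. Conditioning on $W$ then gives
\[
\Pr[\overline{E_{s,p}} \cap \overline{E_{t,p}}] = \sum_{w} \Pr[W = w]\,\Pr[U = -w]\,\Pr[U' = -w] = \sum_w \Pr[W=w]\,\Pr[U = -w]^2 \le q^2,
\]
where $q = \max_j \Pr[U = j]$. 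Since the symmetric binomial is unimodal, $q = \binom{\ell}{\lfloor \ell/2\rfloor} 2^{-\ell}$.

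It then suffices to prove the uniform point-mass bound $q \le \sqrt{2/(\pi \ell)} = r/\sqrt{1-\beta}$ (recalling $r = \sqrt{2/(\pi k)}$ in the Rademacher setting), since this yields $\Pr[\overline{E_{s,p}} \cap \overline{E_{t,p}}] \le q^2 \le r^2/(1-\beta) = r^2(1 + \beta + O(\beta^2))$ for $\beta \le 0.9$, as claimed. When $\ell$ is even this is precisely $\binom{2n}{n} 4^{-n} \le 1/\sqrt{\pi n}$ with $\ell = 2n$, which is immediate from \cref{rademacherlemma2nchoosen}. When $\ell = 2n+1$ is odd, write $\binom{2n+1}{n} 2^{-(2n+1)} = \binom{2n}{n} 2^{-2n} \cdot \tfrac{2n+1}{2(n+1)} \le \tfrac{1}{\sqrt{\pi n}} \cdot \tfrac{2n+1}{2(n+1)}$, and a short algebraic check, amounting to $(2n+1)^3 \le 8n(n+1)^2$ for $n \ge 1$, shows this is still at most $\sqrt{2/(\pi(2n+1))}$. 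The assumption $\beta \le 0.9$ guarantees $\ell = (1-\beta)k \ge 2$ once $k$ is large enough, so these estimates are applicable.

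The only real (and minor) obstacle is the parity bookkeeping: unlike the intersection block, the two exclusive blocks of size $\ell$ need not have the parity of $k$, so one cannot reduce each marginal event $\overline{E_{s,p}}$, $\overline{E_{t,p}}$ to "block sum is zero" before combining — it is the full $k$-term sum that is forced even, and only the combined event $\{W + U = 0, W + U' = 0\}$ can be analyzed cleanly by conditioning on $W$. The other slightly delicate point is obtaining the point-mass estimate $\binom{\ell}{\lfloor \ell/2 \rfloor} 2^{-\ell} \le \sqrt{2/(\pi\ell)}$ uniformly over both parities of $\ell$, which requires the one-line case split above. Beyond these, the argument is a direct transcription of the Gaussian proof, with \cref{rademacherlemma2nchoosen} in the role of \cref{smallballlemma}.
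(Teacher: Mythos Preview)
Your proof is correct and follows essentially the same approach as the paper: condition on the contribution of the intersection block and bound the two independent exclusive-block events by the maximal point mass $\binom{\ell}{\lfloor \ell/2\rfloor}2^{-\ell}$ of the symmetric binomial, then Taylor-expand $1/(1-\beta)$. You are in fact slightly more careful than the paper, which tacitly writes $\binom{k-d}{(k-d)/2}$ without addressing the case that $\ell=k-d$ is odd; your parity case-split via $(2n+1)^3 \le 8n(n+1)^2$ cleanly handles this.
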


\begin{proof}
Let $|x_s \cap x_t| = \beta k = d$, and observe that
\begin{align*}
    \Pr[\overline{E_{s, p}} \cap \overline{E_{t, p}}] &= \Pr[|x_s \cdot v^{(p)}| \leq 1, |x_t \cdot v^{(p)}| \leq 1]\\
    &= \Pr[x_s \cdot v^{(p)} = 0, x_t \cdot v^{(p)} = 0] \text{ [Since k is even]}
\end{align*}
If we fix $i$ to be the number of ones in $x_s \cap x_t$. Then the number of ones in $x_s - x_t$ will need to be exactly $\frac{k}{2} - i$ so that the total number of ones in $x_s = i + \frac{k}{2} - i = \frac{k}{2}$. Similarly there need to be exactly $\frac{k}{2} - i$ ones in $x_t - x_s$.  Hence,
\begin{align*}
    \Pr[\overline{E_{s, p}} \cap \overline{E_{t, p}}] &= \sum_{i = 0}^{d}{d \choose i}2^{-d} \times \left({k - d \choose \frac{k}{2} - i}2^{-(k - d)}\right)^{2} \\
    &\leq \frac{2}{\pi (k - d)}\sum_{i = 0}^{d}{d \choose i}2^{-d} \text{ [Using ${k - d \choose \frac{k}{2} - i} \leq {k - d \choose \frac{k - d}{2}} \leq \sqrt{\frac{2}{\pi k - d}}2^{k - d}$]}\\
    &= \frac{2}{\pi (k - d)}\\
    &= \frac{2}{\pi k (1 - \beta)} \\
    &= r^2(1 + \beta + O(\beta^2)). \text{ [Using a Taylor Series]}
\end{align*}
\end{proof}
For the analysis to show $A \leq 1.5\alpha$ under $m = O(k^{1.5})$ (instead of the original $m = O(\widetilde{k}^{1.5})$), all the technical details have the same idea behind them and the proof goes through. 

In the analysis of $B$ and $C$, the original choice $m = O(\frac{R}{\sqrt{\log{m}}}k^{1.5}\log{k})$ has the factor $\frac{R}{\sqrt{4\log{m}}}$ because of the $\big(n - 2, k - 1, \frac{\sqrt{4\log{m}}}{R}\big)$-balanced problem reduction. 
Since we are now using a different reduction, i.e.~$(n - 1, k - 1, 1)$-balanced, the analysis also simplifies the $m$ to be $O(k^{1.5}\log{k})$. But $m = O(k^{1.5})$ is strictly smaller and assumed in the analysis of $A$, so here we work under the common setting of $m = O(k^{1.5})$ for all three terms.

As for the analysis of $B$, one of the few things that differs in the proof is that now we need to show that $\frac{\partial f(\beta)}{\partial \beta} \ge 0$ for all $\beta > k^{-0.5}$ (instead of the original $\forall \beta > \widetilde{k}^{-0.5}$). Because it is possible that $k^{-0.5} \le \widetilde{k}^{-0.5}$, the correctness of this statement is non-trivial, and is described below. The remaining proof details are essentially unchanged.

\begin{lemma}
Given $k \le n^{\frac{2}{3} - \epsilon}$, we have for $\beta \ge k^{-0.5}$ that $\frac{\partial f}{\partial \beta}(\beta) \ge 0$, where $f(\cdot)$ is defined in \eqref{eq:f_def}.
\end{lemma}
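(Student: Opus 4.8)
The plan is to reduce the claim to an elementary inequality between powers of $n$ and $k$, exactly mirroring \cref{fincreasing} but with the simplifications that hold in the Rademacher/binary regime. Recall from \eqref{eq:f_deriv} that $\frac{\partial f}{\partial\beta}(\beta) = k\log\!\big(\frac{n-k}{k}\big(\frac{\beta}{1-\beta}\big)^{1+\epsilon}\big)$, so it suffices to show $\frac{n-k}{k}\big(\frac{\beta}{1-\beta}\big)^{1+\epsilon} \ge 1$ whenever $\beta \ge k^{-1/2}$. Since $\beta\mapsto\frac{\beta}{1-\beta}$ is increasing on $(0,1)$ and $\frac{\beta}{1-\beta}\ge\beta$, the left-hand side over the relevant range is minimized at $\beta = k^{-1/2}$, where $\big(\frac{\beta}{1-\beta}\big)^{1+\epsilon}\ge k^{-(1+\epsilon)/2}$. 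Hence it is enough to verify $\frac{n-k}{k}\ge k^{(1+\epsilon)/2}$, i.e. $n-k \ge k^{(3+\epsilon)/2}$.

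Next I would discharge this inequality using the sparsity hypothesis $k\le n^{\frac23-\epsilon}$. Raising both sides to the power $\frac{3+\epsilon}{2}$ gives $k^{(3+\epsilon)/2}\le n^{(2/3-\epsilon)(3+\epsilon)/2}$, and a one-line expansion of the exponent yields $(\frac23-\epsilon)\frac{3+\epsilon}{2} = 1 - \frac{7\epsilon}{6} - \frac{\epsilon^2}{2} < 1$. Thus $k^{(3+\epsilon)/2} = o(n)$, and since also $k\le n^{2/3-\epsilon} = o(n)$ (so $k\le n/2$ for $n$ past a constant), we get $k^{(3+\epsilon)/2}\le n/2\le n-k$ once $n$ exceeds a sufficiently large constant depending only on $\epsilon$. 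This gives $\frac{\partial f}{\partial\beta}(\beta)\ge 0$ for all $\beta\ge k^{-1/2}$, as required. (As throughout the paper, the residual regime where $k$ is below a fixed constant is handled trivially.)

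The subtlety flagged in the discussion preceding the lemma — that $k^{-1/2}$ could be smaller than the original cutoff $\widetilde k^{-1/2}$, so one cannot simply quote \cref{fincreasing} — is resolved by this direct route: in the Rademacher setting $\widetilde k$ collapses to $k$ and $d=1$, so the cutoff in the $A/B/C$ partition is precisely $\beta = k^{-1/2}$, and the monotonicity of $\beta\mapsto\frac{\beta}{1-\beta}$ reduces everything to that single worst case. Consequently the only real content is the exponent arithmetic, and the main (minor) obstacle is noticing that one must use the full strength of $k\le n^{2/3-\epsilon}$: a weaker sparsity bound would not keep $k^{(3+\epsilon)/2}$ below $n$, and the derivative could become negative for small $\beta$.
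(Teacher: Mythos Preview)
Your proof is correct and follows essentially the same route as the paper: both arguments observe that $\big(\tfrac{\beta}{1-\beta}\big)^{1+\epsilon}\ge k^{-(1+\epsilon)/2}$ for $\beta\ge k^{-1/2}$, and then use $k\le n^{2/3-\epsilon}$ to verify $\tfrac{n-k}{k}\ge k^{(1+\epsilon)/2}$ via an elementary exponent comparison (the paper phrases this as $2k^{(1+\epsilon)/2}\le n/k$, you as $k^{(3+\epsilon)/2}\le n-k$, which are equivalent). Your explicit acknowledgment that the final inequality holds only once $n$ exceeds a constant depending on $\epsilon$ is a fair point that the paper leaves implicit.
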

\begin{proof}
The condition $\beta \ge k^{-0.5}$ implies that
\begin{gather*}
    \frac{\beta}{1 - \beta} \ge \beta \ge k^{-0.5}\\
     \implies \left(\frac{\beta}{1 - \beta}\right)^{1 + \epsilon} \ge k^{-0.5(1 + \epsilon)}.
\end{gather*}
In addition, the assumption $k \le n^{\frac{2}{3} - \epsilon}$ implies that
\begin{align*}
    2k^{0.5(1 + \epsilon)} &\le 2n^{0.5(\frac{2}{3} - \epsilon)(1 + \epsilon)}\\
    &= 2n^{\frac{1}{3} - \frac{\epsilon}{6} + \frac{\epsilon^2}{2}}\\
    &\le n^{\frac{1}{3} + \epsilon},
\end{align*}
and hence,
\begin{align*}
    2k^{0.5(1 + \epsilon)} &\le \frac{n}{k} \text{ [Since $k \le n^{\frac{2}{3} - \epsilon}$]}\\
    \frac{2k}{n} &\le k^{-0.5(1 + \epsilon)}.
\end{align*}
Therefore,
\begin{align*}
    \left(\frac{\beta}{1 - \beta}\right)^{1 + \epsilon} &\ge \frac{2k}{n}\\
    &\ge \frac{k}{n - k}, 
\end{align*}
and re-arranging gives
\begin{gather*}
    \frac{n - k}{k} \left(\frac{\beta}{1 - \beta}\right)^{1 + \epsilon} \ge 1\\
    \implies k\log{\left(\frac{n - k}{k} \left(\frac{\beta}{1 - \beta}\right)^{1 + \epsilon}\right)} \ge 0.
\end{gather*}
From \eqref{eq:f_deriv}, the left-hand side equals $\frac{\partial f}{\partial \beta}(\beta)$, and the proof is complete.
\end{proof}
Lastly, for the analysis of $C$, the original proof goes through without any significant modification.

\medskip
{\bf Acknowledgment.} J.~Scarlett was funded by an NUS Early Career Research Award (grant R-252-000-A36-133).  A.~Bhattacharyya was supported in part by an MOE Tier II award and an Amazon faculty research award.


\balance
\bibliographystyle{IEEEtran}
\bibliography{biblio,JS_References}

\begin{thebibliography}{10}
\providecommand{\url}[1]{#1}
\csname url@samestyle\endcsname
\providecommand{\newblock}{\relax}
\providecommand{\bibinfo}[2]{#2}
\providecommand{\BIBentrySTDinterwordspacing}{\spaceskip=0pt\relax}
\providecommand{\BIBentryALTinterwordstretchfactor}{4}
\providecommand{\BIBentryALTinterwordspacing}{\spaceskip=\fontdimen2\font plus
\BIBentryALTinterwordstretchfactor\fontdimen3\font minus
  \fontdimen4\font\relax}
\providecommand{\BIBforeignlanguage}[2]{{%
\expandafter\ifx\csname l@#1\endcsname\relax
\typeout{** WARNING: IEEEtran.bst: No hyphenation pattern has been}%
\typeout{** loaded for the language `#1'. Using the pattern for}%
\typeout{** the default language instead.}%
\else
\language=\csname l@#1\endcsname
\fi
#2}}
\providecommand{\BIBdecl}{\relax}
\BIBdecl

\bibitem{Bou08}
P.~T. Boufounos and R.~G. Baraniuk, ``1-bit compressive sensing,'' in
  \emph{Conf. on Inf. Sci. and Sys.}, March 2008.

\bibitem{Has15}
T.~Hastie, R.~Tibshirani, and M.~Wainwright, \emph{Statistical learning with
  sparsity: The lasso and generalizations}.\hskip 1em plus 0.5em minus
  0.4em\relax CRC press, 2015.

\bibitem{Ati12}
G.~K. Atia and V.~Saligrama, ``Boolean compressed sensing and noisy group
  testing,'' \emph{IEEE Trans. Inf. Theory}, vol.~58, no.~3, pp. 1880--1901,
  March 2012.

\bibitem{Ber17}
D.~Bertsimas, J.~Pauphilet, and B.~Van~Parys, ``Sparse classification: A
  scalable discrete optimization perspective,'' \emph{Machine Learning}, vol.
  110, no.~11, pp. 3177--3209, 2021.

\bibitem{Ach17}
J.~Acharya, A.~Bhattacharyya, and P.~Kamath, ``Improved bounds for universal
  one-bit compressive sensing,'' in \emph{IEEE Int. Symp. Inf. Theory (ISIT)},
  2017.

\bibitem{Flo19}
L.~Flodin, V.~Gandikota, and A.~Mazumdar, ``Superset technique for approximate
  recovery in one-bit compressed sensing,'' in \emph{Conf. Neur. Inf. Proc.
  Sys. (NeurIPS)}, 2019.

\bibitem{Jac13}
L.~Jacques, J.~N. Laska, P.~T. Boufounos, and R.~G. Baraniuk, ``Robust 1-bit
  compressive sensing via binary stable embeddings of sparse vectors,''
  \emph{IEEE Trans. Inf. Theory}, vol.~59, no.~4, pp. 2082--2102, 2013.

\bibitem{Wai09}
M.~J. Wainwright, ``Information-theoretic limits on sparsity recovery in the
  high-dimensional and noisy setting,'' \emph{IEEE Trans. Inf. Theory},
  vol.~55, no.~12, pp. 5728--5741, Dec. 2009.

\bibitem{Sca15}
J.~Scarlett and V.~Cevher, ``Limits on support recovery with probabilistic
  models: An information-theoretic framework,'' \emph{IEEE Trans. Inf. Theory},
  vol.~63, no.~1, pp. 593--620, 2017.

\bibitem{GNR10}
A.~Gupta, R.~Nowak, and B.~Recht, ``Sample complexity for 1-bit compressed
  sensing and sparse classification,'' in \emph{IEEE Int. Symp. Inf.
  Theory}.\hskip 1em plus 0.5em minus 0.4em\relax IEEE, 2010, pp. 1553--1557.

\bibitem{Maz21}
A.~Mazumdar and S.~Pal, ``Support recovery in universal one-bit compressed
  sensing,'' 2021, \url{https://arxiv.org/abs/2107.09091}.

\bibitem{ABCO88}
N.~Alon, E.~E. Bergmann, D.~Coppersmith, and A.~M. Odlyzko, ``Balancing sets of
  vectors,'' \emph{IEEE Trans. Inf. Theory}, vol.~34, no.~1, pp. 128--130,
  1988.

\bibitem{calkin2008cuts}
N.~Calkin, K.~James, L.~Bowman, R.~Myers, E.~Riedl, and V.~Thomas, ``Cuts of
  the hypercube,'' \emph{Dept. of Mathematical Sciences, Clemson University},
  2008.

\bibitem{Cos09}
K.~P. Costello, ``Balancing {G}aussian vectors,'' \emph{Israel Journal of
  Mathematics}, vol. 172, no.~1, pp. 145--156, 2009.

\bibitem{YY21}
G.~Yehuda and A.~Yehudayoff, ``Slicing the hypercube is not easy,'' 2021,
  \url{https://arxiv.org/abs/2102.05536}.

\bibitem{Dec97}
D.~{de~Caen}, ``A lower bound on the probability of a union,'' \emph{Discrete
  mathematics}, vol. 169, no.~1, pp. 217--220, 1997.

\bibitem{plan2012robust}
Y.~Plan and R.~Vershynin, ``Robust 1-bit compressed sensing and sparse logistic
  regression: A convex programming approach,'' \emph{IEEE Trans. Inf. Theory},
  vol.~59, no.~1, pp. 482--494, 2012.

\bibitem{gopi2013one}
S.~Gopi, P.~Netrapalli, P.~Jain, and A.~Nori, ``One-bit compressed sensing:
  Provable support and vector recovery,'' in \emph{Int. Conf. Mach. Learn.
  (ICML)}.\hskip 1em plus 0.5em minus 0.4em\relax PMLR, 2013, pp. 154--162.

\bibitem{Gup10}
A.~Gupta, R.~Nowak, and B.~Recht, ``Sample complexity for 1-bit compressed
  sensing and sparse classification,'' in \emph{IEEE Int. Symp. Inf.
  Theory}.\hskip 1em plus 0.5em minus 0.4em\relax IEEE, 2010.

\bibitem{Du93}
D.~Du and F.~K. Hwang, \emph{Combinatorial group testing and its
  applications}.\hskip 1em plus 0.5em minus 0.4em\relax World Scientific, 2000,
  vol.~12.

\bibitem{Ald19}
M.~Aldridge, O.~Johnson, and J.~Scarlett, ``Group testing: An information
  theory perspective,'' \emph{Found. Trend. Comms. Inf. Theory}, vol.~15, no.
  3--4, pp. 196--392, 2019.

\bibitem{Sla15}
M.~Slawski and P.~Li, ``b-bit marginal regression,'' in \emph{Conf. Neur. Inf.
  Proc. Sys. (NeurIPS)}, vol.~28, 2015.

\end{thebibliography}

\end{document}